\tikzset{
  phantom/.style={draw=none,
    commutative diagrams/every label/.append style={/tikz/auto=false}
  },
}
\def\tikzcdset{\pgfqkeys{/tikz/commutative diagrams}}
\def\tikzcd@setanchor#1[#2]#3\relax{%
  \ifx\relax#2\relax\else%
    \tikzcdset{@#1transform/.append style={#2},@shiftabletopath}%
  \fi%
  \ifx\relax#3\relax%
    \pgfutil@namelet{tikzcd@#1anchor}{pgfutil@empty}%
  \else%
    \pgfutil@namedef{tikzcd@#1anchor}{.#3}%
  \fi}
        \def\tikz@tonodes{coordinate[pos=0,commutative diagrams/@starttransform/.try](tikzcd@nodea) %
                          coordinate[pos=1,commutative diagrams/@endtransform/.try](tikzcd@nodeb)}%
      \def\tikztostart{tikzcd@nodea}%
      \tikzset{insert path={(tikzcd@nodea)}}},
      \pgfgetlastxy{\tikzcd@x}{\tikzcd@y}%
      \tikzset{insert path={(tikzcd@nodea)}}}},
\newenvironment{tikzar}[1]{{}\kern-4pt\begin{tikzcd}[ampersand replacement=\&,#1]}%
{\end{tikzcd}\kern-4pt{}}
\newdimen\proofrulebreadth \proofrulebreadth=.05em
\newdimen\proofdotseparation \proofdotseparation=1.25ex
\newdimen\proofrulebaseline \proofrulebaseline=2ex
\let\then\relax
\def\hfi{\hskip0pt plus.0001fil}
\mathchardef\squigto="3A3B
\newif\ifinsideprooftree\insideprooftreefalse
\newif\ifonleftofproofrule\onleftofproofrulefalse
\newif\ifproofdots\proofdotsfalse
\newif\ifdoubleproof\doubleprooffalse
\let\wereinproofbit\relax
\newdimen\shortenproofleft
\newdimen\shortenproofright
\newdimen\proofbelowshift
\newbox\proofabove
\newbox\proofbelow
\newbox\proofrulename
\def\shiftproofbelow{\let\next\relax\afterassignment\setshiftproofbelow\dimen0 }
\def\shiftproofbelowneg{\def\next{\multiply\dimen0 by-1 }%
\afterassignment\setshiftproofbelow\dimen0 }
\def\setshiftproofbelow{\next\proofbelowshift=\dimen0 }
\def\setproofrulebreadth{\proofrulebreadth}
\def\prooftree{
%
\ifnum  \lastpenalty=1
\then   \unpenalty
\else   \onleftofproofrulefalse
\fi
%
\ifonleftofproofrule
\else   \ifinsideprooftree
        \then   \hskip.5em plus1fil
        \fi
\fi
%
\bgroup
\setbox\proofbelow=\hbox{}\setbox\proofrulename=\hbox{}%
\let\justifies\proofover\let\leadsto\proofoverdots\let\Justifies\proofoverdbl
\let\using\proofusing\let\[\prooftree
\ifinsideprooftree\let\]\endprooftree\fi
\proofdotsfalse\doubleprooffalse
\let\thickness\setproofrulebreadth
\let\shiftright\shiftproofbelow \let\shift\shiftproofbelow
\let\shiftleft\shiftproofbelowneg
\let\ifwasinsideprooftree\ifinsideprooftree
\insideprooftreetrue
%
\setbox\proofabove=\hbox\bgroup$\displaystyle 
\let\wereinproofbit\prooftree
%
\shortenproofleft=0pt \shortenproofright=0pt \proofbelowshift=0pt
%
\onleftofproofruletrue\penalty1
}
\def\eproofbit{
%
\ifx    \wereinproofbit\prooftree
\then   \ifcase \lastpenalty
        \then   \shortenproofright=0pt  
        \or     \unpenalty\hfil         
        \or     \unpenalty\unskip       
        \else   \shortenproofright=0pt  
        \fi
\fi
%
\global\dimen0=\shortenproofleft
\global\dimen1=\shortenproofright
\global\dimen2=\proofrulebreadth
\global\dimen3=\proofbelowshift
\global\dimen4=\proofdotseparation
\global\count255=\proofdotnumber
%
$\egroup  
%
\shortenproofleft=\dimen0
\shortenproofright=\dimen1
\proofrulebreadth=\dimen2
\proofbelowshift=\dimen3
\proofdotseparation=\dimen4
\proofdotnumber=\count255
}
\def\proofover{
\eproofbit 
\setbox\proofbelow=\hbox\bgroup 
\let\wereinproofbit\proofover
$\displaystyle
}%
\def\proofoverdbl{
\eproofbit 
\doubleprooftrue
\setbox\proofbelow=\hbox\bgroup 
\let\wereinproofbit\proofoverdbl
$\displaystyle
}%
\def\proofoverdots{
\eproofbit 
\proofdotstrue
\setbox\proofbelow=\hbox\bgroup 
\let\wereinproofbit\proofoverdots
$\displaystyle
}%
\def\proofusing{
\eproofbit 
\setbox\proofrulename=\hbox\bgroup 
\let\wereinproofbit\proofusing
\kern0.3em$
}
\def\endprooftree{
\eproofbit 
  \dimen5 =0pt
%
\dimen0=\wd\proofabove \advance\dimen0-\shortenproofleft
\advance\dimen0-\shortenproofright
%
\dimen1=.5\dimen0 \advance\dimen1-.5\wd\proofbelow
\dimen4=\dimen1
\advance\dimen1\proofbelowshift \advance\dimen4-\proofbelowshift
%
\ifdim  \dimen1<0pt
\then   \advance\shortenproofleft\dimen1
        \advance\dimen0-\dimen1
        \dimen1=0pt
        \ifdim  \shortenproofleft<0pt
        \then   \setbox\proofabove=\hbox{%
                        \kern-\shortenproofleft\unhbox\proofabove}%
                \shortenproofleft=0pt
        \fi
\fi
%
\ifdim  \dimen4<0pt
\then   \advance\shortenproofright\dimen4
        \advance\dimen0-\dimen4
        \dimen4=0pt
\fi
%
\ifdim  \shortenproofright<\wd\proofrulename
\then   \shortenproofright=\wd\proofrulename
\fi
%
\dimen2=\shortenproofleft \advance\dimen2 by\dimen1
\dimen3=\shortenproofright\advance\dimen3 by\dimen4
%
\ifproofdots
\then
        \dimen6=\shortenproofleft \advance\dimen6 .5\dimen0
        \setbox1=\vbox to\proofdotseparation{\vss\hbox{$\cdot$}\vss}%
        \setbox0=\hbox{%
                \advance\dimen6-.5\wd1
                \kern\dimen6
                $\vcenter to\proofdotnumber\proofdotseparation
                        {\leaders\box1\vfill}$%
                \unhbox\proofrulename}%
\else   \dimen6=\fontdimen22\the\textfont2 
        \dimen7=\dimen6
        \advance\dimen6by.5\proofrulebreadth
        \advance\dimen7by-.5\proofrulebreadth
        \setbox0=\hbox{%
                \kern\shortenproofleft
                \ifdoubleproof
                \then   \hbox to\dimen0{%
                        $\mathsurround0pt\mathord=\mkern-6mu%
                        \cleaders\hbox{$\mkern-2mu=\mkern-2mu$}\hfill
                        \mkern-6mu\mathord=$}%
                \else   \vrule height\dimen6 depth-\dimen7 width\dimen0
                \fi
                \unhbox\proofrulename}%
        \ht0=\dimen6 \dp0=-\dimen7
\fi
%
\let\doll\relax
\ifwasinsideprooftree
\then   \let\VBOX\vbox
\else   \ifmmode\else$\let\doll=$\fi
        \let\VBOX\vcenter
\fi
\VBOX   {\baselineskip\proofrulebaseline \lineskip.2ex
        \expandafter\lineskiplimit\ifproofdots0ex\else-0.6ex\fi
        \hbox   spread\dimen5   {\hfi\unhbox\proofabove\hfi}%
        \hbox{\box0}%
        \hbox   {\kern\dimen2 \box\proofbelow}}\doll%
%
\global\dimen2=\dimen2
\global\dimen3=\dimen3
\egroup 
\ifonleftofproofrule
\then   \shortenproofleft=\dimen2
\fi
\shortenproofright=\dimen3
%
\onleftofproofrulefalse
\ifinsideprooftree
\then   \hskip.5em plus 1fil \penalty2
\fi
}
\renewcommand{\to}{\xrightarrow{}}
\newcommand{\tto}[1]{\xrightarrow{#1}}
\newcommand{\oot}[1]{\xleftarrow{#1}}
\newcommand{\epi}{\twoheadrightarrow}
\newcommand{\mmono}[1]{\stackrel{#1}\rightarrowtail}
\newcommand{\eepi}[1]{\stackrel{#1}\twoheadrightarrow}
\newcommand{\iinclusion}[1]{\stackrel{#1}\hookrightarrow}
\newcommand{\pfn}{\rightharpoonup}
\newcommand{\ppfn}[1]{\stackrel{#1}\rightharpoonup}
\newcommand{\xxp}[2]{\left[{#1},{#2}\right]}
\newcommand{\xxpp}[2]{\xxp{{#1}^+}{#2}}
\newcommand{\tleft}{\triangleleft}
\newcommand{\tright}{\triangleright}
\newcommand{\tstall}{{\scriptstyle \Box}}
\newcommand{\Set}{\mathsf{Set}}
\newcommand{\Pfn}{{\sf Pfn}}
\newcommand{\Rel}{{\sf Rel}}
\newcommand{\id}{{\rm id}}
\newcommand{\blank}{\sqcup}
\newcommand{\cmn}{\Delta}
\newcommand{\cun}{\top}
\newcommand{\cata}[1]{\llparenthesis {#1} \rrparenthesis}
\newcommand{\ana}[1]{\left\llbracket {#1} \right\rrbracket}
\newcommand{\run}[1]{\left\{ {#1} \right\}}
\newcommand{\Run}[1]{\left\{\!\lvert {#1} \rvert\!\right\}}
\newcommand{\RRun}{\left\{\!\lvert \, \rvert\!\right\}}
\newcommand{\UK}{\{ \}}
\newcommand{\GK}{\gamma}
\newcommand{\UKK}[2]{\left\{{#1}\right\}{#2}}
\newcommand{\SK}{[\, ]}
\newcommand{\SKK}[2]{\left[{#1}, {#2}\right]}
\newcommand{\enc}{e}
\newcommand{\dec}{d}
\newcommand{\PP}{\PPp}
\newcommand{\WW}{{\widetilde{\Sigma}}}
\newcommand{\tu}{{\rho}}
\newcommand{\tuu}{\widetilde{\tu}}
\newcommand{\tuuu}{\overline{\tu}}
\newcommand{\tum}{{p}}
\newcommand{\tumm}{\widetilde{\tum}}
\newcommand{\tummm}{\overline \tum}
\newcommand{\Turmm}{\widetilde P}
\newcommand{\Turmmm}{\overline P}
\newcommand{\Turm}{{\QQq}}
\newcommand{\tim}{t}
\newcommand{\timm}{\widetilde \tim}
\newcommand{\timmm}{\overline \tim}
\newcommand{\Timm}{\widetilde T}
\newcommand{\Timmm}{\overline T}
\newcommand{\spac}{s}
\newcommand{\spacc}{\widetilde \spac}
\newcommand{\spaccc}{\overline \spac}
\newcommand{\Spacc}{\widetilde S}
\newcommand{\Spaccc}{\overline S}
\newcommand{\RR}{\SSs}
\newcommand{\CC}{{\sf C}}
\newcommand{\doublone}{\mathbb{1}}
\newcommand{\Nn}{\PP}
\newcommand{\CCC}{{\cal C}}
\newcommand{\OOO}{{\cal O}}
\newcommand{\SSS}{{\cal S}}
\newcommand{\TTT}{{\cal T}}
\renewcommand{\Bbb}{\mathbb}
\newcommand{\CCc}{{\Bbb C}}
\newcommand{\NNn}{{\Bbb N}}
\newcommand{\PPp}{{\Bbb P}}
\newcommand{\QQq}{{\Bbb Q}}
\newcommand{\SSs}{{\Bbb S}}
\newcommand{\ZZz}{{\Bbb Z}}
\newcommand{\OOne}{\mathbbm{1}}
\mathchardef\gt="313E 
\mathchardef\lt="313C 
\newcommand{\beq}{\begin{equation}}
\newcommand{\eeq}{\end{equation}}
\newcommand{\ba}[1]{\begin{array}{#1}}
\newcommand{\ea}{\end{array}}
\newcommand{\bea}{\begin{eqnarray}}
\newcommand{\eea}{\end{eqnarray}}
\newcommand{\bear}{\begin{eqnarray*}}
\newcommand{\eear}{\end{eqnarray*}}
\theoremstyle{plain}
\newtheorem{theorem}{Theorem}[section]
\newtheorem*{proposition*}{Proposition}
\newtheorem{proposition}[theorem]{Proposition}
\newtheorem{definition}[theorem]{Definition}
\newtheorem{corollary}[theorem]{Corollary}
\theoremstyle{remark}
\newtheorem*{remark}{Remark}
\theoremstyle{remark}
\newtheorem*{notation}{Notation}
\theoremstyle{remark}
\theoremstyle{remark}
\newcommand{\tot}[1]{{#1}^\bullet}
\newcommand{\comp}{\, ;}
\newcommand{\supp}{{\sf supp}}
\newcommand{\halts}{\!\downarrow}
\newcommand{\TM}{\TTT}
\newcommand{\iif}{\mbox{\it if}}
\newcommand{\true}{\mathtt{t}}
\newcommand{\false}{\mathtt{f}}
\newenvironment{prf}[1]{\begin{trivlist} \item[{\bf ~Proof #1.}]}%
{\qed\end{trivlist}}
\newcommand{\bpr}{\begin{prf}{\!\!}}
\newcommand{\epr}{\end{prf}}
\newcommand{\bprf}[1]{\begin{prf}{#1}}
\newcommand{\eprf}{\end{prf}}
\begin{document}

\title{Monoidal computer III: \\
	A coalgebraic view of \\ computability and complexity\thanks{Partially supported by AFOSR and NSF.}}

\author{Dusko Pavlovic
	\hspace{5em} Muzamil Yahia \\ University of Hawaii, Honolulu, USA\\
  \texttt{\{dusko,muzamil\}@hawaii.edu}}

\date{}
\maketitle

\begin{abstract}
Monoidal computer is a categorical model of \emph{intensional}\/ computation, where many different programs correspond to the same input-output behavior. The upshot of yet another model of computation is that a categorical formalism should provide a high-level language for theory of computation, flexible enough to allow abstracting away the low level implementation details when they are irrelevant, or taking them into account when they are genuinely needed. A salient feature of the approach through monoidal categories is the formal graphical language of \emph{string diagrams}, which supports geometric reasoning about programs and computations.  In the present paper, we provide a coalgebraic characterization of monoidal computer. It turns out that the availability of interpreters and specializers, that make a monoidal category into a monoidal computer, is equivalent with the existence of a \emph{universal state space}, that carries a weakly final state machine for all types of input and output. Being able to program state machines in monoidal computers allows us to represent Turing machines, and capture the time and space needed for their executions. The coalgebraic view of monoidal computer thus provides a convenient  diagrammatic language for studying not only computability, but also complexity.
\end{abstract}

\bigskip
\section{Introduction}

In theory of computation, an \emph{extensional}\/ model reduces computations to their set theoretic extensions, \emph{computable functions}, whereas an \emph{intensional}\/ model also takes into account the multiple \emph{programs}\/ that describe each computable function \cite[II.3]{BarendregtH:book,Odifreddi:one}. 

In computer science, this semantical gamut got refined on the extensional side by \emph{denotational}\/ models, that take into account not just computable functions but also some computational effects, and  on the intensional side by \emph{operational}\/ models, where the meaning of a program is specified up to an operational equivalence \cite{Montanari:book,
WinskelG:book}. Categorical semantics of computation arose from the realization that \emph{cartesian closed}\/ categories provide a simple and effective framework for studying the extensional models \cite{Lambek-Scott:book}. Both denotational and operational semantics naturally developed as extensions of this categorical framework \cite{MoggiE:monads,Plotkin-Turi:LICS97}. 

The goal of the monoidal computer project is to provide categorical semantics of intensional computation. This turns out to be surprisingly simple technically, but subtle conceptually.  
In this section, we describe the structure of monoidal computer informally, and try to explain it in the context of categorical semantics. In the rest of the paper, we spell out some of its features formally, in particular the \emph{coalgebraic}\/ part. 

\subsection{Categorical computability: context and concept}
The step from a cartesian closed category $\CCC$, as an extensional model of computation, to a monoidal computer $\CCc$, as an intensional model, can be summarized as follows:
\bea\label{eq:ccc-moncom}
\prooftree
\begin{tikzar}{column sep=2em}
\CCC\left(X, \xxp A B\right) \arrow[bend left = 7]{r}{\varepsilon^{AB}_{X}}[swap]{\cong} \& \CCC(X\times A, B) \arrow[bend left = 7]{l}{\lambda^{AB}_{X}}
\end{tikzar}
\justifies
\begin{tikzar}{column sep=2em}
\tot \CCc (X, \PP) \arrow[twoheadrightarrow]{r}{\gamma^{AB}_X} \& \CCc (X\otimes A, B)
\end{tikzar}
\endprooftree
\eea
The first line says that a category $\CCC$ is cartesian closed when it has the (cartesian) products $X\times A$ and a family of bijections, natural in $X$ and indexed over the types $A$ and $B$, between the morphisms $X\times A \to B$ and $X\to \xxp A B$. If a morphism $X\times A \tto f B$ is thought of as an $X$-indexed family of computations with the inputs from $A$ and the outputs in $B$, then the corresponding morphism $X\tto{\lambda^{AB}_X(f)} \xxp A B$ can be thought of as the $X$-indexed family of programs for these computations.  
 This structure is the categorical version of the simply typed extensional lambda calculus: $\lambda_X^{AB}$ corresponds to the operation of \emph{abstraction}, whereas $\varepsilon_X^{AB}$ corresponds to the \emph{application} \cite[Part I]{Lambek-Scott:book}. The equation $\varepsilon_X^{AB} \circ \lambda_X^{AB} = \id$ says that if we \emph{abstract}\/ a computation into a program, and then \emph{apply}\/ that program to some data, then we will get the same result as if we executed the original computation on the data. This is the $\beta$-rule of the lambda calculus, the crux of Alonzo Church's representation of program evaluations as function applications of $\lambda$-abstractions \cite{ChurchA:unsolvable}. The equation $\lambda_X^{AB} \circ \varepsilon_X^{AB} = \id$ says that if we apply a program, and then abstract out of the resulting computation a program, then we will get the same program that we started from. This is the $\eta$-rule of the lambda calculus: the extensionality. Dropping the second equation thus corresponds to modeling the \emph{non-extensional}\/ typed lambda calculus, with \emph{weak}\/ exponent types. While this structure was sometimes interpreted as a model of intensional computation, and  interesting results were obtained \cite{HoofmanR:intensional}, the main result was that every such non-extensional model is \emph{essentially extensional}, in the sense that it contains an extensional model as a computable retract \cite{HayashiS:semifunctors}.
In genuinely intensional models, identifying extensionally equivalent programs is not computable. 

 
The structure of a monoidal computer $\CCc$ is displayed in the second line of \eqref{eq:ccc-moncom}. There are \textbf{three changes} with respect to the cartesian closed structure: 
\begin{enumerate}[a)]
\item the bijections $\varepsilon^{AB}_X$ are relaxed to surjections $\gamma^{AB}_X$; 
\item the exponents $\xxp A B$ are replaced with the  type $\PP$ of \emph{programs}, the same for all types $A$ and $B$; and 
\item the product $\times$ is replaced by a tensor $\otimes$, and $\CCc$ is not a cartesian category, \emph{but}\/ $\tot \CCc$ on the left is its largest cartesian subcategory with $\otimes$ as the product.
\end{enumerate}

%
%
%
%
\textbf{Change (a)} means that we have not only dropped the extensionality equation $\lambda_X^{AB} \circ \varepsilon_X^{AB} = \id$, but eliminated the abstraction operation $\lambda_X^{AB}$ altogether. All that is left of the bijection between the abstractions and the applications, displayed in the first line of \eqref{eq:ccc-moncom}, is a surjection from programs to computations, displayed in the second line of \eqref{eq:ccc-moncom}: for every $X$-indexed family of computations $X\otimes A \tto f B$ there is an $X$-indexed family of programs $X\tto F \PP$ such that $f = \gamma_X^{AB}(F)$.  Could we get away with less? No, because the program evaluation $\gamma_X^{AB}$ has a left inverse $\lambda_X^{AB}$ if and only if the model is essentially extensional (i.e., it contains an extensional retract). We will see in Sec.~\ref{sec:defmoncom} that the program evaluation $\gamma_X^{AB}$ is in fact executed by a universal evaluator $\UK^{AB} \in \CCc(\PP\otimes A, B)$, and thus takes the form $\gamma_X^{AB}(F) = \run F^{AB} = f$. 

\textbf{Change (b)} means that all programs are of the same type $\PP$. The central feature of intensional computation is that any program can be applied to any data, and in particular to itself. The main constructions of computability theory depend on this, as we shall see in Sec.~\ref{Sec-fund}. If computations of type $A\to  B$ were encoded by programs of a type depending on $A$ and $B$, let us write it in the form $\lceil A,B\rceil$, then such programs could not be applied to themselves, but they could only be processed by programs typed in the form $\lceil \lceil A, B\rceil, C\rceil$. That is why all programs must be of the same type $\PP$. We will see in Sec.~\ref{Sec-retracts} that this implies that all types must be retracts of $\PP$. This does not imply that the type structure of a monoidal computer can be completely derived from an applicative structure on $\PP$, as an essentially untyped model of computation \cite[I.15-I.17]{Lambek-Scott:book}. The type structure of monoidal computer, can be derived from internal structure of $\PP$ if and only if the model is essentially extensional (i.e., it contains an extensional retract, like before). But where does the monoidal structure come from?

\textbf{Change (c)} makes monoidal computers into monoidal categories, not cartesian. Just like cartesian categories, monoidal computers have the diagonals and the projections for all types, which are necessary for data copying and deleting, as explained in Sec.~\ref{Sec-two}. Unlike in cartesian categories, though, the diagonals and the projections in monoidal computers are \emph{not natural}. The projections are not natural because intensional computations may not terminate: they are not \emph{total}\/ morphisms. The diagonals are not natural when the computations are not deterministic: they are then not \emph{single-valued}\/ as morphisms. While intensional computations can be deterministic, and the diagonals in a monoidal computer can all be natural, if all projections are natural, i.e. if all computations are total, then the model contains an extensional retract. A monoidal computer is thus a cartesian category if and only if it is essentially extensional. That is why a genuinely intensional monoidal computer must be genuinely monoidal. On the other hand, even a computation that is nowhere defined has a program, and programs are always well-defined values. So while the indexed families of intensional computations cannot all be total functions, the corresponding indexed families of programs must all be total functions. That is why the category $\tot \CCc$ on the left in \eqref{eq:ccc-moncom} is different from $\CCc$: it is the largest subcategory of $\CCc$ for which $\otimes$ is the cartesian product.  

In  \textbf{summary}, dropping or weakening any of changes described in (a-c) leads to the same outcome: an essentially extensional model. For a genuinely intensional model it is thus \emph{necessary}\/ to have (c) a genuinely monoidal structure, (b) untyped programs, and (a) no computable program abstraction operators. It was shown in \cite{PavlovicD:IC12,PavlovicD:MonCom} that this is also \emph{sufficient}\/ for a categorical reconstruction of the basic concepts of computability. Sections \ref{Sec-two} and \ref{Sec-three} provide a brief overview of this. But our main concern in this paper is \emph{complexity}.

\subsection{Categorical complexity: Coalgebraic view}
To capture complexity, we must capture dynamics, i.e. access the actual process of computation. This, of course, varies from model to model, and different models of computation induce different notions of complexity. Abstract complexity \cite{BlumM:axioms} provides, in a sense, a model-independent common denominator, which can be viewed as an abstract notion of complexity; but the categorical view of computations as morphisms at the first sight does not even provide a foothold for abstract complexity. We attempted to mitigate the problem by extending the structure of monoidal computer by \emph{grading} \cite{PavlovicD:MonCom2}, but the approach turned out to be impractical for our goals (indicated in the next section). Now it turns out to also be unnecessary, since dynamics of computation can be captured using the \emph{coalgebraic}\/ tools available in any monoidal computer.

Coalgebra is the categorical toolkit for studying dynamics in general \cite{PavlovicD:LICS98,
RuttenJ:universal}, and dynamics of computation in particular \cite{KlinB:TCS,
PavlovicD:AMAST06,Plotkin-Turi:LICS97}. Coalgebras, as morphisms in the form $X\to EX$ for an endofuctor $E$,  provide a categorical view of automata, state machines, and processes with state update \cite{JacobsB:book-coalg,PavlovicD:MSCS15}; the other way around, all coalgebras can be thought of as processes with state update. In the framework on this paper, only a very special class of coalgebras will be considered, as the morphisms in the form $X\times A \to X\times B$, corresponding to what is usually called \emph{Mealy machines} \cite[\ldots]{bonsague08,hvid06,holcombe}. In the presence of the exponents, such morphisms can be transposed to proper coalgebras in the form $X\to \xxp A {X\times B}$. But coalgebra provides a categorical reconstruction of state machines even without the exponents, since the homomorphisms remain the same, and the category of machines is isomorphic to a category of coalgebras even if the objects are not presented as coalgebras in the strict sense. Our "coalgebras" will thus be in the form $X\times A \to X\times B$, or more generally $X\otimes A \to X\otimes B$.

The crucial step in moving the monoidal computer story into the realm of coalgebra is to replace the $X$-indexed functions $X\times A \tto f B$ with $X$-state machines $X\times A \tto m X\times B$. While a function $f$ mapped for each index $x$ an input $a$ to an output $b$, a machine $m$ now maps at each state $x$ an input $a$ to an output $b$, \emph{and updates the state to $x'$}. This state update provides an abstract view of dynamics. Continuous dynamics can be captured in varying the same approach \cite{PavlovicD:MSCS15,PavlovicD:LICS98}. This step from $X$-indexed functions to $X$-state machines is displayed in the first row of Table~\ref{table}.  
\begin{table}[htp]
\begin{center}
\begin{tabular}{|c||c|c|}
\hline
models & static & dynamic \\
\hline 
\hline
\begin{minipage}[b]{1.5cm}
\centering
extensional models:\ \\[1ex]
cartesian closed\\[1.3em]
\hspace{3em}
\end{minipage} & 
\begin{minipage}[b]{4.3cm}
\hspace{1em} $\begin{tikzar}{row sep=2em,column sep=-1em}
\xxp A B \times A \ar{rr}{\varepsilon}   \& \&\ \    B\ \ \ \    \\ 
 \& X \times A \ar{ur}[swap]{\forall f} \ar[dashed]{ul}{\exists !\,  \lambda f \times A} 
 \end{tikzar}
$\\[1ex]
\small abstractions $\underset{\lambda}{\stackrel{\varepsilon}\longleftrightarrow}$
 applications\\[-2ex]
\hspace{2em}
\end{minipage}
& 
\begin{minipage}[b]{4.3cm}
\centering
$\begin{tikzar}{row sep=1.8em,column sep=-2.3em}
\& 
{[{A^{+}},B]}\times B\\   
{[{A^{+}},B]}\times A \ar{ur}{\xi}   \& \& \ \ \  X\times B\ \ \   \ar[dashed]{ul}[swap]{
\ana m \times B} 
 \\ 
 \& \ \ \   X \times A\ \ \   \ar{ur}[swap]{\forall m} \ar[dashed]{ul}{
\exists !\, \ana m \times A} \end{tikzar}
$\\[1ex]
\hspace{.5em} \small behaviors $\oot{\ana -}$ machines\\[-2ex]
\hspace{2em}
\end{minipage}
\\
\hline
\begin{minipage}[b]{1.8cm}
\centering
intensional models:\ \\[1ex]
monoidal computers\ \\[2em]
\hspace{3em}
\end{minipage} & 
\begin{minipage}[b]{4cm}
\centering
$\begin{tikzar}{row sep=2em,column sep=-.5em}
\PP \otimes A \ar{rr}{\UK}   \& \&   B\ \     \\ 
 \& X \otimes A \ar[name=U]{ur}[swap]{\forall f} \ar[dashed,name=G,below]{ul}{\exists F \times A} 
 \end{tikzar}
$
\\[1ex]
\small programs $\to\!\!\!\!\to$ computations\\[-2ex]
\hspace{2em}
\end{minipage}
& 
\begin{minipage}[b]{4.7cm}
\centering
$\begin{tikzar}{row sep=2em,column sep=0em}
\& \PP \otimes B\\   
\PP\otimes A \ar{ur}{\RRun}   \& \&  X\otimes B \ar[dashed]{ul}[swap]{\exists M \otimes B} 
 \\ 
 \& X \otimes A \ar{ur}[swap]{\forall m} \ar[dashed]{ul}{\exists M \otimes A} \end{tikzar}
$\\[1ex]
\small adaptive programs $\pfn \!\!\!\!\pfn$ processes\\[-2ex]
\hspace{2em}
\end{minipage}
\\
\hline
\end{tabular}
\end{center}
\caption{Semantic directions}
\label{table}
\end{table}%
The representation of functions from $A$ to $B$ by the elements of $\xxp A B$ lifts to the representation of machines with inputs in $A$ and outputs in $B$ by the induced behaviors in $\xxpp A B$, where $A^+$ is the inductive type of the nonempty sequences from $A$. Behaviors are thus construed as \emph{functions extended in time} \cite{
JacobsB:book-coalg,
PavlovicD:MonCom,RuttenJ:universal}. 
In the presence of list constructors, the representation of functions using the exponents $\xxp A B$ induces the representation of machines using the final machines $\xxpp A B$. The other way around, the final machines induce the exponents as soon as the idempotents split. This is proved in the Appendix.

The rows of Table~\ref{table} depict the step from static models to dynamic models. The columns depict the step from the extensional to the intensional.  The left-hand column is just a different depiction of  \eqref{eq:ccc-moncom}: the upper triangle unpacks the bijection in the first line of \eqref{eq:ccc-moncom}, whereas the lower triangle unpacks the surjection in the second line. The right-hand column is the step from the extensional coinduction of final state machines to the intensional coinduction in monoidal computer. The bottom row of Table~\ref{table} is the step from the monoidal computer structure presented in terms of universal evaluators, the content of Sec.~\ref{Sec-three}, to the monoidal computer structure presented in terms of \emph{universal processes}, the content of Sec.~\ref{Sec-four}. The fact that the two presentations are equivalent is stated in Thm.~\ref{thm:moncom-coalg}. This coalgebraic view of intensional computation opens an alley towards capturing dynamics of Turing machines in Sec.~\ref{Sec-five}, and a direct internalization of time and space complexity measures in Sec.~\ref{Sec-complexity}. A comment about the role of coalgebra in this effort is in Sec.~\ref{Sec-conclusion}. A general approach through abstract complexity is provided in the full version of the paper.  

\subsection{Background and related work}
While computability and complexity theorists seldom felt a need to learn about categories, there is a rich tradition of categorical research in computability theory, starting from one of the founders of category theory and his students \cite{eilenberg-elgot,heller-dipaola}, through extensive categorical investigations of realizability \cite{HofstraW13,HylandJME:efft,OostenJ:book
}, to the recent work on Turing categories \cite{Cockett-Hofstra}, and on a monoidal structure of Turing machines \cite{BarthaM:monoidal-TM}. A categorical account of time complexity was proposed in \cite{CockettR:complex}, using a special structure called \emph{timed sets}, introduced for the purpose. While our approach in \cite{PavlovicD:MonCom2} used grading in a similar way, our current approach seems closer in spirit to \cite{AspertiA:rice}, even if that work is neither coalgebraic nor explicitly categorical. Our effort originated from a need for a framework for reasoning about logical depth of cryptographic protocols and algorithms \cite{PavlovicD:NSPW11
}. The scope of the project vastly exceeded the original cost estimates \cite{PavlovicD:Lambek14}, but also the original benefit expectations. The unexpectedly simple diagrammatic formalism of monoidal computer has even been used as a slick teaching tool in several courses.\footnote{The course materials are available from \texttt{http://www.asecolab.org/courses/222/}, and the textbook \cite{PavlovicD:MonCom} is in preparation.}

\bigskip
\section{
Preliminaries}\label{Sec-two}
A monoidal computer is a \emph{symmetric monoidal category}\/ with some additional structure. As a matter of convenience, and with no loss of generality, we assume that it is a \emph{strict}\/ monoidal category. The reader familiar with these concepts may wish to skip to the next section. For the casual reader unfamiliar with these concepts, we attempt to provide enough intuitions to understand the presented ideas. The reader interested to learn more about monoidal categories should consult one of many textbooks, e.g. \cite[VII.1,XI]{MacLaneS:CWM}. 

\subsection{Monoidal categories}
Intuitively, a monoidal category is a category $\CCc$ together with a functorial monoid structure 
$\CCc\times \CCc \tto \otimes \CCc \oot I \doublone$. When $\CCc$ is a monoidal \emph{computer}, then we think of its objects $A, B, \ldots \in |\CCc|$ as datatypes, and of its morphisms, $f, g,\ldots\in \CCc(A,B)$ as computations. The tensor product $A\otimes P\tto{f \otimes t} B\otimes Q$ then captures the parallel composition of the computations $A\tto f B$ and $P\tto t Q$, whereas the categorical composition $A\tto{ f \comp g} C$  is the sequential composition of $A\tto f B$ and $B\tto g C$. 

With no loss of generality, we assume that tensors are \emph{strictly}\/ associative and unitary, and thus treat the objects $A\otimes (B\otimes C)$ and $(A\otimes B)\otimes C$ as the same, and do not distinguish $A\otimes I$ and $I\otimes A$ from $A$. This allows us to elide many parentheses and natural coherences \cite[Sec.~VII.2]{Joyal-Street:geometry,MacLaneS:CWM}. Note, however, that the isomorphisms $A\otimes B \tto \varsigma B\otimes A$ cannot be eliminated without causing a degeneracy. 

\begin{notation} When no confusion seems likely, we write 
\begin{itemize}
\item $AB$ instead of $A\otimes B$
\item $\CCc(X)$ instead of $\CCc(I,X)$
\end{itemize}
We omit the typing superscripts whenever the types are clear from the context. 
\end{notation}

\medskip
\noindent{\bf String diagrams.}  
A salient feature of monoidal categories is that the algebraic laws of the monoidal structure correspond precisely and conveniently to the geometric laws of \emph{string diagrams}, formalized in \cite{Joyal-Street:geometry}, but going back to \cite{Penrose}. See also \cite{selinger2011survey} for a survey. 
A string diagram usually consists of polygons or ovals linked by strings. In a monoidal computer, the polygons represent computations, whereas the strings represent data types, or the channels through which the data of the corresponding types flow. String diagrams thus display the data flows through composite computations. The reason why string diagrams are convenient for this is that the two program operations that usually generate data flows, the sequential composition $f \comp g$ and the parallel composition $f\otimes t$, precisely correspond to the two geometric operations that generate string diagrams: one is the operation of connecting the polygons $A\tto f B$ and $B\tto g C$ by the string $B$, whereas the other one puts the polygons $A\tto f B$ and $P\tto t Q$ next to each other without connecting them. 
The associativity of these geometric operations then imposes the associativity law on the corresponding operations on computations. The identity morphism $\id_A$, as the unit of the sequential composition, can be viewed as the channel of type $A$, and can thus be presented as the string $A$ itself, or as an "invisible polygon" freely moved along the string $A$. The unit type $I$ can be similarly presented as an "invisible string", freely added and removed to string diagrams. The algebraic laws of the monoidal structure are thus captured by the geometric properties of the string diagrams. The string crossings correspond to the symmetries $A\otimes B \tto \varsigma  B\otimes A$. 

\subsection{Data services} 
We call \emph{data service}\/ the monoidal structure that allows passing the data around in a monoidal category. In computer programs and in mathematical formulas, the data are usually passed around using variables. They allow copying and propagating the data values where they are needed, or deleting them when they are not needed. The basic features of a variable are thus that it can be freely copied or deleted. The basic data services over a type $A$ in a monoidal category $\CCc$ are 
\begin{itemize}
\item the \emph{copying}\/ operation $A\tto \cmn A\otimes A$, and 
\item the \emph{deleting}\/ operation $A\tto \cun I$, 
\end{itemize}
which together form a \emph{commutative comonoid}, i.e. 
satisfy the equations

\smallskip
\bigskip
\newcommand{\eqone}{$\mbox{\footnotesize ${\cmn}\comp{(\cmn \otimes A)}  =  {\cmn}\comp{(A\otimes \cmn)}$}$}
\newcommand{\eqtwo}{$\mbox{\footnotesize 
$\cmn\comp (\cun \otimes A) =  \cmn \comp (A\otimes \cun)=  \id_A$}$}
\newcommand{\eqthree}{$\mbox{\footnotesize $\cmn\comp\, \sigma\  \ =\  \ {\cmn}$}$}
\begin{center}
\def\JPicScale{.65}
\ifx\JPicScale\undefined\def\JPicScale{1}\fi
\psset{unit=\JPicScale mm}
\psset{linewidth=0.3,dotsep=1,hatchwidth=0.3,hatchsep=1.5,shadowsize=1,dimen=middle}
\psset{dotsize=0.7 2.5,dotscale=1 1,fillcolor=black}
\psset{arrowsize=1 2,arrowlength=1,arrowinset=0.25,tbarsize=0.7 5,bracketlength=0.15,rbracketlength=0.15}
\begin{pspicture}(0,0)(181.88,25)
\rput(122.5,10){$=$}
\rput(27.5,10){$=$}
\rput{0}(43.12,10.62){\psellipse[fillstyle=solid](0,0)(1.56,-1.56)}
\psline(48.12,18.12)
(48.12,17.5)
(48.13,15.62)(44.38,11.87)
\psline(38.12,18.12)
(38.12,17.5)
(38.13,15.62)(41.88,11.87)
\rput{0}(36.88,4.38){\psellipse[fillstyle=solid](0,0)(1.57,-1.57)}
\psline(13.12,9.38)
(18.12,4.38)
(16.88,4.38)(17.5,4.38)
\psline(31.87,17.5)
(31.88,11.25)
(31.88,9.37)(35.63,5.62)
\rput{0}(11.88,10.62){\psellipse[fillstyle=solid](0,0)(1.56,-1.57)}
\rput{0}(17.82,4.69){\psellipse[fillstyle=solid](0,0)(1.56,-1.57)}
\psline(22.5,17.5)
(22.5,15)
(22.5,9.37)(18.75,5.62)
\pscustom[]{\psline(41.88,9.38)(36.88,4.38)
\psbezier(36.88,4.38)(36.88,4.38)(36.88,4.38)
\psbezier(36.88,4.38)(36.88,4.38)(36.88,4.38)
}
\psline(16.87,17.5)
(16.87,16.87)
(16.88,15)(13.13,11.25)
\psline(6.87,17.5)
(6.87,16.87)
(6.88,15)(10.63,11.25)
\rput(92.5,10){$=$}
\rput{0}(111.88,11.25){\psellipse[fillstyle=solid](0,0)(1.57,-1.56)}
\rput{0}(105.62,5){\psellipse[fillstyle=solid](0,0)(1.56,-1.56)}
\psline(74.37,10)
(79.37,5)
(78.13,5)(78.75,5)
\psline(105.63,3.75)
(105.63,0)
(105.63,-0.62)(105.63,0)
\psline(100.62,18.12)
(100.63,11.87)
(100.63,10)(104.38,6.25)
\rput{0}(73.12,11.25){\psellipse[fillstyle=solid](0,0)(1.57,-1.56)}
\rput{0}(79.06,5.31){\psellipse[fillstyle=solid](0,0)(1.56,-1.56)}
\psline(83.75,18.12)
(83.75,15.62)
(83.75,10)(80,6.25)
\psline(78.75,3.75)
(78.75,0)
(78.75,-0.62)(78.75,0)
\pscustom[]{\psline(110.63,10)(105.63,5)
\psbezier(105.63,5)(105.63,5)(105.63,5)
\psbezier(105.63,5)(105.63,5)(105.63,5)
}
\psline(36.88,3.12)
(36.88,0)
(36.88,-0.62)(36.88,0)
\psline(17.5,3.12)
(17.5,0)
(17.5,-0.62)(17.5,0)
\psline(130,18.12)
(130,0)
(130,-0.62)(130,0)
\rput(27.5,25){$\eqone$}
\rput(100,25){$\eqtwo$}
\rput(163.75,10){$=$}
\rput{0}(176.88,5){\psellipse[fillstyle=solid](0,0)(1.56,-1.56)}
\psline(176.88,3.75)
(176.88,0)
(176.88,-0.62)(176.88,0)
\psline(171.88,18.75)
(171.88,11.87)
(171.88,10)(175.63,6.25)
\psline(150,3.75)
(150,0)
(150,-0.62)(150,0)
\psline(145.62,18.12)
(154.38,11.25)
(154.38,10)(150,5.62)
\psline(181.88,18.75)
(181.88,16.25)
(181.88,10.01)(176.88,5.01)
\psline[border=0.7](154.38,18.12)
(145.62,11.25)
(145.62,10)(150,5.62)
\psline(154.38,19.38)(154.38,18.12)
\psline(145.62,19.38)(145.62,18.12)
\rput(159.38,25){$\eqthree$}
\rput{0}(150,5.62){\psellipse[fillstyle=solid](0,0)(1.56,-1.56)}
\end{pspicture}

\end{center}
The correspondence between variables and comonoids was formalized and explained in \cite{PavlovicD:MSCS97
}. 
The algebraic properties of the binary copying induce unique $n$-ary copying $A\tto\cmn A^{\otimes n}$, for all $n\geq 0$. The tensor products $\otimes$ in $\CCc$ are the cartesian products $\times$ if and only if every $A$ in $\CCc$ carries a canonical comonoid $A\times A\oot \cmn A \tto \cun \doublone$, where $\doublone$ is the final object of $\CCc$, and all morphisms of $\CCc$ are comonoid homomorphisms, or equivalently, the families $A\tto \cmn A\times A$ and $A\tto \cun \doublone$ are natural. Cartesian categories are thus just monoidal categories with natural families of copying and deleting operations.

\begin{definition}\label{def:dataser}
A \emph{data service}\/ of type $A$ in a monoidal category $\CCc$ is a commutative comonoid structure $A\otimes A\oot \cmn A \tto \cun I$, where $\cmn$ provides the \emph{copying}\/ service, and $\cun$ provides the \emph{deleting}\/ service. 
\end{definition}

\paragraph{Examples and non-examples of data services.} 
The cartesian structure of the category $\Set$ of sets and functions provides the standard data services in the monoidal category $\Rel$ of sets and relations and in the monoidal category $\Pfn$ of sets and partial functions. The cartesian products from $\Set$ induce not only the monoidal structure of $\Rel$ and $\Pfn$, but cartesian comonoids $A\times A \oot \cmn A\tto \cun \doublone$ in $\Set$ also induce data services in $\Rel$ and $\Pfn$. The fact that both families of morphisms $\cmn$ and $\cun$, indexed over $A$, are natural with respect to all functions is what makes $\Set$ into a cartesian category. But $\Rel$ is not cartesian because neither $\cmn$ nor $\cun$ are natural with respect to relations; and $\Pfn$ is not natural because $\cun$ is not natural with respect to partial functions (although $\cmn$ is). In addition, $\Rel$ also admits many nonstandard data services, that do not come from the cartesian structure. This was analyzed in \cite{PavlovicD:QI09,PavlovicD:Qabs12}. Any abelian group (or groupoid) structure $A\times A \tto + A$ can be used as the comparison operation, with the corresponding copying operation $A \tto \cmn A\times A$ relating each $x\in A$ with all pairs $<y,z>\in A\times A$ such that $x=y+z$. The deletion operation $A\tto \cun \doublone$ relates the unit of the group $A$ with the only element of $\doublone$. In any case, restricting to the comonoid homomorphisms makes the chosen data services natural, and the resulting subcategory cartesian.

\begin{definition}\label{def:maps}
A morphism $f\in \CCc(A,B)$ is a \emph{map}\/ if it is a comonoid homomorphism with respect to the data services on $A$ and $B$, which means that it satisfies the following equations

\medskip
\begin{center}
\renewcommand{\eqone}{f \comp {\cmn_B}  =\   {\cmn_A}\comp{(f\otimes f)}}
\renewcommand{\eqtwo}{f \comp {\cun_B}\ =\ \cun_A}
\newcommand{\monoidd}{\cmn}
\newcommand{\fun}{\scriptstyle f}
\newcommand{\One}{A}
\newcommand{\Two}{B}
\newcommand{\delete}{\cun}
\def\JPicScale{.9}
\ifx\JPicScale\undefined\def\JPicScale{1}\fi
\psset{unit=\JPicScale mm}
\psset{linewidth=0.3,dotsep=1,hatchwidth=0.3,hatchsep=1.5,shadowsize=1,dimen=middle}
\psset{dotsize=0.7 2.5,dotscale=1 1,fillcolor=black}
\psset{arrowsize=1 2,arrowlength=1,arrowinset=0.25,tbarsize=0.7 5,bracketlength=0.15,rbracketlength=0.15}
\begin{pspicture}(0,0)(69.06,25)
\rput(11.88,10){$=$}
\pspolygon[](1.88,7.5)(5.62,7.5)(5.62,3.75)(1.88,3.75)
\pspolygon[](16.88,15.01)(20.63,15.01)(20.63,11.25)(16.88,11.25)
\pspolygon[](25.62,15.01)(29.39,15.01)(29.39,11.25)(25.62,11.25)
\rput(3.75,5.62){$\fun$}
\rput(18.75,13.12){$\fun$}
\rput(27.5,13.12){$\fun$}
\rput(61.88,10){$=$}
\pspolygon[](51.88,9.38)(55.62,9.38)(55.62,5.62)(51.88,5.62)
\rput(53.76,7.5){$\fun$}
\rput{0}(3.75,12.5){\psellipse[fillstyle=solid](0,0)(1.56,-1.56)}
\pscustom[]{\psline(3.75,8.12)(3.75,7.5)
\psbezier(3.75,7.5)(3.75,7.5)(3.75,7.5)
\psline(3.75,7.5)(3.75,11.25)
}
\pscustom[]{\psline(3.75,0)(3.75,-0.62)
\psbezier(3.75,-0.62)(3.75,-0.62)(3.75,-0.62)
\psline(3.75,-0.62)(3.75,3.75)
}
\psline(3.75,11.88)
(0.62,15)
(-0.62,16.25)(-0.62,19.38)
\psline(4.38,12.5)
(8.12,16.25)
(8.12,17.5)(8.12,19.38)
\rput{0}(23.12,4.38){\psellipse[fillstyle=solid](0,0)(1.57,-1.57)}
\pscustom[]{\psline(23.12,0)(23.12,-0.62)
\psbezier(23.12,-0.62)(23.12,-0.62)(23.12,-0.62)
\psline(23.12,-0.62)(23.12,3.12)
}
\psline(23.12,3.75)
(20,6.88)
(18.75,8.12)(18.75,11.25)
\psline(23.75,4.38)
(27.5,8.12)
(27.5,9.38)(27.5,11.25)
\pscustom[]{\psline(18.75,15)(18.75,15.62)
\psbezier(18.75,15.62)(18.75,15.62)(18.75,15.62)
\psline(18.75,15.62)(18.75,19.38)
}
\pscustom[]{\psbezier(27.5,15)(27.5,15)(27.5,15)(27.5,15)
\psbezier(27.5,15)(27.5,15)(27.5,15)
\psline(27.5,15)(27.5,19.38)
}
\pscustom[]{\psbezier(53.75,0)(53.75,0)(53.75,0)(53.75,0)
\psbezier(53.75,0)(53.75,0)(53.75,0)
\psline(53.75,0)(53.75,5.62)
}
\rput{90}(53.75,14.38){\psellipse[fillstyle=solid](0,0)(1.57,-1.56)}
\pscustom[]{\psline(53.75,10)(53.75,9.38)
\psbezier(53.75,9.38)(53.75,9.38)(53.75,9.38)
\psline(53.75,9.38)(53.75,13.12)
}
\rput{90}(67.5,14.38){\psellipse[fillstyle=solid](0,0)(1.57,-1.56)}
\psline(67.5,10)
(67.5,0)
(67.5,9.38)(67.5,13.12)
\rput(16.25,25){$\eqone$}
\rput(58.75,25){$\eqtwo$}
\end{pspicture}

\end{center}
Given a symmetric monoidal category $\CCc$ with data services, we denote by $\tot \CCc$ the subcategory spanned by the maps with respect to its data services, i.e. by those $\CCc$-morphisms that preserve copying and deleting.
\end{definition}

\begin{remark}If $\CCc$ is the category of relations, then the first equation says that $f$ is a single-valued relation, whereas the second equation says that it is total. Hence the name. Note that the morphisms $\cmn$ and $\cun$ from the data services are maps with respect to the data service that they induce. They are thus contained in $\tot\CCc$, and each of them forms a natural transformation with respect to the maps. This just means that the tensor $\otimes$, restricted to $\tot\CCc$, is the cartesian product.
\end{remark}

\bigskip
\section{
Monoidal computer}\label{Sec-three}
\subsection{Evaluation and evaluators}\label{sec:defmoncom}

\begin{definition}\label{def:moncom}
A \emph{monoidal computer}\/ is 
\begin{itemize}
\item a (strict) symmetric monoidal category  $\CCc$, with 
\item a data service $A\otimes A\oot \cmn A \tto \cun I$ on every $A$,
\item  a distinguished \emph{type of programs}\/ $\PP$, 
\item for every pair of types $A, B$ an $X$-natural family of surjections 
$$\begin{tikzar}{row sep=-.25em,column sep=1em}
\tot \CCc(X, \PP) \arrow[twoheadrightarrow]{rrr}{\GK^{AB}_X} \&\&\& \CCc(XA, B)
\end{tikzar}$$
called \emph{program evaluation}. 
\end{itemize}
%
%
\end{definition}

The next proposition says that the program evaluations from Def.~\ref{def:moncom} are a categorical view of Turing's \emph{universal computer} \cite{TuringA:Entscheidung}, or of Kleene's \emph{acceptable enumerations}\/ \cite[II.5]{Hartley:book,Odifreddi:one}, or of \emph{interpreters}\/ and \emph{specializers}\/ from programming language theory \cite{JonesN:book-computability}. 

\begin{proposition}\label{prop:us}
	Let $\CCc$ be a symmetric monoidal category with data services. Then specifying the program evaluations ${\GK^{AB}_X}:\tot \CCc (X, \PP) \epi  \CCc(X A, B)$ that make $\CCc$ into a monoidal computer, as defined in \ref{def:moncom}, is equivalent to giving for any three types $A, B, C\in |\CCc|$ the following two morphisms:

\begin{enumerate}[(a)]

\item \label{itemuniv} a \emph{universal evaluator} 
	$\UK^{AB} \in \CCc(\PP  A,B)$ such that for every computation $f\in \CCc(A, B)$ there is a program $F\in \tot \CCc (\PP)$ with
\medskip
\bear
f(a)\  & = &\ \  \UKK F a\\[3ex] 
 \def\JPicScale{.5}\newcommand{\aah}{\scriptstyle A}\renewcommand{\dh}{\scriptstyle B}\newcommand{\ahh}{f} 
\ifx\JPicScale\undefined\def\JPicScale{1}\fi
\psset{unit=\JPicScale mm}
\psset{linewidth=0.3,dotsep=1,hatchwidth=0.3,hatchsep=1.5,shadowsize=1,dimen=middle}
\psset{dotsize=0.7 2.5,dotscale=1 1,fillcolor=black}
\psset{arrowsize=1 2,arrowlength=1,arrowinset=0.25,tbarsize=0.7 5,bracketlength=0.15,rbracketlength=0.15}
\begin{pspicture}(0,0)(13.75,20)
\rput(5,20){$\dh$}
\psline(5,-8.25)(5,-1.25)
\psline(5,8.75)(5,15.5)
\rput(5,3.75){$\ahh$}
\rput(5,-13.12){$\aah$}
\psline(-3.76,4.99)
(-3.75,-1.25)
(13.75,-1.26)
(13.75,8.74)
(-3.76,8.74)
(13.75,8.74)
(-3.76,8.74)(-3.76,4.99)
\end{pspicture}
 & = & \def\JPicScale{.5}\newcommand{\aah}{\scriptstyle F}\renewcommand{\dh}{\scriptstyle B}\newcommand{\ahh}{\UK
 }\newcommand{\bhh}{\scriptstyle A} 
\ifx\JPicScale\undefined\def\JPicScale{1}\fi
\psset{unit=\JPicScale mm}
\psset{linewidth=0.3,dotsep=1,hatchwidth=0.3,hatchsep=1.5,shadowsize=1,dimen=middle}
\psset{dotsize=0.7 2.5,dotscale=1 1,fillcolor=black}
\psset{arrowsize=1 2,arrowlength=1,arrowinset=0.25,tbarsize=0.7 5,bracketlength=0.15,rbracketlength=0.15}
\begin{pspicture}(0,0)(23.12,20)
\rput(16.88,20){$\dh$}
\psline(6.88,-0.62)(6.88,3.12)
\psline(16.88,8.12)(16.88,15.62)
\rput(16.88,3.12){$\ahh$}
\rput(5,-2.5){$\aah$}
\psline(17.5,-8.11)(17.5,-1.88)
\rput(17.5,-12.5){$\bhh$}
\pspolygon(1.88,4.36)
(11.86,-5.62)
(1.88,-5.62)(1.88,4.36)
\psline(1.88,8.12)
(11.88,-1.87)
(23.12,-1.88)
(23.12,8.12)
(5.62,8.12)
(23.12,8.12)
(5.62,8.12)(1.88,8.12)
\end{pspicture}
\eear
\vspace{1\baselineskip}

\item \label{itempart} a
\emph{partial evaluator} 
$\SK^{(AB)C} \in \tot \CCc(\PP  A,\PP)$ such that 
\medskip
\bear
\UKK G {(a,b)}\ \ & = & \ \ \ \UKK{\SKK G a} b\\[3ex]
\def\JPicScale{.55}\renewcommand{\dh}{\scriptstyle \PP}\newcommand{\ahh}{\UK
}\newcommand{\bhh}{\scriptstyle A}\newcommand{\chh}{\scriptstyle B}\newcommand{\Lhh}{\scriptstyle C} 
\ifx\JPicScale\undefined\def\JPicScale{1}\fi
\psset{unit=\JPicScale mm}
\psset{linewidth=0.3,dotsep=1,hatchwidth=0.3,hatchsep=1.5,shadowsize=1,dimen=middle}
\psset{dotsize=0.7 2.5,dotscale=1 1,fillcolor=black}
\psset{arrowsize=1 2,arrowlength=1,arrowinset=0.25,tbarsize=0.7 5,bracketlength=0.15,rbracketlength=0.15}
\begin{pspicture}(0,0)(25.62,14.38)
\rput(4.38,-11.25){$\dh$}
\psline(3.75,-7.5)(3.75,1.88)
\psline(14.38,-7.5)(14.38,-3.12)
\rput(14.38,-11.25){$\bhh$}
\rput(16.25,1.88){$\ahh$}
\psline(21.25,-7.5)(21.25,-3.12)
\rput(21.25,-11.25){$\chh$}
\psline(15.62,6.24)(15.62,11.25)
\rput(15.62,14.38){$\Lhh$}
\psline(-0.62,6.25)
(8.75,-3.12)
(25.62,-3.12)
(25.62,6.25)
(5.62,6.26)
(3.12,6.26)
(1.25,6.25)(-0.62,6.25)
\end{pspicture}
 & = & \def\JPicScale{.55}\renewcommand{\dh}{\scriptstyle \Nn}
\newcommand{\ahh}{\UK
}\newcommand{\shh}{\SK
}\newcommand{\bhh}{\scriptstyle A}\newcommand{\chh}{\scriptstyle B}\newcommand{\Lhh}{\scriptstyle C} 
\ifx\JPicScale\undefined\def\JPicScale{1}\fi
\psset{unit=\JPicScale mm}
\psset{linewidth=0.3,dotsep=1,hatchwidth=0.3,hatchsep=1.5,shadowsize=1,dimen=middle}
\psset{dotsize=0.7 2.5,dotscale=1 1,fillcolor=black}
\psset{arrowsize=1 2,arrowlength=1,arrowinset=0.25,tbarsize=0.7 5,bracketlength=0.15,rbracketlength=0.15}
\begin{pspicture}(0,0)(30.01,13.12)
\psline(2.5,-11.88)(2.5,-3.1)
\psline(18.13,-2.5)(18.13,1.25)
\psline(15,-11.88)(15,-7.5)
\rput(15,-15.62){$\bhh$}
\rput(24.37,1.25){$\ahh$}
\rput(11.25,-3.12){$\shh$}
\psline(26.87,-12.5)(26.87,-3.75)
\rput(26.87,-15.62){$\chh$}
\rput(2.5,-15.62){$\dh$}
\psline(26.87,5.62)(26.87,10.62)
\rput(26.87,13.12){$\Lhh$}
\psline(13.76,5.62)
(23.13,-3.75)
(30.01,-3.75)
(30.01,5.62)
(26.87,5.62)
(16.87,5.62)
(15,5.62)(13.76,5.62)
\pscustom[]{\psline(-2.5,1.88)(6.88,-7.5)
\psline(6.88,-7.5)(23.13,-7.5)
\psline(23.13,-7.5)(16.86,-1.25)
\psline(16.86,-1.25)(13.74,1.88)
\psline(13.74,1.88)(-2.5,1.88)
\psline(-2.5,1.88)(-1.25,1.88)
\psbezier(-1.25,1.88)(-1.25,1.88)(-1.25,1.88)
}
\end{pspicture}
\eear
\vspace{2\baselineskip}
\end{enumerate}
\end{proposition}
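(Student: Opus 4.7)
The plan is to show the two directions of the equivalence by turning each piece of data into the other, and then verifying that the characterizing properties transfer. The key observation that makes both directions work is that $X$-naturality of $\gamma^{AB}_X$ forces every $\gamma^{AB}_X(F)$ to factor through a single morphism of the form $\UK^{AB}\circ(F\otimes A)$, so the universal evaluator is essentially forced on us. The partial evaluator is then a one-line consequence of surjectivity applied with cleverly chosen $X$.

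\textbf{From the surjections $\gamma^{AB}_X$ to $\UK^{AB}$ and $\SK^{(AB)C}$.} I would first define
\[
\UK^{AB}\ =\ \gamma^{AB}_{\PP}(\id_{\PP})\ \in\ \CCc(\PP\otimes A, B).
\]
Applying $X$-naturality to the composite $X\tto{F}\PP\tto{\id_{\PP}}\PP$ for any $F\in\tot\CCc(X,\PP)$ yields the normal form
\[
\gamma^{AB}_X(F)\ =\ \UK^{AB}\,\circ\,(F\otimes A),
\]
so in particular for $X=I$ the surjectivity of $\gamma^{AB}_I$ gives exactly the universal property required in \ref{itemuniv}. For the partial evaluator, I would view $\UK^{(A\otimes B)C}\in\CCc\bigl((\PP\otimes A)\otimes B,\,C\bigr)$ as a morphism out of $(\PP\otimes A)\otimes B$, and apply the surjection $\gamma^{BC}_{\PP\otimes A}$ to obtain some $\SK^{(AB)C}\in\tot\CCc(\PP\otimes A,\PP)$ with $\UK^{(A\otimes B)C} = \UK^{BC}\circ(\SK^{(AB)C}\otimes B)$, which is precisely the defining equation of \ref{itempart}.

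\textbf{From $\UK^{AB}$ and $\SK^{(AB)C}$ back to the surjections.} I would simply set $\gamma^{AB}_X(F) = \UK^{AB}\circ(F\otimes A)$ for $F\in\tot\CCc(X,\PP)$. Naturality in $X$ is immediate from functoriality of $\otimes$ and associativity of composition. For surjectivity, given an arbitrary $m\in\CCc(X\otimes A,B)$ I would first apply the universal evaluator at the compound input type $X\otimes A$ and output type $B$: this yields $M\in\tot\CCc(\PP)$ with $m = \UK^{(X\otimes A)B}\circ(M\otimes X\otimes A)$. Then the partial evaluator identity rewrites this as $\UK^{AB}\circ\bigl((\SK^{(XA)B}\circ(M\otimes X))\otimes A\bigr)$, so $F := \SK^{(XA)B}\circ(M\otimes X)$ is the required preimage; it lies in $\tot\CCc(X,\PP)$ because both $M$ and $\SK$ are maps.

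\textbf{Mutual inverses and main obstacle.} Finally I would check that the two constructions are mutually inverse: starting from $\gamma$, building $(\UK,\SK)$, and rebuilding $\gamma'$ recovers the normal form $\gamma^{AB}_X(F) = \UK^{AB}\circ(F\otimes A)$ established above, hence $\gamma = \gamma'$; starting from $(\UK,\SK)$ and passing through $\gamma$, the recovered universal evaluator is $\gamma^{AB}_{\PP}(\id_{\PP}) = \UK^{AB}\circ(\id_{\PP}\otimes A) = \UK^{AB}$, and the recovered partial evaluator is the $\SK$ produced by the surjectivity step, which by its defining equation is uniquely determined up to the chosen witness (any such witness works because surjectivity only asserts existence). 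I expect the main delicate point to be the surjectivity step in the backward direction: the trick of absorbing the state $X$ into the input type and then extracting it again via $\SK$ is the real content of the equivalence, since it is what turns the merely pointwise information carried by $\UK$ and $\SK$ back into naturality in the parameter $X$. Everything else is essentially bookkeeping with the map structure of $\tot\CCc$ and the functoriality of $\otimes$.
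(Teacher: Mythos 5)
Your proposal is correct and follows essentially the same route as the paper's proof: Yoneda/naturality forces the normal form $\gamma^{AB}_X(F)=\UK^{AB}\circ(F\otimes A)$, surjectivity at $X=I$ yields the universal evaluator, surjectivity at $X=\PP\otimes A$ applied to $\UK^{(AB)C}$ yields the partial evaluator, and the backward direction recovers surjectivity by absorbing $X$ into the input type and splitting it back off with $\SK$ (your $F=\SK^{(XA)B}\circ(M\otimes X)$ is exactly the paper's $H_x=[\widetilde H,x]$). No gaps; the only difference is that you spell out the mutual-inverse check that the paper leaves implicit in the Yoneda correspondence.
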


\begin{remark} Note that the partial evaluators $\SK$ are maps, i.e. total and single valued morphisms in $\tot \CCc$, whereas the universal evaluators $\UK$ are ordinary morphisms in $\CCc$. A recursion theorist will recognize the universal evaluators as Turing's \emph{universal machines}\/ \cite{TuringA:Entscheidung}, and the partial evaluators as G\"odel's primitive recursive \emph{substitution function} $S$, enshrined in Kleene's  $S^m_{n}$-theorem \cite{KleeneSC:ordinal}. 
A programmer can think of the universal evaluators as \emph{interpreters}, and of the partial evaluators as \emph{specializers} \cite{JonesN:book-computability}\footnote{When the theory is refined, it becomes useful to recognize subtle but important conceptual and technical distinctions.}. 
In any case, (a) can be understood as saying that every computation can be programmed; and then (b) says that any program with several inputs can be evaluated on any of its inputs, and reduced to a program that waits for the remaining inputs:

\[
\newcommand{\computations}{$g(a,b)$}
\newcommand{\programmed}{$\UKK G {(a,b)}$}
\newcommand{\evaluated}{$\UKK{\SKK G a} b$}
\newcommand{\program}{$G$}
\newcommand{\gee}{$g$}
\newcommand{\Progtype}{}
\newcommand{\Progk}{\scriptstyle A}
\newcommand{\Progm}{\scriptstyle B}
\newcommand{\Progn}{\scriptstyle C}
\newcommand{\EQLS}{=}
\newcommand{\universal}{\{\}}
\newcommand{\universalmkn}{\{\}}
\renewcommand{\partial}{[\,]}
\def\JPicScale{.3}
\ifx\JPicScale\undefined\def\JPicScale{1}\fi
\psset{unit=\JPicScale mm}
\psset{linewidth=0.3,dotsep=1,hatchwidth=0.3,hatchsep=1.5,shadowsize=1,dimen=middle}
\psset{dotsize=0.7 2.5,dotscale=1 1,fillcolor=black}
\psset{arrowsize=1 2,arrowlength=1,arrowinset=0.25,tbarsize=0.7 5,bracketlength=0.15,rbracketlength=0.15}
\begin{pspicture}(0,0)(224.38,100)
\psline[linewidth=0.75](106.25,50)(106.25,5)
\psline[linewidth=0.75](124.38,50)(124.38,5)
\psline[linewidth=0.75](96.24,50)(76.24,70)
\psline[linewidth=0.75](76.24,70)(133.75,70)
\psline[linewidth=0.75](96.24,50)(133.12,50)
\rput(150,59.38){\EQLS}
\psline[linewidth=0.75](133.75,70)(133.75,50)
\psline[linewidth=0.75](117.5,80)(117.5,70)
\psline[linewidth=0.75](85.62,60)(85.62,39.38)
\psline[linewidth=0.75](178.12,43.12)(178.12,25.62)
\psline[linewidth=0.75](198.74,33.12)(198.75,5)
\psline[linewidth=0.75](168.12,53.12)(189.38,53.12)
\psline[linewidth=0.75](209.38,33.12)(188.12,33.12)
\psline[linewidth=0.75](210,50)(190,70)
\psline[linewidth=0.75](189.38,53.12)(209.38,33.12)
\psline[linewidth=0.75](190,70)(224.38,70)
\psline[linewidth=0.75](198.12,61.88)(198.12,45)
\psline[linewidth=0.75](210,50)(224.38,50)
\psline[linewidth=0.75](224.38,70)(224.38,50)
\psline[linewidth=0.75](168.12,53.12)(188.12,33.12)
\psline[linewidth=0.75](216.25,80)(216.25,69.38)
\rput[r](196.25,57.5){$\Progtype$}
\psline[linewidth=0.75](216.25,50)(216.25,5)
\rput(188.74,43.12){$\partial$}
\rput(215,60){$\universal$}
\rput(106.25,-1.25){$\Progk$}
\rput(123.75,-1.25){$\Progm$}
\rput(116.88,85.62){$\Progn$}
\rput(216.25,85.62){$\Progn$}
\rput(216.25,0){$\Progm$}
\rput(198.75,0){$\Progk$}
\rput(111.25,60){$\universalmkn$}
\psline[linewidth=0.75](167.5,36.25)(167.5,11.25)
\psline[linewidth=0.75](167.5,11.25)(192.5,11.25)
\rput(175,18.12){\program}
\psline[linewidth=0.75](167.5,36.25)(192.5,11.25)
\psline[linewidth=0.75](73.12,26.88)(98.12,26.88)
\rput(80.62,33.75){\program}
\psline[linewidth=0.75](73.12,51.88)(98.12,26.88)
\psline[linewidth=0.75](73.12,51.88)(73.12,26.88)
\rput[r](176.88,38.75){$\Progtype$}
\rput[r](83.12,55.62){$\Progtype$}
\psline[linewidth=0.75](10,50)(10,5)
\psline[linewidth=0.75](30,50)(30,5)
\psline[linewidth=0.75](-0,50)(0,70)
\psline[linewidth=0.75](0,70)(40,70)
\psline[linewidth=0.75](-0,50)(40,50)
\psline[linewidth=0.75](40,70)(40,50)
\psline[linewidth=0.75](20,80)(20,70)
\rput(10,0){$\Progk$}
\rput(30,0){$\Progm$}
\rput(20,85){$\Progn$}
\rput(55,59.38){\EQLS}
\rput(20,60){\gee}
\rput(20,100){\computations}
\rput(110,100){\programmed}
\rput(195,100){\evaluated}
\rput(55,100){\EQLS}
\rput(150,100){\EQLS}
\end{pspicture}

\]
\end{remark}

\begin{proof}[Proof of Prop.~\ref{prop:us}] Every natural transformation $\GK^{AB}:\tot \CCc(-,\PP) \to \CCc(- \otimes A, B)$ is uniquely determined by the computation $\UK^{AB} = \GK^{AB}_\PP(\id_\PP) \in \CCc(\PP\otimes A, B)$, because the naturality of $\GK^{AB}$ just means that
\bear
\GK^{AB}_X(F) & = &\CCc(F, \PP)\circ \GK^{AB}_\PP(\id_\PP)\ =\ (F\otimes A) \comp \UK^{AB} \ =\ \{F\}^{AB}
\eear
where we write $\{F\}$ for $(F\otimes A) \comp \UK$ not only for convenience, but also in reverence to Kleene's work and vision \cite{KleeneSC:metamathematics}.
$$\begin{tikzar}{row sep=1em,column sep=1em}
\tot \CCc(\PP,\PP)  \arrow{dddddd}[swap]{\tot\CCc(F, \PP)}  \arrow[two heads]
{rrrrrr}{\GK^{AB}_\PP}
\&\&\&\&\&\&
\CCc(\PP \otimes A, B) \arrow{dddddd}{\CCc(F\otimes A, B)}\\
\& \id_\PP \arrow[mapsto]{dddd} \arrow[mapsto]{rrrr} \&\&\&\& \UK \arrow[mapsto]{dddd} \\
\\
\\
\\
\& F \arrow[mapsto]{rrrr} \&\&\&\& \{F\}\\ 
 \tot \CCc(X,\PP)  \arrow[two heads]
 {rrrrrr}[swap]{\GK^{AB}_X} 
 \&\&\&\&\&\& \CCc(X\otimes A, B)
\end{tikzar}$$
Expressed in terms of the properties of the induced computation $\UK^{AB} \in \CCc(\PP\otimes A, B)$, the assumption that the program evaluations $\GK^{AB}_X:\tot \CCc(X,\PP) \epi \CCc(X \otimes A, B)$ are surjective functions for all $X$ means that for every computation $h\in \CCc(X\otimes A, B)$ there is an $X$-indexed family of programs $H\in \tot \CCc (X,\PP)$ such that
\bea\label{eq:yon}
  \UKK {H_x} a\  & = &\ \ h(x, a)\\[1ex] 
 \def\JPicScale{.65}\newcommand{\xhh}{\scriptstyle X}\newcommand{\aah}{\scriptstyle H}\renewcommand{\dh}{\scriptstyle B}\newcommand{\ahh}{\UK
 }\newcommand{\bhh}{\scriptstyle A} 
\ifx\JPicScale\undefined\def\JPicScale{1}\fi
\psset{unit=\JPicScale mm}
\psset{linewidth=0.3,dotsep=1,hatchwidth=0.3,hatchsep=1.5,shadowsize=1,dimen=middle}
\psset{dotsize=0.7 2.5,dotscale=1 1,fillcolor=black}
\psset{arrowsize=1 2,arrowlength=1,arrowinset=0.25,tbarsize=0.7 5,bracketlength=0.15,rbracketlength=0.15}
\begin{pspicture}(0,0)(23.12,17.5)
\rput(16.88,17.5){$\dh$}
\psline(6.88,-1.87)(6.88,1.87)
\psline(16.88,6.87)(16.88,14.37)
\rput(16.88,1.87){$\ahh$}
\rput(5,-3.75){$\aah$}
\psline(17.5,-12.5)(17.5,-3.13)
\rput(17.5,-15.62){$\bhh$}
\pspolygon(1.88,3.11)
(11.86,-6.87)
(1.88,-6.87)(1.88,3.11)
\psline(1.88,6.87)
(11.88,-3.12)
(23.12,-3.13)
(23.12,6.87)
(5.62,6.87)
(23.12,6.87)
(5.62,6.87)(1.88,6.87)
\psline(6.25,-12.5)(6.25,-6.87)
\rput(6.25,-15.63){$\xhh$}
\end{pspicture}
  & = &\ \  \def\JPicScale{.7}\newcommand{\aah}{\scriptstyle A}\newcommand{\xhh}{\scriptstyle X}\renewcommand{\dh}{\scriptstyle B}\newcommand{\ahh}{h} 
\ifx\JPicScale\undefined\def\JPicScale{1}\fi
\psset{unit=\JPicScale mm}
\psset{linewidth=0.3,dotsep=1,hatchwidth=0.3,hatchsep=1.5,shadowsize=1,dimen=middle}
\psset{dotsize=0.7 2.5,dotscale=1 1,fillcolor=black}
\psset{arrowsize=1 2,arrowlength=1,arrowinset=0.25,tbarsize=0.7 5,bracketlength=0.15,rbracketlength=0.15}
\begin{pspicture}(0,0)(13.75,16.25)
\rput(5,16.25){$\dh$}
\psline(8.75,-10.75)(8.75,-3.75)
\psline(5,6.25)(5,13)
\rput(5,1.25){$\ahh$}
\rput(8.75,-14.38){$\aah$}
\psline(-3.76,2.49)
(-3.75,-3.75)
(13.75,-3.76)
(13.75,6.24)
(-3.76,6.24)
(13.75,6.24)
(-3.76,6.24)(-3.76,2.49)
\psline(0,-11.25)(0,-3.75)
\rput(0,-14.38){$\xhh$}
\end{pspicture}
 \notag
\qquad \qquad \qquad  \begin{tikzar}{row sep=1.3em,column sep=1.5em}
\& B\\   
\PP \otimes A \ar{ur}{\UK}  \\ 
 \& X \otimes A \ar{uu}[swap]{h} \ar[dashed]{ul}{H \otimes A} 
 \end{tikzar}
\eea

By the Yoneda lemma \cite[III.2]{MacLaneS:CWM}, specifying a  program evaluation $\GK^{AB}:\tot \CCc(-,\PP) \epi \CCc(- \otimes A, B)$ is thus equivalent to specifying a computation $\UK^{AB} \in \CCc(\PPp\otimes A, B)$ satisfying \eqref{eq:yon}. The task of proving the Proposition thus boils down to showing that {\it(\ref{itemuniv})}$\iff${\it(\ref{itempart})}$\Rightarrow$\eqref{eq:yon}.

\noindent Setting $X$ to be the tensor unit $I$, shows that $\UK^{AB} \in \CCc(\PP\otimes A, B)$ induced as above by $\GK^{AB}:\tot \CCc(-,\PP) \to \CCc(- \otimes A, B)$ satisfies condition {\it(\ref{itemuniv})}, and is thus a universal evaluator. To construct a partial evaluator $\SK^{ABC} \in \tot\CCc(\PP\otimes A, \PP)$ satisfying {\it(\ref{itempart})}, consider the following diagram
$$\begin{tikzar}{row sep=1em,column sep=1em}
\tot \CCc(\PP,\PP)  \arrow{dddddd}[swap]{\tot\CCc(\SK^{ABC}, \PP)}  \arrow[two heads]
{rrrrrr}{\GK^{BC}_\PP}  \arrow[two heads]
{rrrrrrdddddd}[swap]{\GK^{(AB)C}_\PP}
\&\&\&\&\&\&
\CCc(\PP \otimes B, C) \arrow{dddddd}{\CCc(\SK^{ABC}\otimes B, C)}\\
\\
\\
\\
\\
\\ 
 \tot \CCc(\PP\otimes A,\PP)  \arrow[two heads]
 {rrrrrr}[swap]{\GK^{BC}_{\PP A}} 
 \&\&\&\&\&\& \CCc(\PP \otimes A\otimes B, C)
\end{tikzar}$$
Since $\GK^{BC}_{\PP A}$ is surjective, there must exist $\SK^{ABC} \in \tot \CCc(\PP\otimes A, \PP)$ such that
\bear
\GK^{BC}_{\PP A}\left( \SK^{ABC}\right) & = & \GK^{(AB)C}_{\PP}\left( \id_\PP\right)
\eear
Fix a choice of such $\SK^{ABC} \in \tot \CCc(\PP\otimes A, \PP)$,  and chase the above diagram. Recalling that $\UK^{(AB)C} = \GK^{(AB)C}_{\PP}\left( \id_\PP\right)$ and $\UK^{BC} = \GK^{BC}_{\PP}\left( \id_\PP\right)$, and observing that $\tot \CCc\left(\SK^{ABC}, \PP\right) (\id_\PP) = \SK^{ABC} \comp \id_\PP = \SK^{ABC}$, the naturality of $\GK^{BC}$ implies that
\begin{multline*}
\left(\SK^{ABC} \otimes B\right) \comp \UK^{BC} \  = \ \CCC(\SK^{ABC}\otimes B, C) \comp \GK^{BC}_\PP(\id_\PP)\ =\\ 
=\ \GK^{BC}_{\PP A} \circ \tot \CCC(\SK^{ABC}, \PP)(\id_\PP)\ =\ \GK^{(AB)C}_\PP(\id_\PP)\ =\ \UK^{(AB)C}
\end{multline*}
Written in the bracket notation, this boils down to $$\big\{\left[G, a\right]^{ABC}\big\}^{BC}b \ =\ \big\{G\big\}^{(AB)C}(a,b)$$
which shows that $\SK^{ABC}$ satisfies {\it(\ref{itempart})}, as claimed.

Turning to the converse, suppose that universal evaluators $\UK^{AB}$ and partial evaluators $\SK^{ABC}$ are given, satisfying  {\it(\ref{itemuniv})} and  {\it(\ref{itempart})}. We show that the universal evaluators $\UK^{AB}$ then satisfy the stronger requirement \eqref{eq:yon}. Since we showed above that giving $\UK^{AB} \in \CCc(\PP\otimes A, B)$ satisfying \eqref{eq:yon} is equivalent to specifying a natural family of surjections $\GK^{AB}:\tot \CCc(-,\PP) \to \CCc(- \otimes A, B)$, this will complete the proof.

Towards the proof that {\it(\ref{itemuniv})}$\wedge${\it(\ref{itempart})}$\Rightarrow$\eqref{eq:yon}, consider an arbitrary computation $h\in \CCc(XA, B)$. Then \begin{itemize}
\item {\it (\ref{itemuniv})} gives $\widetilde H \in \CCc(\PP)$ such that $\left\{ {\widetilde H}\right\}^{(XA)B} (x,a) = h(x,a)$, and
\item {\it (\ref{itempart})} gives $H_x =\left[\widetilde H, x\right]^{XAB} \in \CCc(X, \PP)$ such that $\big\{H_x\big\}^{AB} a = h(x,a)$.
\end{itemize}
\bigskip
\[
\newcommand{\computations}{$h(x,a)$}
\newcommand{\programmed}{$\UKK{\widetilde H} {(x,a)}$}
\newcommand{\evaluated}{$\UKK{\SKK {\widetilde H} x} y$}
\newcommand{\program}{$\scriptstyle \widetilde H$}
\newcommand{\progH}{$\color{red} H$}
\newcommand{\gee}{$h$}
\newcommand{\Progtype}{}
\newcommand{\Progk}{X}
\newcommand{\Progm}{A}
\newcommand{\Progn}{B}
\newcommand{\EQLS}{=}
\newcommand{\universal}{\{\}}
\newcommand{\universalmkn}{\{\}}
\renewcommand{\partial}{[\,]}
\def\JPicScale{.3}
\ifx\JPicScale\undefined\def\JPicScale{1}\fi
\psset{unit=\JPicScale mm}
\psset{linewidth=0.3,dotsep=1,hatchwidth=0.3,hatchsep=1.5,shadowsize=1,dimen=middle}
\psset{dotsize=0.7 2.5,dotscale=1 1,fillcolor=black}
\psset{arrowsize=1 2,arrowlength=1,arrowinset=0.25,tbarsize=0.7 5,bracketlength=0.15,rbracketlength=0.15}
\begin{pspicture}(0,0)(265,120)
\psline[linewidth=0.75](106.25,50)(106,0)
\psline[linewidth=0.75](125,50)(125,0)
\psline[linewidth=0.75](96.24,50)(76.24,70)
\psline[linewidth=0.75](76.24,70)(133.75,70)
\psline[linewidth=0.75](96.24,50)(133.12,50)
\rput(150,59.38){\EQLS}
\psline[linewidth=0.75](133.75,70)(133.75,50)
\psline[linewidth=0.75](117.5,88.75)(117.5,70)
\psline[linewidth=0.75](85.62,60)(85.62,39.38)
\psline[linewidth=0.75](178,35)(178.12,25.62)
\psline[linewidth=0.75](198,25)(198,0)
\psline[linewidth=0.75](168.12,45.12)(189.38,45.12)
\psline[linewidth=0.75](209.38,25.12)(188.12,25.12)
\psline[linewidth=0.75](210,50)(190,70)
\psline[linewidth=0.75](189.38,45.12)(209.38,25.12)
\psline[linewidth=0.75](190,70)(224.38,70)
\psline[linewidth=0.75](198.12,61.88)(198,37)
\psline[linewidth=0.75](210,50)(224.38,50)
\psline[linewidth=0.75](224.38,70)(224.38,50)
\psline[linewidth=0.75](168.12,45.12)(188.12,25.12)
\psline[linewidth=0.75](216.25,88.12)(216.25,69.38)
\psline[linewidth=0.75](216.25,50)(216,0)
\rput(188.74,35.12){$\partial$}
\rput(215,60){$\universal$}
\rput(106.25,-6.25){$\Progk$}
\rput(123.75,-6.25){$\Progm$}
\rput(116.88,95.62){$\Progn$}
\rput(216.25,95.62){$\Progn$}
\rput(216.25,-5){$\Progm$}
\rput(198.75,-5){$\Progk$}
\rput(111.25,60){$\universalmkn$}
\psline[linewidth=0.75](167.5,36.25)(167.5,11.25)
\psline[linewidth=0.75](167.5,11.25)(192.5,11.25)
\rput(175,18.12){\program}
\psline[linewidth=0.75](167.5,36.25)(192.5,11.25)
\psline[linewidth=0.75](73.12,26.88)(98.12,26.88)
\rput(80.62,33.75){\program}
\psline[linewidth=0.75](73.12,51.88)(98.12,26.88)
\psline[linewidth=0.75](73.12,51.88)(73.12,26.88)
\psline[linewidth=0.75](10,50)(10,0)
\psline[linewidth=0.75](29,50)(29,0)
\psline[linewidth=0.75](-0,50)(0,70)
\psline[linewidth=0.75](0,70)(39.38,70)
\psline[linewidth=0.75](-0,50)(39.38,50)
\psline[linewidth=0.75](39.38,70)(39.38,50)
\psline[linewidth=0.75](18.12,88.75)(18.12,70)
\rput(10,-6.25){$\Progk$}
\rput(28.75,-5.62){$\Progm$}
\rput(18.12,95.62){$\Progn$}
\rput(55,59.38){\EQLS}
\rput(20,60){\gee}
\rput(17.5,120){\computations}
\rput(110,120){\programmed}
\rput(195,120){\evaluated}
\rput(55,120){\EQLS}
\rput(150,120){\EQLS}
\newrgbcolor{userLineColour}{1 0.2 0.2}
\psline[linewidth=0.75,linecolor=userLineColour](164,75)(164,8)
\newrgbcolor{userLineColour}{1 0.2 0.2}
\psline[linewidth=0.75,linecolor=userLineColour](164,75)(212,27)
\newrgbcolor{userLineColour}{1 0.2 0.2}
\psline[linewidth=0.75,linecolor=userLineColour](164,8)(212,8)
\newrgbcolor{userLineColour}{1 0.2 0.2}
\psline[linewidth=0.75,linecolor=userLineColour](212,27)(212,8)
\rput(205,15){\progH}
\rput(265,10){}
\end{pspicture}

\]
\end{proof}

\medskip
\noindent{\bf Branching.} By extending the $\lambda$-calculus constructions as in \cite{PavlovicD:IC12}, we can extract from $\PP$ the convenient types of natural numbers, truth values, etc. E.g., if the truth values $\true$ and $\false$ are defined to be some programs for the two projections, then the role of the $\iif$-branching command can be played by the universal evaluator:
\bear
\iif (b, x, y) & = & \UKK b (x,y)\ \ =\ \  \begin{cases} x & \mbox{ if }\  b = \true \\[2em]
	y & \mbox{ if }\  b = \false
\end{cases}\hspace{3em} \newcommand{\zzero}{\scriptstyle  \false}
\newcommand{\oone}{\scriptstyle  \true}
\newcommand{\EQLS}{=}
\newcommand{\ueval}{{\scriptstyle if =  \UK}}
\def\JPicScale{.23}
\ifx\JPicScale\undefined\def\JPicScale{1}\fi
\psset{unit=\JPicScale mm}
\psset{linewidth=0.3,dotsep=1,hatchwidth=0.3,hatchsep=1.5,shadowsize=1,dimen=middle}
\psset{dotsize=0.7 2.5,dotscale=1 1,fillcolor=black}
\psset{arrowsize=1 2,arrowlength=1,arrowinset=0.25,tbarsize=0.7 5,bracketlength=0.15,rbracketlength=0.15}
\begin{pspicture}(0,0)(130,55)
\psline[linewidth=0.75](120,-5)(120,-50)
\psline[linewidth=0.75](100,-25)(100,-50)
\newrgbcolor{userFillColour}{0.2 0 0.2}
\rput{0}(100,-25){\psellipse[fillcolor=userFillColour,fillstyle=solid](0,0)(2.58,-2.58)}
\psline[linewidth=0.6](90,-15)(130,-15)
\psline[linewidth=0.6](130,-35)(130,-15)
\psline[linewidth=0.6](90,-35)(130,-35)
\psline[linewidth=0.6](90,-15)(90,-35)
\rput(70,-25){\EQLS}
\psline[linewidth=0.75](0,45)(50,45)
\psline[linewidth=0.75](50,25)(50,45)
\psline[linewidth=0.75](35,55)(35,45)
\psline[linewidth=0.75](20,25)(50,25)
\psline[linewidth=0.75](45,25)(45,10)
\psline[linewidth=0.75](10,35.01)(10,25)
\psline[linewidth=0.75](0,45)(20,25)
\psline[linewidth=0.75](0,15)(20,15)
\psline[linewidth=0.75](0,15)(0,35)
\psline[linewidth=0.75](0,35)(20,15)
\psline[linewidth=0.75](25,25)(25,10)
\rput(70,35){\EQLS}
\psline[linewidth=0.75](0,-15)(50,-15)
\psline[linewidth=0.75](50,-35)(50,-15)
\psline[linewidth=0.75](35,-5)(35,-15)
\psline[linewidth=0.75](20,-35)(50,-35)
\psline[linewidth=0.75](45,-35)(45,-50)
\psline[linewidth=0.75](10,-24.99)(10,-35)
\psline[linewidth=0.75](0,-15)(20,-35)
\psline[linewidth=0.75](0,-45)(20,-45)
\psline[linewidth=0.75](0,-45)(0,-25)
\psline[linewidth=0.75](0,-25)(20,-45)
\psline[linewidth=0.75](25,-35)(25,-50)
\rput(5,20){$\oone$}
\rput(6.25,-38.75){$\zzero$}
\psline[linewidth=0.75](100,55)(100,10)
\psline[linewidth=0.75](120,35.01)(120,10)
\newrgbcolor{userFillColour}{0.2 0 0.2}
\rput{0}(120,35){\psellipse[fillcolor=userFillColour,fillstyle=solid](0,0)(2.58,-2.58)}
\psline[linewidth=0.6](90,45)(130,45)
\psline[linewidth=0.6](130,25)(130,45)
\psline[linewidth=0.6](90,25)(130,25)
\psline[linewidth=0.6](90,45)(90,25)
\rput(30,35){$\ueval$}
\rput(30,-25){$\ueval$}
\end{pspicture}

\eear

\subsection{Examples of monoidal computer}\label{sec:examples}
Let $\SSS$ be a cartesian category and $T:\SSS\to \SSS$ a commutative monad. Then the Kleisli category $\SSS_T$ of free algebras is monoidal, with the data services induced by the cartesian structure of $\SSS$. 

The standard model of monoidal computer $\CC$ is obtained by taking $\SSS$ to be the category of finite and countable sets, and $TX = \bot +X$ to be the \emph{maybe}\/ monad, adjoining a fresh element to every set. The category $\SSS_\bot$ is the category of partial functions, and the monoidal computer $\CC\subseteq \SSS_\bot$ is the subcategory of \emph{computable}\/ partial functions:
\begin{itemize}
\item $|\CC|\  =\  \big\{A\subseteq \NNn\ |\ \exists e\in\NNn.\ \{e\}a\halts \iff a\in A\big\}$
\item $\CC(A,B)\ =\ \big\{f:A\pfn B\ |\ \exists e.\ \{e\} = f\big\}$
\end{itemize}
The category $\tot \CC$ is then the category of computable \emph{total}\/ functions. Assuming that the programs are encoded as natural numbers, the type of programs is $\PP = \NNn$; but any language containing a Turing complete set of expressions would do, \emph{mutatis mutandis}. The sequence $\{0\}, \{1\}, \{2\},\ldots$ denotes an acceptable enumeration of computable partial functions \cite[II.5]{Odifreddi:one}. The universal evaluators can be implemented as partial recursive functions; the partial evaluators are the total recursive functions, constructed in Kleene's $S^m_{n}$-theorem \cite{KleeneSC:ordinal}. 

Other commutative monads $T:\SSS\to \SSS$ induce monoidal computers in a similar way, capturing intensional computations together with the corresponding computational  effects: exceptions, nondeterminism, randomness \cite{MoggiE:monads}. Some of the familiar computational monads need to be restricted to finite support. The distribution monad must be factored modulo computational indistinguishability. A simple quantum monoidal computer can be constructed using a relative monad for finite dimensional vector spaces \cite{AltenkirchT:relative-monads}. However, in the model where the universal evaluators are quantum Turing machines, the program evaluations cannot be surjective in the usual sense, but only in the topologically enriched sense, i.e., they are dense \cite{Bernstein-Vazirani}. We do not know how to derive this model from a computational monad, albeit relative. Another interesting feature is that most computational effects induce nonstandard data services, corresponding to complementary bases, which are, of course, used in randomized, quantum, but also in nondeterministic algorithms \cite{PavlovicD:QI09,PavlovicD:Qabs12}. More examples are in \cite{PavlovicD:IC12}, but most work is still ahead.

\subsection{Encoding all types}\label{Sec-retracts}
\begin{proposition}\label{prop:retracts}
Every type $B$ in a monoidal computer is a retract of the type of programs $\PP$. More precisely, for every type $B\in |\CCc|$ there are computations
$$\begin{tikzar}{}
B \arrow[tail,bend left = 15]{rr}{e} \&\&\PP \arrow[two heads,bend left = 15]{ll}{d}
\end{tikzar}$$
such that $\enc^B$ is a map, and $\enc^B\comp \dec^B = \id_B$. We often call $\enc^B$ the \emph{encoding}\/ of $B$ and $\dec^B\in \CCc(\PP,B)$ is the corresponding \emph{decoding}.
\end{proposition}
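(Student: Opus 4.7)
The plan is to realize the decoding as the universal evaluator applied with trivial input, and then to obtain the encoding by running the surjective program evaluation on the identity morphism on $B$.

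Concretely, I would first specialize the universal evaluator from Prop.~\ref{prop:us} to $A = I$, obtaining $\UK^{IB}\in \CCc(\PP\otimes I,B)=\CCc(\PP,B)$, and set $\dec^B := \UK^{IB}$. Under the Yoneda correspondence used in the proof of Prop.~\ref{prop:us}, the program evaluation at $A=I$ takes the explicit form
\[
\GK^{IB}_X\,:\,\tot\CCc(X,\PP)\ \epi\ \CCc(X,B),\qquad \GK^{IB}_X(F)\ =\ F\comp \UK^{IB},
\]
because $F\otimes I = F$.

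Next I would instantiate $X := B$. Since $\GK^{IB}_B$ is surjective by Def.~\ref{def:moncom}, the identity morphism $\id_B\in \CCc(B,B)$ must lie in its image: there exists a map $\enc^B \in \tot\CCc(B,\PP)$ such that
\[
\enc^B\comp \UK^{IB}\ =\ \GK^{IB}_B(\enc^B)\ =\ \id_B.
\]
Taking this $\enc^B$ as the encoding and $\dec^B = \UK^{IB}$ as the decoding yields the desired retraction $\enc^B\comp \dec^B = \id_B$, with $\enc^B$ a map (hence total and single-valued, so it deserves the monic arrow) and $\dec^B$ a general morphism of $\CCc$ (the universal evaluator, which is surjective in the appropriate sense, justifying the epi arrow).

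There is really no serious obstacle here; the only conceptual point is recognizing that "values of $B$" are already programmable via the universal evaluator, so encoding a generic element of $B$ as a program is the same as factoring $\id_B$ through $\PP$. Once the right choice $A=I$ is made, the result is an immediate consequence of the surjectivity built into Def.~\ref{def:moncom}.
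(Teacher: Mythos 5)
Your proposal is correct and follows essentially the same route as the paper: both take $\dec^B$ to be the universal evaluator $\UK^{IB}=\GK^{IB}_\PP(\id_\PP)$ at $A=I$, and obtain $\enc^B$ as any preimage of $\id_B$ under the surjection $\GK^{IB}_B\colon\tot\CCc(B,\PP)\epi\CCc(B,B)$, with the retraction identity $\enc^B\comp\dec^B=\id_B$ coming from the naturality of $\GK^{IB}$ (which is exactly what your explicit formula $\GK^{IB}_X(F)=F\comp\UK^{IB}$ encodes). The only difference is presentational: the paper chases the naturality square directly, while you substitute the Yoneda-derived closed form first.
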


\begin{remark}	
Note that there is no claim that either $\enc^B$ or $\dec^B$ is unique. Indeed, in nondegenerate monoidal computers, each type $B$ has many different encoding pairs $\enc^B, \dec^B$. However, once such a pair is chosen, the fact that $\enc^B$ is total and single-valued means that it assigns  a unique program code to each element of $B$. The fact that $\dec^B$ is not total means that some programs in $\PPp$ may not correspond to elements of $B$. 

Since Prop.~\ref{prop:retracts} says that the program evaluations make every type into a retract of $\PP$, and Prop.~\ref{prop:us} reduced the structure of monoidal computer to the evaluators for all types, it is natural to ask if the evaluators of all types can be reduced to the evaluators over the type $\PP$ of programs. Can all of the structure of a monoidal computer be derived from the structure of the type $\PP$ of programs? E.g., can the program evaluations be \emph{"uniformized"} by always encoding the input data of all types in $\PP$, performing the evaluations to get the outputs in $\PP$, and then decoding the outputs back to the originally given types? Can the type structure and the evaluation structure of a monoidal computer be reconstructed by unfolding the structure of $\PP$, as it is the case in models of $\lambda$-calculus? Is monoidal computer yet another categorical view of a partial applicative structure? 

The answer to all these question is positive \emph{just}\/ in the degenerate case of an essentially extensional monoidal computer. If the type structure of monoidal computer can be faithfully encoded in $\PP$, then there is a retract of $\PP$ which supports an extensional model of computation, i.e. allows assigning a unique program to each computation. 

If all evaluators can be derived by decoding the evaluators with the output type $\PP$, and if the decoding preserves the original evaluators on $\PP$, then all computation representable in monoidal computer must be provably total and single valued: it degerates into a cartesian closed category derived from a $C$-monoid. For details see  \cite[I.15-I.17]{
	Lambek-Scott:book}, and the references therein.
\end{remark}

\begin{proof} The claimed retraction $\begin{tikzar}{}
B \arrow[tail,bend left = 15]{r}{e} \&\PP \arrow[two heads,bend left = 15]{l}{d}
\end{tikzar}$ can be found using the following diagram: 

$$\begin{tikzar}{row sep=1em,column sep=1em}
\tot \CCc(\PP,\PP)  \arrow{dddddd}[swap]{\tot\CCc(\enc^B, \PP)}  \arrow[two heads]{rrrrrr}{\GK^{IB}_\PP}
\&\&\&\&\&\&
\CCc(\PP, B) \arrow{dddddd}{\CCc(\enc^B, B)}\\
\& \id_\PP \arrow[mapsto]{dddd} \arrow[mapsto]{rrrr} \&\&\&\& \dec^B \arrow[mapsto]{dddd} \\
\\
\\
\\
\& \enc^B \arrow[mapsto]{rrrr} \&\&\&\& \id_B\\ 
 \tot \CCc(B,\PP)  \arrow[two heads]{rrrrrr}[swap]{\GK^{IB}_B} 
 \&\&\&\&\&\& \CCc(B, B)
\end{tikzar}$$
While $\dec^B$ is defined to be \emph{the}\/ image of $\id_\PP$ along $\GK_\PP^{I\PP}$, $\enc^B$ is defined to be \emph{any}\/ inverse image of $\id_B$ along $\gamma^{I\PP}_B$, which must exist because $\gamma^{I\PP}_B$ is a surjection. 
so $$\GK_\PP^{I\PP}(\id_\PP) = \dec^B \qquad \mbox{and} \qquad  \GK_B^{I\PP}(\enc^B) = \id_B$$  The fact that $\enc^B\comp \dec^B = \id_B$ follows from the naturality of $\GK^{IB}$, which implies that the square in the diagram commutes, and therefore
\begin{multline*}
\enc^B \comp \dec^B\  = \ \CCc(\enc^B, B) \left(\dec^B\right)\  =\  \CCc(\enc^B, B) \circ \GK^{I\PP}_\PP\left(\id_B\right)\  =\\  =\ \GK^{I\PP}_B\circ \tot \CCc(\enc^B, \PP)\left(\id_B\right)\  =\  \GK^{I\PP}_B\left(\enc^B\right)\   =\  \id_B
\end{multline*}
\end{proof}

\begin{remark}
In \cite{PavlovicD:IC12} we only considered the \emph{basic}\/ monoidal computer, where all types are powers of $\PP$. In the standard model, programs are encoded as natural numbers, and all data are tuples of natural numbers, which can be recursively encoded as natural numbers. Prop.~\ref{prop:retracts} says that this must be the case in every computer. 
\end{remark}

\subsection{The Fundamental Theorem of Computability}\label{Sec-fund}
In this section we show that every monoidal computer validates the claim of Kleene's fundamental result, which he called the Second Recursion Theorem \cite{KleeneSC:ordinal,MoschovakisY:Kleene}.

\begin{theorem}\label{thm:kleene}
In every monoidal computer $\CCc$, for every computation $g\in \CCc(\PP A,B)$ there is a program $\Gamma\in \CCc(\PP)$ such that

\smallskip
\newcommand{\lhs}{g(\Gamma, a)}
\newcommand{\rhs}{\UKK \Gamma a}
\newcommand{\inputt}{\scriptstyle A}
\newcommand{\universal}{$\UK$}
\newcommand{\gee}{\it g}
\newcommand{\nameslang}{\scriptstyle B}
\newcommand{\program}{$\scriptstyle \Gamma$}
\newcommand{\EQLS}{\large =}
\newcommand{\progtype}{\scriptstyle \PP}
\def\JPicScale{.3}
\begin{center}
\ifx\JPicScale\undefined\def\JPicScale{1}\fi
\psset{unit=\JPicScale mm}
\psset{linewidth=0.3,dotsep=1,hatchwidth=0.3,hatchsep=1.5,shadowsize=1,dimen=middle}
\psset{dotsize=0.7 2.5,dotscale=1 1,fillcolor=black}
\psset{arrowsize=1 2,arrowlength=1,arrowinset=0.25,tbarsize=0.7 5,bracketlength=0.15,rbracketlength=0.15}
\begin{pspicture}(0,0)(155,102.5)
\psline[linewidth=0.75](95,70)(155,70)
\psline[linewidth=0.75](95,35.62)(95,10.62)
\psline[linewidth=0.75](155,45)(155,70)
\psline[linewidth=0.75](135,80)(135,70)
\rput(135,87.5){$\nameslang$}
\rput(135,-1.88){$\inputt$}
\psline[linewidth=0.75](120,45)(155,45)
\psline[linewidth=0.75](135,45)(135,4.38)
\psline[linewidth=0.75](95,10.62)(120,10.62)
\rput(133.12,56.88){\universal}
\rput(102.5,17.5){\program}
\rput(80,51.88){\EQLS}
\psline[linewidth=0.75](95,35.62)(120,10.62)
\psline[linewidth=0.75](107.5,57.5)(107.5,23.75)
\psline[linewidth=0.75](95,70)(120,45)
\psline[linewidth=0.75](15,70)(65,70)
\psline[linewidth=0.75](12.5,35.62)(12.5,10.62)
\psline[linewidth=0.75](65,45)(65,70)
\psline[linewidth=0.75](40,80)(40,70)
\rput(40,87.5){$\nameslang$}
\rput(55,-1.88){$\inputt$}
\psline[linewidth=0.75](15,45)(65,45)
\psline[linewidth=0.75](55,45)(55,4.38)
\psline[linewidth=0.75](12.5,10.62)(37.5,10.62)
\rput(20,17.5){\program}
\psline[linewidth=0.75](12.5,35.62)(37.5,10.62)
\psline[linewidth=0.75](25,45)(25,23.75)
\psline[linewidth=0.75](15,70)(15,45)
\rput(40,56.88){$\gee$}
\rput[l](26.88,33.12){$\progtype$}
\rput[l](109.38,35){$\progtype$}
\rput(40,102.5){$\lhs$}
\rput(135,102.5){$\rhs$}
\rput(80,102.5){\EQLS}
\end{pspicture}

\end{center}
We call $\Gamma$ \emph{Kleene's fixed program}\/ of $g$.
\end{theorem}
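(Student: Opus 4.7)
The strategy is the classical Kleene substitution/diagonal argument, transcribed into the string-diagram language of monoidal computer. The two ingredients are the universal evaluator $\UK$ and the partial evaluator $\SK$ furnished by Prop.~\ref{prop:us}, together with the copying service $\cmn$ on the type of programs.

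\emph{Step 1 (diagonalisation map).} I would first build a map $q \in \tot \CCc(\PP,\PP)$ that copies a program and feeds one copy to the partial evaluator as its own data input:
$$q \ :=\ \cmn^{\PP}\comp \SK^{(\PP A)B}.$$
Both $\cmn^{\PP}$ and $\SK^{(\PP A)B}$ are maps, so $q$ is a map. The defining property of $\SK$ from Prop.~\ref{prop:us}(b) gives, for every $p \in \CCc(\PP)$ and $a \in A$,
$$\UKK{q(p)}{a}\ =\ \UKK{\SKK{p}{p}}{a}\ =\ \UKK{p}{(p,a)}.$$

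\emph{Step 2 (programming the composite).} Next I would form the computation $h := (q\otimes A)\comp g \in \CCc(\PP \otimes A, B)$, which by construction acts as $h(p,a) = g(q(p),a)$. Applying the universal-evaluator clause, Prop.~\ref{prop:us}(a), to $h$ yields some program $P \in \tot\CCc(\PP)$ with
$$\UKK{P}{(p,a)}\ =\ g(q(p),a)\qquad \text{for all } p\in\CCc(\PP),\ a\in A.$$

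\emph{Step 3 (diagonalising on $P$).} Finally, I define the candidate fixed program by plugging $P$ into its own diagonal:
$$\Gamma\ :=\ q(P)\ =\ \SKK{P}{P}\ \in\ \tot\CCc(\PP).$$
This is a well-defined map because $q$ is a map and $P\in \tot\CCc(\PP)$. Chaining the two identities from Steps 1 and 2,
$$\UKK{\Gamma}{a}\ =\ \UKK{q(P)}{a}\ =\ \UKK{P}{(P,a)}\ =\ g(q(P),a)\ =\ g(\Gamma,a),$$
which is exactly the fixed-program equation displayed in the statement of Thm.~\ref{thm:kleene}.

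The only genuinely delicate point is pinning down the right typing instance of the partial evaluator when defining $q$: since $g$ is a computation of type $\PP\otimes A \to B$ and $P$ is a program for it, the appropriate instance is $\SK^{(\PP A)B}:\PP\otimes \PP \to \PP$, so that $\SKK{P}{P}$ is a program of nominal type $A\to B$, matching the arity of $\UK^{AB}$ on the left-hand side of the desired equation. Once the types are aligned, the proof reduces to a short diagrammatic chase of the equations (a) and (b) of Prop.~\ref{prop:us}, which amounts to sliding the diagonalising $\cmn$-box past the bracket-and-brace boxes representing $\SK$ and $\UK$.
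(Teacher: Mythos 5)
Your proof is correct and is essentially the paper's own argument: the paper likewise takes a program $G$ for the composite $(\cmn_\PP\otimes A)\comp(\SK\otimes A)\comp g$ (your $P$, obtained from Prop.~\ref{prop:us}(a) applied to your $h$) and sets $\Gamma=\SKK G G$, verifying the fixed-program equation by the same chain $g(\Gamma,a)=g(\SKK G G,a)=\UKK G{(G,a)}=\UKK{\SKK G G}{a}=\UKK{\Gamma}{a}$. The only difference is presentational: you name the diagonalisation map $q=\cmn\comp\SK$ explicitly and spell out the typing of $\SK^{(\PP A)B}$, whereas the paper folds these into its string diagrams.
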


\begin{proof}
Let $G$ be a program such that
\medskip
\newcommand{\inputt}{\scriptstyle A}
\newcommand{\universal}{$\UK$}
\newcommand{\gee}{\it g}
\newcommand{\nameslang}{\scriptstyle B}
\newcommand{\program}{$\scriptstyle G$}
\newcommand{\parteval}{\scriptstyle \SK}
\newcommand{\progtype}{\scriptstyle \PP}
\newcommand{\EQLS}{\large =}
\newcommand{\lhs}{g\big(\SKK p p, a\big)}
\newcommand{\rhs}{ \{G\}(p,a)}
\def\JPicScale{.3}
\begin{center}
\ifx\JPicScale\undefined\def\JPicScale{1}\fi
\psset{unit=\JPicScale mm}
\psset{linewidth=0.3,dotsep=1,hatchwidth=0.3,hatchsep=1.5,shadowsize=1,dimen=middle}
\psset{dotsize=0.7 2.5,dotscale=1 1,fillcolor=black}
\psset{arrowsize=1 2,arrowlength=1,arrowinset=0.25,tbarsize=0.7 5,bracketlength=0.15,rbracketlength=0.15}
\begin{pspicture}(0,0)(173.75,132.5)
\psline[linewidth=0.75](113.75,94.37)(173.75,94.37)
\psline[linewidth=0.75](113.75,60)(113.75,35)
\psline[linewidth=0.75](173.75,69.37)(173.75,94.37)
\psline[linewidth=0.75](153.75,106.25)(153.75,94.37)
\rput(153.75,112.5){$\nameslang$}
\rput(165.62,-1.25){$\inputt$}
\psline[linewidth=0.75](138.75,69.37)(173.75,69.37)
\psline[linewidth=0.75](165,69.37)(165,3.75)
\psline[linewidth=0.75](113.75,35)(138.75,35)
\rput(151.87,81.25){\universal}
\rput(121.25,41.87){\program}
\rput(98.75,76.25){\EQLS}
\psline[linewidth=0.75](113.75,60)(138.75,35)
\psline[linewidth=0.75](126.25,81.88)(126.25,48.12)
\psline[linewidth=0.75](113.75,94.37)(138.75,69.37)
\psline[linewidth=0.75](33.75,94.38)(83.75,94.37)
\psline[linewidth=0.75](83.75,69.37)(83.75,94.37)
\psline[linewidth=0.75](58.75,106.25)(58.75,94.37)
\rput(58.75,112.5){$\nameslang$}
\rput(74.38,-1.25){$\inputt$}
\psline[linewidth=0.75](33.75,69.38)(83.75,69.37)
\psline[linewidth=0.75](73.75,69.37)(73.75,3.75)
\psline[linewidth=0.75](43.75,69.38)(43.75,48.12)
\psline[linewidth=0.75](33.75,94.38)(33.75,69.38)
\rput(58.75,81.25){$\gee$}
\psline[linewidth=0.75](13.75,56.24)(35,56.24)
\psline[linewidth=0.75](55,36.25)(33.75,36.25)
\psline[linewidth=0.75](35,56.24)(55,36.25)
\psline[linewidth=0.75](13.75,56.24)(33.75,36.25)
\psline[linewidth=0.75](34.38,14.38)(34.38,3.75)
\newrgbcolor{userFillColour}{0.2 0 0.2}
\rput{0}(34.37,14.37){\psellipse[fillcolor=userFillColour,fillstyle=solid](0,0)(1.41,-1.41)}
\psline[linewidth=0.75](45,36.25)(45,25)
\psline[linewidth=0.75](24.38,45)(24.38,23.76)
\psline[linewidth=0.75](45,25)(34.38,14.38)
\psline[linewidth=0.75](24.38,23.76)(34.38,13.75)
\psline[linewidth=0.75](145,69.37)(145,3.75)
\rput(145,-0.62){$\progtype$}
\rput(34.38,-1.25){$\progtype$}
\rput(35,45){$\parteval$}
\rput[l](46.88,60){$\progtype$}
\rput[l](128.75,60.62){$\progtype$}
\rput(100,132.5){\EQLS}
\rput(60,132.5){$\lhs$}
\rput(150,132.5){$\rhs$}
\end{pspicture}

\end{center}
A Kleene fixed program $\Gamma$ can now be constructed by evaluating $G$ on itself, i.e. as $\Gamma = \SKK G G$, because

\smallskip
\newcommand{\oone}{g(\Gamma, a)}
\newcommand{\ttwo}{g\big([G, G], a\big)}
\newcommand{\tthree}{\{G\}(G, a)}
\newcommand{\ffour}{\big\{[G, G]\big\}a}
\newcommand{\ffive}{\left\{\Gamma \right\}a}
\renewcommand{\parteval}{\scriptstyle \SK}
\newcommand{\progH}{$\color{red} \Gamma$}
\def\JPicScale{.25}
\begin{center}
\ifx\JPicScale\undefined\def\JPicScale{1}\fi
\psset{unit=\JPicScale mm}
\psset{linewidth=0.3,dotsep=1,hatchwidth=0.3,hatchsep=1.5,shadowsize=1,dimen=middle}
\psset{dotsize=0.7 2.5,dotscale=1 1,fillcolor=black}
\psset{arrowsize=1 2,arrowlength=1,arrowinset=0.25,tbarsize=0.7 5,bracketlength=0.15,rbracketlength=0.15}
\begin{pspicture}(0,0)(400.25,150)
\psline[linewidth=0.75](147.5,113.12)(207.5,113.12)
\psline[linewidth=0.75](147.5,78.75)(147.5,53.75)
\psline[linewidth=0.75](207.5,88.12)(207.5,113.12)
\psline[linewidth=0.75](187.5,125)(187.5,113.12)
\rput(187.5,131.25){$\nameslang$}
\rput(198.75,1.25){$\inputt$}
\psline[linewidth=0.75](172.5,88.12)(207.5,88.12)
\psline[linewidth=0.75](198.75,88.12)(198.75,7.5)
\psline[linewidth=0.75](147.5,53.75)(172.5,53.75)
\rput(185.62,100){\universal}
\rput(155,60.62){\program}
\rput(132.5,95){\EQLS}
\psline[linewidth=0.75](147.5,78.75)(172.5,53.75)
\psline[linewidth=0.75](160,100.62)(160,66.88)
\psline[linewidth=0.75](147.5,113.12)(172.5,88.12)
\psline[linewidth=0.75](67.5,113.12)(117.5,113.12)
\psline[linewidth=0.75](117.5,88.12)(117.5,113.12)
\psline[linewidth=0.75](92.5,125)(92.5,113.12)
\rput(92.5,131.25){$\nameslang$}
\rput(107.5,1.25){$\inputt$}
\psline[linewidth=0.75](67.5,88.12)(117.5,88.12)
\psline[linewidth=0.75](107.5,88.12)(107.5,6.88)
\psline[linewidth=0.75](77.5,88.12)(77.5,66.87)
\psline[linewidth=0.75](67.5,113.12)(67.5,88.12)
\rput(92.5,100){$\gee$}
\psline[linewidth=0.75](47.5,75)(68.75,75)
\psline[linewidth=0.75](88.75,55)(67.5,55)
\psline[linewidth=0.75](68.75,75)(88.75,55)
\psline[linewidth=0.75](47.5,75)(67.5,55)
\psline[linewidth=0.75](68.12,33.12)(68.12,22.5)
\newrgbcolor{userFillColour}{0.2 0 0.2}
\rput{0}(68.13,33.13){\psellipse[fillcolor=userFillColour,fillstyle=solid](0,0)(1.41,-1.41)}
\psline[linewidth=0.75](78.75,55)(78.75,43.75)
\psline[linewidth=0.75](58.12,63.75)(58.12,42.5)
\psline[linewidth=0.75](78.75,43.75)(68.12,33.12)
\psline[linewidth=0.75](58.12,42.5)(68.12,32.5)
\psline[linewidth=0.75](178.75,88.12)(178.75,22.5)
\psline[linewidth=0.75](57.5,33.75)(57.5,8.75)
\psline[linewidth=0.75](57.5,8.75)(82.5,8.75)
\rput(65,15.62){\program}
\psline[linewidth=0.75](57.5,33.75)(82.5,8.75)
\psline[linewidth=0.75](166.88,34.38)(166.88,9.38)
\psline[linewidth=0.75](166.88,9.38)(191.88,9.38)
\rput(174.38,16.25){\program}
\psline[linewidth=0.75](166.88,34.38)(191.88,9.38)
\psline[linewidth=0.75](238.12,112.5)(298.12,112.5)
\psline[linewidth=0.75](298.12,87.5)(298.12,112.5)
\psline[linewidth=0.75](278.12,124.38)(278.12,112.5)
\rput(278.12,130.62){$\nameslang$}
\rput(289.38,0.62){$\inputt$}
\psline[linewidth=0.75](263.12,87.5)(298.12,87.5)
\psline[linewidth=0.75](289.38,87.5)(289.38,6.88)
\rput(276.24,99.38){\universal}
\rput(223.12,94.38){\EQLS}
\psline[linewidth=0.75](238.12,112.5)(263.12,87.5)
\psline[linewidth=0.75](260,35)(260,24.38)
\newrgbcolor{userFillColour}{0.2 0 0.2}
\rput{0}(260,35){\psellipse[fillcolor=userFillColour,fillstyle=solid](0,0)(1.41,-1.41)}
\psline[linewidth=0.75](270.62,87.5)(270.62,45.62)
\psline[linewidth=0.75](250,100.62)(250,44.38)
\psline[linewidth=0.75](270.62,45.62)(260,35)
\psline[linewidth=0.75](250,44.38)(260,34.38)
\psline[linewidth=0.75](249.38,35.62)(249.38,10.62)
\psline[linewidth=0.75](249.38,10.62)(274.38,10.62)
\rput(256.88,17.5){\program}
\psline[linewidth=0.75](249.38,35.62)(274.38,10.62)
\psline[linewidth=0.75](347.12,113.12)(400.25,113.12)
\psline[linewidth=0.75](400.25,88.12)(400.25,113.12)
\psline[linewidth=0.75](375.25,125)(375.25,113.12)
\rput(375.25,131.25){$\nameslang$}
\rput(390.25,1.25){$\inputt$}
\psline[linewidth=0.75](372.12,88.12)(400.25,88.12)
\psline[linewidth=0.75](390.25,88.12)(390.25,6.88)
\psline[linewidth=0.75](360.25,99.38)(360.25,66.87)
\psline[linewidth=0.75](347.12,113.12)(372.12,88.12)
\psline[linewidth=0.75](330.25,75)(351.5,75)
\psline[linewidth=0.75](371.5,55)(350.25,55)
\psline[linewidth=0.75](351.5,75)(371.5,55)
\psline[linewidth=0.75](330.25,75)(350.25,55)
\psline[linewidth=0.75](350.88,33.12)(350.88,22.5)
\newrgbcolor{userFillColour}{0.2 0 0.2}
\rput{90}(350.88,33.13){\psellipse[fillcolor=userFillColour,fillstyle=solid](0,0)(1.41,-1.4)}
\psline[linewidth=0.75](361.5,55)(361.5,43.75)
\psline[linewidth=0.75](340.88,63.75)(340.88,42.5)
\psline[linewidth=0.75](361.5,43.75)(350.88,33.12)
\psline[linewidth=0.75](340.88,42.5)(350.88,32.5)
\psline[linewidth=0.75](340.25,33.75)(340.25,8.75)
\psline[linewidth=0.75](340.25,8.75)(365.25,8.75)
\rput(347.75,15.62){\program}
\psline[linewidth=0.75](340.25,33.75)(365.25,8.75)
\rput(382.12,100.62){\universal}
\rput(312.5,95){\EQLS}
\rput(66.88,65){$\parteval$}
\rput(350.88,65){$\parteval$}
\newrgbcolor{userLineColour}{1 0.2 0.2}
\psline[linewidth=0.75,linecolor=userLineColour](44.5,104)(44.5,5)
\newrgbcolor{userLineColour}{1 0.2 0.2}
\psline[linewidth=0.75,linecolor=userLineColour](44.5,104)(92.5,56)
\rput(86.5,35){\progH}
\newrgbcolor{userLineColour}{1 0.2 0.2}
\psline[linewidth=0.75,linecolor=userLineColour](92.5,56)(92.5,5)
\newrgbcolor{userLineColour}{1 0.2 0.2}
\psline[linewidth=0.75,linecolor=userLineColour](44.5,5)(92.5,5)
\newrgbcolor{userLineColour}{1 0.2 0.2}
\psline[linewidth=0.75,linecolor=userLineColour](327.5,104)(327.5,5)
\newrgbcolor{userLineColour}{1 0.2 0.2}
\psline[linewidth=0.75,linecolor=userLineColour](327.5,104)(375.5,56)
\rput(369.5,35){\progH}
\newrgbcolor{userLineColour}{1 0.2 0.2}
\psline[linewidth=0.75,linecolor=userLineColour](375.5,56)(375.5,5)
\newrgbcolor{userLineColour}{1 0.2 0.2}
\psline[linewidth=0.75,linecolor=userLineColour](327.5,5)(375.5,5)
\rput(32.5,150){\EQLS}
\rput(132.5,150){\EQLS}
\rput(222.5,150){\EQLS}
\rput(312.5,150){\EQLS}
\rput(0,150){$\oone$}
\rput(82.5,150){$\ttwo$}
\rput(177.5,150){$\tthree$}
\rput(267.5,150){$\ffour$}
\rput(345,150){$\ffive$}
\end{pspicture}

\end{center}
\vspace{-2\baselineskip}
\end{proof}

The Fundamental Theorem allows constructing convenient representations of integers, arithmetic, primitive recursion, and unbounded search, and thus proving that monoidal computer is Turing complete. In \cite{PavlovicD:IC12}, this was done by using the $\lambda$-calculus constructions. In the next section, we provide yet another proof, by implementing Turing machines.

\bigskip
\section{Coalgebraic view
}\label{Sec-four}

So far, we formalized the \emph{programs $\to\!\!\!\!\to$ computations}\/ correspondence from the left hand column of Table~\ref{table}. But presenting computations in the form $XA\tto{\{F\}}B$ only displays their interfaces, and hides the actual process of computation. To capture that, we switch to the right hand column of Table~\ref{table}, and study the correspondence \emph{adaptive programs $\to\!\!\!\!\to$ processes}. 

A \emph{process}\/ is presented as a morphism in the form $X\otimes A\to X\otimes B$. We interpreted the morphisms in the form $X\otimes A\to B$ as $X$-indexed families of computations with the inputs from $A$ and the outputs in $B$. The indices of type $X$ can be thought of as the states of the world, determining which of the family of computations should be run. Interpreted along the same lines, a process $X\otimes A\tto p X\otimes B$ does not only provide the output of type $B$, but it also updates the state in $X$. This is what state machines also do, and that is why the morphisms $X\times A \tto m X\times B$ in cartesian categories are interpreted as machines. In a sufficiently complete cartesian category, every such machine $m$ induces a 
machine homomorphism $X\tto{\ana m} \xxpp A B$, which assigns to each state $x\in X$ a  \emph{behavior} $\ana m x \in \xxpp A B$, unfolded by the final $AB$-machine $\xxpp A B \times A\tto \xi \xxpp A B \times B$. This was displayed in Table~\ref{table}.  A monoidal computer, though, turns out to provide a much stronger form of representation for its morphisms in the form $X\otimes A\tto p X\otimes B$: each of them induces a 
machine homomorphism $X \tto P \PP$. This $P$ is a \emph{program}\/ for the process $p$. Note that there may be many programs for each process; but on the other hand, all programs, for all processes of all possible input types $A$ and output types $B$, are represented in the same type of programs $\PP$. This makes a fundamental difference, distinguishing machines $m$ from computational processes $p$, which include life \cite{NeumannJ:self,TuringA:morphogenesis}
\footnote{
	Both Turing and von Neumann devoted a lot of attention to studying life as a manifestation of computational processes. Their ideas have been adopted in biology \cite{ArbibM:book-brains,Synth-bio}, but most computer scientists remain skeptical.}
Every family of machines is designed in a suitable engineering language; but all computational processes can be programmed in any Turing complete language, just like all processes of life are programmed in the language of genes. That is why the morphisms $X\otimes A\tto p X\otimes B$ are \emph{processes}, and not merely machines. Their representations $X\tto P \PP$ are not merely $X$-indexed programs, but they are \emph{adaptive}\/ programs, since they adapt to the state changes, in the sense that we now describe.

\begin{definition}\label{def:process}
A morphism $XA\tto{p} XB$ in a monoidal category $\CCc$ is an \emph{$AB$-process}. If $YA\tto{r} YB$ is another $AB$-process, then an $AB$-process homomorphism is a $\CCc$-morphism $X\tto f Y$ such that $(f\otimes A)\comp r = p\comp (f\otimes B)$. We denote by $\CCc_{AB}$ the category of $AB$-processes.
\end{definition}

\begin{definition}\label{def:um}
A \emph{universal process} in a monoidal category $\CCc$ is carried by a \emph{universal state space}\/ $\RR\in |\CCc|$, such that for every pair $A,B\in |\CCc|$ there is a weakly final $AB$-process $\RR A \tto{\RRun^{AB}} \RR B$. The weak finality means that for every process $p\in \CCc(XA,XB)$ there is an \emph{$X$-adaptive program} $P\in \CCc^\natural (X,\RR)$ such that
\bear
\Run{P(x)}_\RR a\ \ & = &\ \  P(p_X(x,a))\\  
\Run{P(x)}_B a\ \  & = & \ \ p_B (x,a)\\[2ex] 
\def\JPicScale{.65}\newcommand{\aah}{\scriptstyle P}\newcommand{\RRsmall}{\scriptstyle \RR}\renewcommand{\dh}{\scriptstyle B}\newcommand{\ahh}{\RRun
 }\newcommand{\bhh}{\scriptstyle A} \newcommand{\xxh}{\scriptstyle X}
\ifx\JPicScale\undefined\def\JPicScale{1}\fi
\psset{unit=\JPicScale mm}
\psset{linewidth=0.3,dotsep=1,hatchwidth=0.3,hatchsep=1.5,shadowsize=1,dimen=middle}
\psset{dotsize=0.7 2.5,dotscale=1 1,fillcolor=black}
\psset{arrowsize=1 2,arrowlength=1,arrowinset=0.25,tbarsize=0.7 5,bracketlength=0.15,rbracketlength=0.15}
\begin{pspicture}(0,0)(23.12,28.75)
\rput(18.75,28.75){$\dh$}
\psline(6.88,-9.38)(6.88,1.87)
\psline(18.88,6.87)(18.75,25.62)
\rput(16.88,1.87){$\ahh$}
\rput(5,-11.25){$\aah$}
\psline(18.75,-17.5)(18.88,-3.13)
\pspolygon(1.88,-4.39)
(11.86,-14.37)
(1.88,-14.37)(1.88,-4.39)
\psline(1.88,6.87)
(11.88,-3.12)
(23.12,-3.13)
(23.12,6.87)
(5.62,6.87)
(23.12,6.87)
(5.62,6.87)(1.88,6.87)
\psline(6.88,6.88)(6.88,25.62)
\psline(6.88,-18.12)(6.88,-14.38)
\rput(6.88,-21.25){$\xxh$}
\rput[r](6.25,-1.88){$\RRsmall$}
\rput(18.75,-21.25){$\bhh$}
\rput(6.88,28.75){$\RRsmall$}
\end{pspicture}
\ \  & = &\ \ \   \def\JPicScale{.65}\newcommand{\aah}{\scriptstyle P}
 \newcommand{\RRsmall}{\scriptstyle \RR}\newcommand{\xxh}{\scriptstyle X}\renewcommand{\dh}{\scriptstyle B}\newcommand{\ahh}{p}\newcommand{\bhh}{\scriptstyle A} 
\ifx\JPicScale\undefined\def\JPicScale{1}\fi
\psset{unit=\JPicScale mm}
\psset{linewidth=0.3,dotsep=1,hatchwidth=0.3,hatchsep=1.5,shadowsize=1,dimen=middle}
\psset{dotsize=0.7 2.5,dotscale=1 1,fillcolor=black}
\psset{arrowsize=1 2,arrowlength=1,arrowinset=0.25,tbarsize=0.7 5,bracketlength=0.15,rbracketlength=0.15}
\begin{pspicture}(0,0)(20,29.38)
\rput(15,29.38){$\dh$}
\psline(15.62,-17.5)(15.62,-3.12)
\psline(15,6.88)(15,26.25)
\rput(8.75,1.88){$\ahh$}
\psline(-1.88,3.12)
(-1.88,-3.12)
(20,-3.12)
(20,6.88)
(-1.88,6.88)
(16.25,6.86)
(-1.88,6.88)(-1.88,3.12)
\psline(3.12,-18.12)(3.12,-3.12)
\psline(3.12,6.88)(3.12,15.62)
\rput(3.12,-21.25){$\xxh$}
\psline(3.12,21.25)(3.12,26.25)
\rput(1.5,18.88){$\aah$}
\pspolygon(-1.62,25.74)
(8.36,15.75)
(-1.62,15.75)(-1.62,25.74)
\rput[r](1.88,11.25){$\xxh$}
\rput(15.62,-21.25){$\bhh$}
\rput(3.12,29.38){$\RRsmall$}
\end{pspicture}
 
\qquad \qquad \qquad 
\begin{tikzar}{row sep=3em,column sep=1.5em}
\& \RR \otimes B\\   
\RR\otimes A \ar{ur}{\RRun}   \& \&  X\otimes B \ar[dashed]{ul}[swap]{P \otimes B} 
 \\ 
 \& X \otimes A \ar{ur}[swap]{p} \ar[dashed]{ul}{P \otimes A} \end{tikzar}
\eear
\end{definition}

\begin{theorem}\label{thm:moncom-coalg}
Let $\CCc$ be a symmetric monoidal category with data services. Then $\CCc$ is a monoidal computer if and only if it has a universal process. The type $\PP$ of programs coincides with the universal state space $\RR$. 
\end{theorem}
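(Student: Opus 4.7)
The plan is to prove the two directions by identifying $\PP$ with $\RR$, using the Fundamental Theorem (Thm~\ref{thm:kleene}) for one direction and the weak finality of the universal process for the other.

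For the forward direction, set $\RR := \PP$ and take the candidate universal process to be the universal evaluator with composite output type,
\[
\RRun^{AB}\ :=\ \UK^{A(\PP \otimes B)}\ :\ \PP \otimes A \to \PP \otimes B.
\]
Given an arbitrary process $p\in\CCc(X\otimes A, X\otimes B)$, I would build the auxiliary morphism
\[
g\ =\ (\SK\otimes B)\comp(\PP\otimes p)\ :\ \PP \otimes X \otimes A\to \PP \otimes B,
\]
whose content is $g(G,x,a) = (\SKK{G}{x'},b)$ with $(x',b)=p(x,a)$. Applying Thm~\ref{thm:kleene} to $g$, regarded as having program slot $\PP$ and data slot $X\otimes A$, produces a program $\Gamma\in\CCc(\PP)$ with $\{\Gamma\}^{(XA)(\PP B)}(x,a)=g(\Gamma,x,a)$. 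Take $P := \SKK{\Gamma}{-}: X\to\PP$; this is a map because $\Gamma$ and $\SK$ are. The required adaptive program identity then follows from the chain of definitional unfoldings
\[
\RRun(P(x),a)\ =\ \{\SKK{\Gamma}{x}\}(a)\ =\ \{\Gamma\}(x,a)\ =\ (\SKK{\Gamma}{x'},b)\ =\ (P(x'),b),
\]
i.e.\ $(P\otimes B)\comp p=\RRun\comp(P\otimes A)$.

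For the backward direction, set $\PP:=\RR$ and build the evaluators demanded by Prop.~\ref{prop:us}. Define the universal evaluator by erasing the updated-state coordinate,
\[
\UK^{AB}\ :=\ (\cun_\RR \otimes B)\comp \RRun^{AB}.
\]
To verify its universality, view any $f\in\CCc(A,B)$ as an $I$-state process $I\otimes A\to I\otimes B$; weak finality yields a map $F:I\to\RR$ with $(F\otimes B)\comp f=\RRun\comp(F\otimes A)$, and post-composing with $\cun_\RR\otimes B$, together with $\cun_\RR\comp F=\cun_I=\id_I$ (which holds because $F$ is a map), gives $\{F\}^{AB}=f$. For the partial evaluator, use copying on $\PP\otimes A$ to form the state-preserving process
\[
p:(\PP\otimes A)\otimes B\to (\PP\otimes A)\otimes C,\qquad p(G,a,b)\ =\ \bigl((G,a),\,\UK^{(AB)C}(G,a,b)\bigr).
\]
Weak finality of $\RRun^{BC}$ applied to $p$ produces a map $\PP\otimes A\to\PP$, which we christen $\SK^{(AB)C}$; applying $\cun_\RR\otimes C$ to the resulting adaptive program equation collapses it to $\{\SKK{G}{a}\}^{BC}(b)=\{G\}^{(AB)C}(a,b)$, which is the defining identity of a partial evaluator.

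The main obstacle is the fixed-point reasoning in the forward direction: the adaptive program $P$ is genuinely self-referential, since evaluating $P(x)$ on $a$ must simultaneously produce the output $b$ \emph{and} reconstitute $P(x')$ in its first coordinate, so $P$ cannot be specified directly. The trick is to posit a single global program $\Gamma$ and set $P=\SKK{\Gamma}{-}$: Thm~\ref{thm:kleene} supplies the requisite self-referential $\Gamma$, while the partial evaluator $\SK$ automatically closes the loop by packaging $\Gamma$ with the updated state $x'$ into the very token $P(x')$ that the right-hand side demands. The remaining work is type bookkeeping --- regrouping $\PP\otimes X\otimes A$ so that Thm~\ref{thm:kleene} applies cleanly, and, in the backward direction, verifying that the witnesses produced by weak finality are maps (hence live in $\tot\CCc$) so that the totality hypotheses of Def.~\ref{def:moncom} are met.
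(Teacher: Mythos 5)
Your proposal is correct, and the direction from monoidal computer to universal process is essentially the paper's own argument: the same weakly final process $\RRun^{AB}=\UK^{A(\PP B)}$, the same auxiliary morphism (your $g$ is the paper's $\widehat p = (\PP\otimes p)\comp(\SK\otimes B)$), the same Kleene fixed program, and the same adaptive program $P(x)=\SKK{\Gamma}{x}$. The converse direction, however, genuinely diverges. The paper also defines $\UK^{AB}=\RRun^{AB}\comp(\cun\otimes B)$, but then verifies the full $X$-indexed surjectivity \eqref{eq:yon} directly: given any $h\in\CCc(X\otimes A,B)$ it forms the process $\widehat h = (\cmn\otimes A)\comp(X\otimes h)$, which copies the state and feeds one copy forward as the ``updated'' state, and then deletes that state from the resulting adaptive-program equation. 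You instead verify only clause \emph{(a)} of Prop.~\ref{prop:us} (via $I$-state processes) and construct the partial evaluator \emph{(b)} from scratch, by applying weak finality of $\RRun^{BC}$ to the state-preserving process on $\PP\otimes A$ that outputs $\UK^{(AB)C}$, and then invoking Prop.~\ref{prop:us} to assemble the program evaluations. Both routes are sound; the paper's copying trick gets \eqref{eq:yon} in one stroke and never needs to exhibit $\SK$ coalgebraically, while your route shows the extra fact that the partial evaluators themselves arise as anamorphisms into the universal process, at the cost of routing the naturality argument through Prop.~\ref{prop:us}. Two small points of hygiene: your composite $(\cun_\RR\otimes B)\comp\RRun^{AB}$ is written in the wrong order for the paper's diagrammatic convention for $\comp$ (it should be $\RRun^{AB}\comp(\cun_\RR\otimes B)$); and the claim that $\Gamma$ is a map is not part of the statement of Thm.~\ref{thm:kleene} but follows from its proof ($\Gamma=\SKK{G}{G}$ with $G$ and $\SK$ in $\tot\CCc$), which is worth saying explicitly since the totality of $P$ depends on it.
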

%
%
%

\begin{proof}
Given a weakly final $AB$-process $\RR\otimes A \tto{\RRun} \RR\otimes B$, we show that 
\bear
\UK^{AB} & = &  \left(\RR\otimes A \tto{\RRun^{AB}} \RR\otimes B \tto{\cun\otimes B} B\right)
\eear 
is a universal evaluator, and thus makes $\CCc$ into a monoidal computer. Towards proving \eqref{eq:yon}, suppose that we are given a computation $X\otimes A \tto h B$, and consider the process 
\bear
\widehat h & = & \left( X\otimes A \tto {\cmn \otimes A} X\otimes X \otimes A \tto{X\otimes h} X\otimes B\right)
\eear
By Def.~\ref{def:um}, there is then an $X$-adaptive program $H\in \tot \CCc(\RR)$ satisfying  the rightmost equation in the next diagram.
\[
\newcommand{\program}{$\scriptstyle H$}
\newcommand{\gee}{$h$}
\newcommand{\geenohat}{\scriptstyle \widehat h}
\newcommand{\Progtype}{}
\newcommand{\Progk}{\scriptstyle X}
\newcommand{\Progm}{\scriptstyle A}
\newcommand{\Progn}{\scriptstyle B}
\newcommand{\RRsmall}{\scriptstyle \RR}
\newcommand{\ukk}{{\color{red} \scriptstyle\UK}}
\newcommand{\universalmkn}{\scriptstyle \RRun}
\newcommand{\EQLS}{=}
\def\JPicScale{.3}
\ifx\JPicScale\undefined\def\JPicScale{1}\fi
\psset{unit=\JPicScale mm}
\psset{linewidth=0.3,dotsep=1,hatchwidth=0.3,hatchsep=1.5,shadowsize=1,dimen=middle}
\psset{dotsize=0.7 2.5,dotscale=1 1,fillcolor=black}
\psset{arrowsize=1 2,arrowlength=1,arrowinset=0.25,tbarsize=0.7 5,bracketlength=0.15,rbracketlength=0.15}
\begin{pspicture}(0,0)(360,99.06)
\psline[linewidth=0.75](285,35)(285,0)
\psline[linewidth=0.75](275,35)(255,55)
\psline[linewidth=0.75](255,55)(295,55)
\psline[linewidth=0.75](275,35)(295,35)
\psline[linewidth=0.75](295,55)(295,35)
\psline[linewidth=0.75](285,90)(285,55)
\psline[linewidth=0.75](265,45)(265,15)
\rput(265,-5){$\Progk$}
\rput(285,-5){$\Progm$}
\rput(285,95){$\Progn$}
\rput(280,45){$\universalmkn$}
\psline[linewidth=0.75](255,5)(275,5)
\rput(260,10){\program}
\psline[linewidth=0.75](255,25)(275,5)
\psline[linewidth=0.75](255,25)(255,5)
\psline[linewidth=0.75](10,35)(10,5)
\psline[linewidth=0.75](30,35)(30,5)
\psline[linewidth=0.75](0,35)(0,55)
\psline[linewidth=0.75](0,55)(39.38,55)
\psline[linewidth=0.75](0,35)(39.38,35)
\psline[linewidth=0.75](39.38,55)(39.38,35)
\psline[linewidth=0.75](30,90)(30,55)
\rput(10,0){$\Progk$}
\rput(30,0){$\Progm$}
\rput(30,95){$\Progn$}
\rput(55,44.38){\EQLS}
\rput(20,45){\gee}
\rput(360,-5){}
\psline[linewidth=0.75](95,35)(95,5)
\psline[linewidth=0.75](115,35)(115,5)
\psline[linewidth=0.75](85,35)(85,55)
\psline[linewidth=0.75](85.62,55)(125,55)
\psline[linewidth=0.75](85,35)(124.38,35)
\psline[linewidth=0.75](125,55)(125,35)
\psline[linewidth=0.75](115,90)(115,55)
\rput(95,0){$\Progk$}
\rput(115,0){$\Progm$}
\rput(115,95){$\Progn$}
\rput(105,45){\gee}
\psline[linewidth=0.75](75,50)(75,40)
\psline[linewidth=0.75](95,70)(75,50)
\psline[linewidth=0.75](95,20)(75,40)
\psline[linewidth=0.75](95,80)(95,70)
\rput{90}(95,20){\psellipse[fillstyle=solid](0,0)(1.57,-1.56)}
\rput{90}(95,80){\psellipse[fillstyle=solid](0,0)(1.57,-1.56)}
\psline(70,65)(130,65)
\psline(70,15)(130,15)
\psline(130,15)(130,65)
\psline(70,15)(70,65)
\rput(125,23.75){$\geenohat$}
\rput(235,45){\EQLS}
\psline[linewidth=0.75](265,5)(265,0)
\psline[linewidth=0.75](265,70)(265,55)
\rput{90}(265,70){\psellipse[fillstyle=solid](0,0)(1.57,-1.56)}
\rput[r](92.5,75){$\Progk$}
\newrgbcolor{userLineColour}{1 0 0.4}
\psline[linecolor=userLineColour](250,75)(300,75)
\newrgbcolor{userLineColour}{1 0 0.4}
\psline[linecolor=userLineColour](270,30)(300,30)
\newrgbcolor{userLineColour}{1 0 0.4}
\psline[linecolor=userLineColour](300,30)(300,75)
\newrgbcolor{userLineColour}{1 0 0.4}
\psline[linecolor=userLineColour](250,50)(250,75)
\rput(295,65){$\ukk$}
\rput[r](262.5,65){$\RRsmall$}
\rput[l](267.5,22.5){$\RRsmall$}
\newrgbcolor{userLineColour}{1 0 0.4}
\psline[linecolor=userLineColour](250,50)(270,30)
\rput(145,44.38){\EQLS}
\psline[linewidth=0.75](185,35)(185,5)
\psline[linewidth=0.75](205,35)(205,5)
\psline[linewidth=0.75](175,35)(175,55)
\psline[linewidth=0.75](175.62,55)(215,55)
\psline[linewidth=0.75](175,35)(214.38,35)
\psline[linewidth=0.75](215,55)(215,35)
\psline[linewidth=0.75](205,90)(205,55)
\rput(185,0){$\Progk$}
\rput(205,0){$\Progm$}
\rput(195,45){\gee}
\psline[linewidth=0.75](165,50)(165,40)
\psline[linewidth=0.75](185,70)(165,50)
\psline[linewidth=0.75](185,20)(165,40)
\rput{90}(185,20){\psellipse[fillstyle=solid](0,0)(1.57,-1.56)}
\rput{90}(185,97.5){\psellipse[fillstyle=solid](0,0)(1.57,-1.56)}
\psline(160,65)(220,65)
\psline(160,15)(220,15)
\psline(220,15)(220,65)
\psline(160,15)(160,65)
\rput(215,23.75){$\geenohat$}
\psline[linewidth=0.75](185,97.5)(185,85)
\psline[linewidth=0.75](175,75)(195,75)
\rput(180,80){\program}
\psline[linewidth=0.75](175,95)(195,75)
\psline[linewidth=0.75](175,95)(175,75)
\psline[linewidth=0.75](185,75)(185,70)
\rput[l](187.5,90){$\RRsmall$}
\rput(205,95){$\Progn$}
\end{pspicture}

\]
The middle equation holds because $H$ is in $\tot \CCc$, i.e. a comonoid homomorphism. Deleting the state update from the process yields \eqref{eq:yon}.

The other way around, if $\CCc$ is a monoidal computer, with universal evaluators for all pairs of types, we claim that the weakly final $AB$-process is  
\bear
\RRun^{AB} & = & \left(\PP\otimes A \tto{\UK^{A(\PP B)}} \PP \otimes B \right)
\eear
To prove the claim, take an arbitrary $AB$-process $X\otimes A \tto p X\otimes B$, and postcompose it with the partial evaluator on $X$, to get
\bear
\widehat p & = & \left(\PP\otimes X\otimes A \tto{\PP \otimes p} \PP \otimes X\otimes B \tto{\SK^{XB\PP}\otimes B}\PP\otimes B \right)
\eear
Using the Fundamental Theorem of Computability, Thm.~\ref{thm:kleene}, construct a Kleene's fixed point $\widehat P\in \CCc(\PP)$ of $\widehat p$.
\newcommand{\inputt}{\scriptstyle A}
\newcommand{\XH}{\scriptstyle X}
\newcommand{\PPh}{\scriptstyle \PP}
\newcommand{\SKh}{\SK}
\newcommand{\universal}{$\UK$}
\newcommand{\gee}{p}
\newcommand{\mhh}{\widehat p}
\newcommand{\nameslang}{\scriptstyle B}
\newcommand{\program}{$\scriptstyle \widehat P$}
\newcommand{\EQLS}{\large =}
\def\JPicScale{.3}
\begin{center}
\ifx\JPicScale\undefined\def\JPicScale{1}\fi
\psset{unit=\JPicScale mm}
\psset{linewidth=0.3,dotsep=1,hatchwidth=0.3,hatchsep=1.5,shadowsize=1,dimen=middle}
\psset{dotsize=0.7 2.5,dotscale=1 1,fillcolor=black}
\psset{arrowsize=1 2,arrowlength=1,arrowinset=0.25,tbarsize=0.7 5,bracketlength=0.15,rbracketlength=0.15}
\begin{pspicture}(0,0)(210,135.62)
\psline[linewidth=0.75](135,80)(210,80)
\psline[linewidth=0.75](135,45.62)(135,20.62)
\psline[linewidth=0.75](210,55)(210,80)
\psline[linewidth=0.75](170,130)(170,80)
\rput(200,135){$\nameslang$}
\rput(200,7.5){$\inputt$}
\psline[linewidth=0.75](160,55)(210,55)
\psline[linewidth=0.75](170,55)(170,14.38)
\psline[linewidth=0.75](135,20.62)(160,20.62)
\rput(185,67.5){\universal}
\rput(142.5,27.5){\program}
\rput(117.5,65){\EQLS}
\psline[linewidth=0.75](135,45.62)(160,20.62)
\psline[linewidth=0.75](147.5,67.5)(147.5,33.75)
\psline[linewidth=0.75](135,80)(160,55)
\psline[linewidth=0.75](40,80)(90,80)
\psline[linewidth=0.75](90,55)(90,80)
\psline[linewidth=0.75](80,130.62)(80,80)
\rput(80,135.62){$\nameslang$}
\rput(80,8.12){$\inputt$}
\psline[linewidth=0.75](40,55)(90,55)
\psline[linewidth=0.75](80,55)(80,14.38)
\psline[linewidth=0.75](20,100)(20,33.75)
\psline[linewidth=0.75](40,80)(40,55)
\rput(65,66.88){$\gee$}
\psline[linewidth=0.75](18.12,120)(43.12,120)
\psline[linewidth=0.75](37.5,100)(62.5,100)
\psline[linewidth=0.75](18.12,120)(38.12,100)
\psline[linewidth=0.75](43.12,120)(63.12,100)
\psline[linewidth=0.75](51.88,99.38)(51.88,80)
\psline[linewidth=0.75](20,100)(29.38,108.75)
\psline[linewidth=0.75](51.88,130.63)(51.88,111.25)
\psline[linewidth=0.75](51.88,54.38)(51.88,15)
\psline[linewidth=0.75](8.75,45.62)(8.75,20.62)
\psline[linewidth=0.75](8.75,20.62)(33.75,20.62)
\rput(16.25,27.5){\program}
\psline[linewidth=0.75](8.75,45.62)(33.75,20.62)
\psline[linewidth=0.5](15,125)(95,125)
\psline[linewidth=0.5](15,50)(95,50)
\psline[linewidth=0.5](95,125)(95,50)
\psline[linewidth=0.5](15,125)(15,50)
\psline[linewidth=0.75](200,55)(200,14.38)
\psline[linewidth=0.75](200,130)(200,80)
\rput(170,135){$\PPh$}
\rput(51.88,135){$\PPh$}
\rput(51.88,8.75){$\XH$}
\rput(170,8.12){$\XH$}
\rput[r](50,90){$\XH$}
\rput(88.12,110){$\mhh$}
\rput(40,110){$\SKh$}
\end{pspicture}

\end{center}
The $X$-adaptive program $P\in \tot\CCc(\PP)$ corresponding to the process $p \in \CCc(XA, XB)$ is now $P(x) = \left[\widehat {P}, x\right]^{XB\PP}$.
\newcommand{\Mhh}{\color{red} \scriptstyle P}
\def\JPicScale{.3}
\begin{center}
\ifx\JPicScale\undefined\def\JPicScale{1}\fi
\psset{unit=\JPicScale mm}
\psset{linewidth=0.3,dotsep=1,hatchwidth=0.3,hatchsep=1.5,shadowsize=1,dimen=middle}
\psset{dotsize=0.7 2.5,dotscale=1 1,fillcolor=black}
\psset{arrowsize=1 2,arrowlength=1,arrowinset=0.25,tbarsize=0.7 5,bracketlength=0.15,rbracketlength=0.15}
\begin{pspicture}(0,0)(210,135.62)
\psline[linewidth=0.75](157.5,80)(210,80)
\psline[linewidth=0.75](210,55)(210,80)
\psline[linewidth=0.75](170,130)(170,80)
\rput(200,135){$\nameslang$}
\rput(200,7.5){$\inputt$}
\psline[linewidth=0.75](183.12,55)(210,55)
\rput(190.62,66.88){\universal}
\rput(110,65){\EQLS}
\psline[linewidth=0.75](157.5,80)(182.5,55)
\psline[linewidth=0.75](40,80)(90,80)
\psline[linewidth=0.75](90,55)(90,80)
\psline[linewidth=0.75](80,130.62)(80,80)
\rput(80,135.62){$\nameslang$}
\rput(80,8.12){$\inputt$}
\psline[linewidth=0.75](40,55)(90,55)
\psline[linewidth=0.75](80,55)(80,14.38)
\psline[linewidth=0.75](40,80)(40,55)
\rput(65,66.88){$\gee$}
\psline[linewidth=0.75](18.12,120)(43.12,120)
\psline[linewidth=0.75](37.5,100)(62.5,100)
\psline[linewidth=0.75](18.12,120)(38.12,100)
\psline[linewidth=0.75](43.12,120)(63.12,100)
\psline[linewidth=0.75](51.88,99.38)(51.88,80)
\psline[linewidth=0.75](27.5,103.75)(27.5,110.62)
\psline[linewidth=0.75](51.88,130.63)(51.88,111.25)
\psline[linewidth=0.75](51.88,54.38)(51.88,15)
\psline[linewidth=0.75](16.25,115.62)(16.25,90.62)
\psline[linewidth=0.75](16.25,90.62)(41.25,90.62)
\rput(25,96.88){\program}
\psline[linewidth=0.75](16.25,115.62)(41.25,90.62)
\newrgbcolor{userLineColour}{1 0.2 0.2}
\psline[linewidth=0.5,linecolor=userLineColour](11.25,133.75)(35,133.75)
\newrgbcolor{userLineColour}{1 0.2 0.2}
\psline[linewidth=0.5,linecolor=userLineColour](11.25,85.62)(67.5,85.62)
\newrgbcolor{userLineColour}{1 0.2 0.2}
\psline[linewidth=0.5,linecolor=userLineColour](11.25,133.75)(11.25,85.62)
\psline[linewidth=0.75](200,55)(200,14.38)
\psline[linewidth=0.75](200,130)(200,80)
\rput(170,135){$\PPh$}
\rput(51.88,135){$\PPh$}
\rput(51.88,8.75){$\XH$}
\rput(170,7.5){$\XH$}
\rput(40,110){$\SKh$}
\newrgbcolor{userLineColour}{1 0.2 0.2}
\psline[linewidth=0.5,linecolor=userLineColour](67.5,85.62)(67.5,100.62)
\newrgbcolor{userLineColour}{1 0.2 0.2}
\psline[linewidth=0.5,linecolor=userLineColour](35,133.75)(67.5,101.25)
\rput[tl](13.12,131.88){$\Mhh$}
\psline[linewidth=0.75](136.25,56.25)(161.25,56.25)
\psline[linewidth=0.75](155.62,36.25)(180.62,36.25)
\psline[linewidth=0.75](136.25,56.25)(156.25,36.25)
\psline[linewidth=0.75](161.25,56.25)(181.25,36.25)
\psline[linewidth=0.75](170,35.62)(170,14.38)
\psline[linewidth=0.75](145.62,40)(145.62,46.88)
\psline[linewidth=0.75](170,66.88)(170,47.5)
\psline[linewidth=0.75](134.38,51.88)(134.38,26.88)
\psline[linewidth=0.75](134.38,26.88)(159.38,26.88)
\rput(143.12,33.12){\program}
\psline[linewidth=0.75](134.38,51.87)(159.38,26.87)
\newrgbcolor{userLineColour}{1 0.2 0.2}
\psline[linewidth=0.5,linecolor=userLineColour](129.38,70)(153.12,70)
\newrgbcolor{userLineColour}{1 0.2 0.2}
\psline[linewidth=0.5,linecolor=userLineColour](129.38,21.88)(185.62,21.88)
\newrgbcolor{userLineColour}{1 0.2 0.2}
\psline[linewidth=0.5,linecolor=userLineColour](129.38,70)(129.38,21.88)
\rput(158.12,46.25){$\SKh$}
\newrgbcolor{userLineColour}{1 0.2 0.2}
\psline[linewidth=0.5,linecolor=userLineColour](185.62,21.88)(185.62,36.88)
\newrgbcolor{userLineColour}{1 0.2 0.2}
\psline[linewidth=0.5,linecolor=userLineColour](153.12,70)(185.62,37.5)
\rput[tl](131.25,68.12){$\Mhh$}
\end{pspicture}

\end{center}
This completes the proof that $\UK^{A(B\PP)}$ satisfies definition \ref{def:um} of weakly final $AB$-process, and that $\PP$ is thus not only a type of programs, but also a universal state space.
\end{proof}

\bigskip
\section{Computability
}\label{Sec-five}
In the remaining two sections we show how to run Turing machines in a monoidal computer, and how to measure their complexity. But a coalgebraic treatment of Turing machines as machines, in the sense discussed at the beginning of Sec.~\ref{Sec-four}, would only display their behaviors, i.e. what rewrite and which move of the machine head will happen on which input, and it obliterates the configurations of the tape, where the actual computation happens. In terms of Sec.~\ref{Sec-four}, a Turing machine as a model of actual computation should not be viewed as a machine, but as a process. So we call them \emph{Turing processes}\/ here. While changing well established terminology is seldom a good idea, and we may very well regret this decision, the hope is that it will be a useful reminder that we are doing something unusual: relating Turing machines with  adaptive programs, coalgebraically. The presented constructions go through in an arbitrary monoidal computer, but require spelling out a suitable representation of the integers, and some arithmetic. This was done in \cite {PavlovicD:IC12}, and can be done more directly; but for the sake of brevity, we work here with the category $\CC$ of recursively enumerable sets and computable partial functions from Sec.~\ref{sec:examples}. 
The monoidal structure and the data services are induced by the cartesian products of sets, which are, of course not categorical products in a category of partial functions\footnote{The reason is that the singleton set, which is still the tensor unit, is not a terminal object for partial functions.}. 
The monoidal category $(\CCc, \otimes, I)$ will henceforth thus be $(\CC,\otimes, \OOne)$.

Recall that Turing's definition of his machines can be recast \cite[Appendix]{PavlovicD:AMAST06}
to processes in the form
\[
Q_\tu \otimes \Sigma \ppfn \tu Q_\tu \otimes \Sigma \otimes \Theta
\]
where 
\begin{itemize}
\item $Q_\tu$ is the finite set of states, always including the \emph{final}\/ state $\checkmark \in Q_\tu$;
\item $\Sigma$ is a fixed alphabet, the same for all $\tu$, 
always including the blank symbol $\blank\in \Sigma$;
\item $\Theta = \{\tleft,\tstall,\tright\}$ are the directions in which the head can move along the tape.
\end{itemize}
%
Let us recall the execution model: how these machines and processes compute. A Mealy machine $Q_\kappa \times I \ppfn \kappa Q_\kappa \times O$ inputs a string $n \tto{\iota} I$, where $n = \{0,1,\ldots,n-1\}$ sequentially, e.g. it reads the inputs $\iota_0$, then $\iota_1$ etc, and it outputs a string $n\tto {\omega} O$ in the same order, i.e. $\omega_0$, $\omega_1$, etc.  In contrast, a Turing process in principle overwrites its inputs, and outputs the results of overwriting when it halts; therefore, in a Turing process, the input alphabet $I$ and its output alphabet $O$ must be the same, say $I = O = \Sigma$. Both the inputs, and the outputs, and the intermediary data of a Turing process are in the form $w: \ZZz\to \Sigma$, where all but finitely many values $w(z)$ must be $\blank$. So each word $w: \ZZz\to \Sigma$ is still a finite string of symbols, like in the Mealy machine model. The difference is that $w$ is written on the infinite \emph{'tape'}, here represented by the set of integers $\ZZz$,  which allows the processing \emph{'head'}\/ to move in both directions, or to stay stationary (while in a Mealy machine the head moves in the same direction at each step). We represent the position of the head by the integer $0$, and the symbol that the head reads on that position is thus denoted by $w(0)$. If the process $Q_\tu \otimes \Sigma \ppfn\tu  Q_\tu \otimes \Sigma \otimes \Theta$, which is a triple of  functions $\tu = <\tu_Q, \tu_\Sigma, \tu_\Theta>$, is defined on a given state $q\in Q_\tu$ and a given input  $\sigma = w(0)$, then it will
\begin{itemize}
\item overwrite $\sigma$ with $\sigma ' = \tu_\Sigma(q,\sigma)$,
\item transition to the state $q' = \tu_Q(q,\sigma)$, and
\item move the head to the next cell in the direction $\theta = \tu_\Theta(q,\sigma)$.
\end{itemize}
If $q = \checkmark$, then $\tu(\checkmark, \sigma) = <\checkmark, \sigma, \tstall>$, which means that the process must halt at the state $\checkmark$, if it ever reaches it.

To capture this execution model formally, we extend Turing processes over the alphabet $\Sigma$, first to processes over the set $\WW$ of $\Sigma$-words written on a tape, and then to computations with the inputs and the outputs from $\WW$
\[
\prooftree
\prooftree
\hspace{1.5em}Q_\tu \otimes \Sigma \ppfn\tu  Q_\tu \otimes \Sigma \otimes \Theta
\justifies
Q_\tu \otimes \WW \ppfn{\tuu}  Q_\tu \otimes \WW
\endprooftree
\justifies
Q_\tu \otimes \WW \ppfn{\tuuu} \WW\hspace{2.3em}
\endprooftree
\]
where 
\bear 
\WW & = & \Big\{ w:\ZZz \to \Sigma \ |\ \supp(w)\lt \infty\Big\} 
\eear
is the set of $\Sigma$-words written on a tape, and $\supp(w) = \{z\ |\ w(z) \neq \blank\}$ . The elements of $\WW$ are often also called the \emph{tape configurations}. Writing the tuples in the form $\tuu = <\tuu_Q, \tuu_{\WW}>$, define
\bear
\tuu_Q (q,w) & = & \tu_Q(q, w(0))\\
\tuu_{\WW}(q,w) & = & w' \mbox{ where }
w'(z) = \left.\begin{cases}
\widetilde w(z-1) & \mbox{ if }\tu_\Theta\left(q, w(0)\right) = \tleft\\
\widetilde w(z) & \mbox{ if }\tu_\Theta\left(q, w(0)\right) = \tstall\\
\widetilde w(z+1) & \mbox{ if }\tu_\Theta\left(q, w(0)\right) = \tright\\
\end{cases} \right\} \mbox{ and }\\
&&\hspace{4.5em} \widetilde w(z) = \left.\begin{cases}
\tu_\Sigma\left(q, w(0)\right) & \mbox{ if } z = 0\\
w(z) & \mbox{ otherwise}
\end{cases}\right\}\\[2ex]
\tuuu(q,w)& = & \begin{cases}
w & \mbox{ if } q=\checkmark\\
\tuuu\big(\tuu(q,w)\big) & \mbox{ otherwise}
\end{cases}
\eear
The execution of all Turing processes  can now be captured as a single process
\[
\Turm \otimes \WW \ppfn \tum \Turm\otimes \WW
\]
where the state space $\Turm$ is the disjoint union of the state spaces $Q_\tu$ of all Turing processes $\tu\in \TM$, i.e.
\bear
\Turm  & = & \coprod_{\tu \in \TM} Q_\tu\mbox{ where }
\TM = \{Q_\tu \otimes \Sigma \ppfn{\tu} Q_\tu \otimes \Sigma \otimes \Theta \}
\eear
so that the elements of $\Turm$ are the pairs $<\tu, q>$, where $q\in Q_\tu$, and $\Turm\otimes \WW \ppfn \tum \Turm\otimes \WW$ is the pair $\tum = <\tum_\Turm, \tum_{\WW}>$ which, when applied to $<\tu, q>\in \Turm$ and $w\in \WW$, gives:
\bear
\tum \big(<\tu, q>, w \big) & = & \big<<\tu, q'>, w'\big> \ \mbox{ where } q' = \tuu_Q(q,w) \mbox{ and } w' = \tuu_\WW(q,w)
\eear
By applying Thm.~\ref{thm:moncom-coalg} to the process $\Turm \otimes \WW \ppfn \tum \Turm\otimes \WW$, we get the following 

\begin{proposition}\label{prop:TM}
There is an adaptive program $\Turmm\in \tot \CC(\Turm, \PP)$ such that $\Turmm(\tu, q)$ executes any Turing process $\tu$ starting from the initial state $q\in Q_\tu$. This means that for every tape configuration $w \in \WW$ holds
\bear
\Run{\Turmm(\tu, q)}_\PP\, w \,  & = & \Turmm(\tu, q')\\
\Run{\Turmm(\tu, q)}_\WW w & = & w'
\eear
where $q' = \tu_Q\big(q, w(0)\big)$ is the next state of $\tu$, and $w'=\tuu_\WW\big(q, w \big)$ is the next tape configuration. (The string diagram is the same as the one in Def.~\ref{def:um}.)
\end{proposition}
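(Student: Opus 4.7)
The plan is to invoke Thm.~\ref{thm:moncom-coalg} directly on the one-step simulator $\Turm \otimes \WW \ppfn \tum \Turm \otimes \WW$ constructed in the discussion preceding the proposition. That theorem asserts that every process in a monoidal computer admits a weakly final approximation by the universal process on $\PP$; applied with $X = \Turm$ and $A = B = \WW$, it produces an $X$-adaptive program $\Turmm \in \tot \CC(\Turm, \PP)$ satisfying exactly the two equations displayed in the statement.

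The only substantive step is to check that $\tum$ is indeed a $\CC$-morphism, i.e.\ a computable partial function. For this I would (i) encode $\Turm = \coprod_{\tu \in \TM} Q_\tu$ as a recursively enumerable set by G\"odel-numbering each finite transition table $\tu$ together with a state $q \in Q_\tu$; (ii) encode $\WW$ as a recursively enumerable set by recording each finite-support word $w : \ZZz \to \Sigma$ via its restriction to $\supp(w)$; and (iii) implement $\tum$ as the obvious primitive recursive procedure that reads $\sigma = w(0)$, looks up $\tu(q,\sigma) = \langle q', \sigma', \theta\rangle$, rewrites position $0$, shifts the tape according to $\theta \in \{\tleft, \tstall, \tright\}$, and returns $\langle \langle \tu, q'\rangle, w'\rangle$. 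All three items are entirely routine.

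Once $\tum$ is a morphism of $\CC$, Thm.~\ref{thm:moncom-coalg} yields $\Turmm \in \tot \CC(\Turm, \PP)$ satisfying
\[
\Run{\Turmm(\tu, q)}_\PP w \ =\ \Turmm\bigl(\tum_\Turm(\langle \tu, q\rangle, w)\bigr) \quad\text{and}\quad \Run{\Turmm(\tu, q)}_\WW w \ =\ \tum_\WW(\langle \tu, q\rangle, w)
\]
for every $\langle \tu, q\rangle \in \Turm$ and $w \in \WW$. Substituting the defining formulas $\tum_\Turm(\langle \tu, q\rangle, w) = \langle \tu, \tu_Q(q, w(0))\rangle = \langle \tu, q'\rangle$ and $\tum_\WW(\langle \tu, q\rangle, w) = \tuu_\WW(q, w) = w'$ gives the two displayed equations of the proposition, and the string diagram is that of Def.~\ref{def:um} with $X := \Turm$ and $A := B := \WW$. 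The only obstacle is the encoding step, which is entirely standard; the genuine work of extracting the adaptive program through the Fundamental Theorem of Computability (Thm.~\ref{thm:kleene}) has already been carried out in the proof of Thm.~\ref{thm:moncom-coalg}.
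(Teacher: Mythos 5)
Your proposal is correct and follows essentially the same route as the paper, which likewise obtains $\Turmm$ by applying Thm.~\ref{thm:moncom-coalg} (with $X=\Turm$, $A=B=\WW$) to the one-step process $\tum$ and reading off the two equations from Def.~\ref{def:um}. Your explicit check that $\tum$ is a computable partial function, hence a genuine morphism of $\CC$, is a point the paper leaves implicit, but it is indeed routine and does not change the argument.
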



\begin{corollary}
The monoidal computer $\CC$ is Turing complete.
\end{corollary}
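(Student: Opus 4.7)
The plan is to promote the one-step adaptive executor $\Turmm$ provided by Prop.~\ref{prop:TM} into a full Turing machine interpreter, i.e.\ a computation in $\CC$ that, given a code of a Turing process $\tu$ with an initial state $q$ and a tape configuration $w$, iterates $\Turmm(\tu,q)$ on $w$ until the state component becomes $\checkmark$ and then returns the resulting tape. Since every partial recursive function is computed by some Turing process in the sense of Sec.~\ref{Sec-five}, producing such an interpreter inside $\CC$ immediately exhibits every partial recursive function as a morphism of $\CC$, which is the content of Turing completeness.

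First I would isolate a halting test: the state space $\Turm$ contains $\checkmark$, and since equality on the (effectively enumerable) set $\Turm$ is decidable, there is a computation $\mathit{halt}\in \CC(\Turm,\Bits)$ sending $<\tu,q>$ to $\tru$ iff $q=\checkmark$. Combined with the branching command $\iif$ that was extracted from the universal evaluator at the end of Sec.~\ref{sec:defmoncom}, this lets me define, for any $G\in \CC(\PP)$, a one-shot computation
\[
  \mathit{step}_G\ \in\ \CC(\Turm\otimes \WW,\,\WW)
\]
that on input $(<\tu,q>,w)$ returns $w$ if $\mathit{halt}(\tu,q)=\tru$, and otherwise computes $(<\tu,q'>,w')=\tum(<\tu,q>,w)$ via the universal process $\RRun$ associated with $\Turmm$ (Prop.~\ref{prop:TM}) and then applies the program $G$ to $(<\tu,q'>,w')$. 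In diagrammatic terms this is just the pair of outputs of $\RRun$ fed into an $\iif$-gate, with the recursive branch plugged into $\{G\}$.

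Second I would close the loop using the Fundamental Theorem of Computability, Thm.~\ref{thm:kleene}. Applied to the computation $g(G,(\tu,q,w)) = \mathit{step}_G(<\tu,q>,w)$, the theorem delivers a Kleene fixed program $\Gamma\in \CC(\PP)$ satisfying $\{\Gamma\}(\tu,q,w) = \mathit{step}_\Gamma(<\tu,q>,w)$. Unfolding this equation shows that $\{\Gamma\}(\tu,q,w)$ coincides with the iterate $\tuuu(q,w)$ defined in Sec.~\ref{Sec-five}: it returns $w$ if $q=\checkmark$, and otherwise recurses on one $\tum$-step. Thus $\{\Gamma\}$ is a universal Turing-machine interpreter living inside $\CC$.

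The main obstacle is precisely the construction of this recursive loop: one-step execution is packaged by Prop.~\ref{prop:TM}, but turning a one-step process into an unbounded iteration inside a categorical framework that lacks an abstraction operator requires Kleene's fixed-point construction, which is why Thm.~\ref{thm:kleene} is the essential tool. Once $\{\Gamma\}$ is in hand, Turing completeness is immediate: for every partial recursive function $f$, choose a Turing process $\tu_f$ with initial state $q_f$ that computes $f$ in the standard sense, and pre/post-compose $\{\Gamma\}(\tu_f,q_f,-)$ with the obvious tape-encoding and tape-decoding maps (which are themselves computable, hence in $\CC$) to obtain $f$ as a morphism of $\CC$.
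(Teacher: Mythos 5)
Your proposal is correct and follows essentially the same route as the paper: your $\mathit{step}_G$ is the paper's intermediary function $\tumm$, your Kleene fixed program $\Gamma$ is the paper's $\Turmm$, and $\{\Gamma\}$ is the full evaluator $\tummm$ that the paper constructs in Sec.~\ref{Sec-complexity} (the corollary itself is left as an immediate consequence of Prop.~\ref{prop:TM}, with the loop-closing via Thm.~\ref{thm:kleene} spelled out there). The only cosmetic difference is that you route the one-step transition through the universal process $\RRun$ associated with $\Turmm$, whereas the paper plugs the computable transition $\tum$ into the $\iif$-gate directly; both are legitimate.
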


\section{Complexity}\label{Sec-complexity}

\subsection{Evaluating Turing processes}
Using the process $\Turm \otimes \WW \ppfn \tum \Turm\otimes \WW$, which according to Prop.~\ref{prop:TM} executes the single step transitions of Turing processes, we would now like to define a computation $\Turm \otimes \WW \ppfn{\tummm} \WW$ that will evaluate Turing processes all the way; i.e. should execute all transitions that a process executes, and halt and deliver the output if the process halts, or diverge if the process diverges. The idea is to run something like the following pseudocode
\bea\label{eq:ttt}
\tummm\big(<\tu, q>, w\big)& = &\Big(x:= <\tu, q>;\  y:=w;\\  
&&\ \  {\tt while}\ \big(\tum_\Turm(x,y)\neq \checkmark\big) \notag\\
&&\ \ \ \ \ \ \ \ \ \ \ \Big\{x:= \tum_\Turm(x,y);\ y:= \tum_\WW(x,y) \Big\};\notag\\  
&& \ \ {\tt print}\ y\Big)\notag
\eea
%
%
We implement this program using the Fundamental Theorem of Computability. 
The function $\tummm$ is derived as a Kleene's fixed program for an intermediary function $\tumm$, lifting the derivation from Sec.~\ref{Sec-five}.
\[
\prooftree
\prooftree
\hspace{3.5em} \Turm \otimes \WW \ppfn \tum \Turm\otimes \WW
\justifies
\PP \otimes \Turm \otimes \WW 
\ppfn{\tumm} \WW 
\endprooftree
\justifies
\hspace{2em} \Turm \otimes \WW 
\ppfn{\tummm} \WW 
\endprooftree
\]
The definition of $\tumm$ lifts the definition of $\tuu$ from Sec.~\ref{Sec-five}, extended by an undetermined program $\Upsilon$
\bear
\tumm\big(\Upsilon, <\tu, q>, w \big) & = &  \begin{cases}
w & \mbox{ if } \tu_Q\left(q,w(0)\right) = \checkmark\\
\{\Upsilon\}\big(<\tu, q'>,\  w' \big) & \mbox{ otherwise}\\
\ \ \mbox{ where } q' = \tu_Q\big(q, w(0)\big) \\
\ \ \mbox{ and }\ \  w' = \tum_\WW\big(<\tu, q>, w\big)
\end{cases}
\eear
Using the $\iif$-branching from Sec.~\ref{sec:defmoncom}, this schema can be expressed in a monoidal computer, as illustrated in the diagram below. Set $\Upsilon$ to be Kleene's fixed program $\Turmm$ of $\tumm$, and define $\tummm = \{\Turmm\}$.  This construction boils down to the first one of the following string diagram equations:
\medskip

\[
\newcommand{\iiftag}{\iif = \UK}
\newcommand{\gtag}{\neg \checkmark ?}
\newcommand{\predtag}{\tum}
\newcommand{\thh}{{\color{red} \tumm}}
\newcommand{\tth}{{\color{red} \tummm}}
\newcommand{\Thh}{\scriptstyle \Turmm}
\newcommand{\uuh}{{\color{red}\Turmmm}}
\newcommand{\pee}{\scriptstyle \PP}
\newcommand{\een}{\scriptstyle \Turm}
\newcommand{\eex}{\scriptstyle \WW}
\newcommand{\utag}{ \UK}
\newcommand{\shh}{ \SK}
\newcommand{\EQLS}{=}
\def\JPicScale{.27}
\ifx\JPicScale\undefined\def\JPicScale{1}\fi
\psset{unit=\JPicScale mm}
\psset{linewidth=0.3,dotsep=1,hatchwidth=0.3,hatchsep=1.5,shadowsize=1,dimen=middle}
\psset{dotsize=0.7 2.5,dotscale=1 1,fillcolor=black}
\psset{arrowsize=1 2,arrowlength=1,arrowinset=0.25,tbarsize=0.7 5,bracketlength=0.15,rbracketlength=0.15}
\begin{pspicture}(0,0)(395,175)
\psline[linewidth=0.75](35,120)(85,120)
\psline[linewidth=0.75](85,100)(85,120)
\psline[linewidth=0.75](55,100)(85,100)
\psline[linewidth=0.75](45,110)(0,65)
\psline[linewidth=0.75](35,120)(55,100)
\rput(70,110){$\utag$}
\psline[linewidth=0.6](-15,120)(15,120)
\psline[linewidth=0.6](15,100)(15,120)
\psline[linewidth=0.6](-15,100)(15,100)
\psline[linewidth=0.6](-15,120)(-15,100)
\psline[linewidth=0.75](80,100)(80,65)
\psline[linewidth=0.6,border=2](0,95)(60,75)
\psline[linewidth=0.75](0,145)(0,120)
\psline[linewidth=0.6](-10,155)(145,155)
\psline[linewidth=0.6](145,135)(145,155)
\psline[linewidth=0.6](-10,155)(10,135)
\psline[linewidth=0.75](105,175)(105,155)
\psline[linewidth=0.6](85,45)(85,65)
\psline[linewidth=0.6](55,45)(85,45)
\psline[linewidth=0.6](55,65)(55,45)
\psline[linewidth=0.75](0,65)(0,10)
\rput(0,110){$\gtag$}
\rput(70,145){$\iiftag$}
\rput(70,55){$\predtag$}
\rput[l](2.5,15){$\pee$}
\rput[r](57.5,0){$\een$}
\rput[l](82.5,0){$\eex$}
\psline[linewidth=0.75](0,100)(0,95)
\psline[linewidth=0.75](60,100)(60,75)
\psline[linewidth=0.6](10,135)(145,135)
\psline[linewidth=0.75](60,45)(60,-5)
\psline[linewidth=0.75](70,135)(70,120)
\psline[linewidth=0.6,border=1.5](80,35)(140,55)
\psline[linewidth=0.75,border=0.8](60,75)(60,65)
\psline[linewidth=0.75,border=0.9](80,-5)(80,45)
\newrgbcolor{userFillColour}{0.2 0 0.2}
\rput{90}(60,75){\psellipse[fillcolor=userFillColour,fillstyle=solid](0,0)(2.58,-2.57)}
\newrgbcolor{userFillColour}{0.2 0 0.2}
\rput{90}(80,35){\psellipse[fillcolor=userFillColour,fillstyle=solid](0,0)(2.58,-2.57)}
\rput[r](67.5,127.5){$\eex$}
\rput[l](82.5,85){$\eex$}
\rput[r](57.5,85){$\een$}
\rput[r](-2.5,127.5){$\pee$}
\rput[r](137.5,127.5){$\eex$}
\rput[l](107.5,170){$\eex$}
\psline[linewidth=0.6](55,65)(85,65)
\psline[linewidth=0.75](140,135)(140,55)
\psline[linewidth=0.75](-10,0)(10,0)
\psline[linewidth=0.75](-10,0)(-10,20)
\psline[linewidth=0.75](-10,20)(10,0)
\newrgbcolor{userLineColour}{1 0.2 0.2}
\psline[linecolor=userLineColour](150,25)(150,160)
\newrgbcolor{userLineColour}{1 0.2 0.2}
\psline[linecolor=userLineColour](-20,160)(150,160)
\newrgbcolor{userLineColour}{1 0.2 0.2}
\psline[linecolor=userLineColour](-20,25)(-20,160)
\newrgbcolor{userLineColour}{1 0.2 0.2}
\psline[linecolor=userLineColour](-20,25)(150,25)
\rput(-5,5){$\Thh$}
\rput(137.5,35){$\thh$}
\rput(170,100){$\EQLS$}
\psline[linewidth=0.75](250,175)(250,120)
\rput[l](252.5,170){$\eex$}
\newrgbcolor{userLineColour}{1 0.2 0.2}
\psline[linecolor=userLineColour](280,65)(280,130)
\newrgbcolor{userLineColour}{1 0.2 0.2}
\psline[linecolor=userLineColour](190,130)(280,130)
\newrgbcolor{userLineColour}{1 0.2 0.2}
\psline[linecolor=userLineColour](190,65)(280,65)
\newrgbcolor{userLineColour}{1 0.2 0.2}
\psline[linecolor=userLineColour](190,65)(190,130)
\psline[linewidth=0.75](215,120)(265,120)
\psline[linewidth=0.75](265,100)(265,120)
\psline[linewidth=0.75](235,100)(265,100)
\psline[linewidth=0.75](225,110)(210,95)
\psline[linewidth=0.75](215,120)(235,100)
\rput(250,110){$\utag$}
\psline[linewidth=0.75](260,100)(260,-5)
\psline[linewidth=0.75](210,95)(210,85)
\rput[tl](220,100){$\pee$}
\psline[linewidth=0.75](240,100)(240,-5)
\rput[l](262.5,0){$\eex$}
\rput[r](237.5,0){$\een$}
\psline[linewidth=0.75](200,75)(200,95)
\psline[linewidth=0.75](200,95)(220,75)
\rput(205,80){$\Thh$}
\psline[linewidth=0.75](200,75)(220,75)
\rput(270,72.5){$\tth$}
\rput(300,100){$\EQLS$}
\psline[linewidth=0.75](380,175)(380,120)
\rput[l](382.5,170){$\eex$}
\newrgbcolor{userLineColour}{1 0.2 0.2}
\psline[linecolor=userLineColour](380,65)(380,90)
\newrgbcolor{userLineColour}{1 0.2 0.2}
\psline[linecolor=userLineColour](320,115)(355,115)
\newrgbcolor{userLineColour}{1 0.2 0.2}
\psline[linecolor=userLineColour](320,65)(380,65)
\newrgbcolor{userLineColour}{1 0.2 0.2}
\psline[linecolor=userLineColour](320,65)(320,115)
\psline[linewidth=0.75](365,120)(395,120)
\psline[linewidth=0.75](395,100)(395,120)
\psline[linewidth=0.75](385,100)(395,100)
\psline[linewidth=0.75](330,105)(350,85)
\rput(387.5,110){$\utag$}
\psline[linewidth=0.75](390,100)(390,-5)
\psline[linewidth=0.75](340,95)(340,85)
\psline[linewidth=0.75](360,85)(360,-5)
\rput[l](392.5,0){$\eex$}
\rput[r](357.5,0){$\een$}
\psline[linewidth=0.75](330,75)(330,95)
\psline[linewidth=0.75](330,95)(350,75)
\rput(335,80){$\Thh$}
\psline[linewidth=0.75](330,75)(350,75)
\psline[linewidth=0.75](352.5,105)(372.5,85)
\psline[linewidth=0.75](365,120)(385,100)
\psline[linewidth=0.75](350,85)(372.5,85)
\psline[linewidth=0.75](330,105)(352.5,105)
\newrgbcolor{userLineColour}{1 0.2 0.2}
\psline[linecolor=userLineColour](380,90)(355,115)
\psline[linewidth=0.75](361.25,96.25)(375,110)
\rput(350,95){$\shh$}
\rput(372.5,72.5){$\uuh$}
\end{pspicture}

\]
\medskip

\noindent Given $<\tu, q> \in \Turm$ and $w\in \WW$, $\tummm$ thus runs $\tu$ on $w$, starting from $q$ and halting at $\checkmark$, at which point it outputs the current $w$. If it does not reach $\checkmark$, then $\tu$ runs forever. The second equation in the above diagram proves the next proposition.

\begin{proposition}\label{prop:TM-eval}
There is an adaptive program $\Turmmm\in \tot \CC(\Turm, \PP)$ that evaluates any Turing process $\tu$ starting from a given initial state $q\in Q_\tu$. This means that for every tape configuration $w \in \WW$ holds
\bear
\big\{\Turmmm(\tu, q)\big\} w & = & \tuuu( q, w)
\eear
\end{proposition}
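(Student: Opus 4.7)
The plan is to implement the pseudocode \eqref{eq:ttt} directly inside $\CC$ by exploiting Prop.~\ref{prop:TM}, the $\iif$-branching from Sec.~\ref{sec:defmoncom}, and the Fundamental Theorem of Computability (Thm.~\ref{thm:kleene}), exactly as sketched by the two string-diagram equations displayed immediately before the statement. Most of the work has therefore already been done; what remains is to spell out why the resulting map does what is claimed.

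First, I would fix the auxiliary computation $\tumm \in \CC(\PP\otimes \Turm \otimes \WW, \WW)$ defined by
\[
\tumm(\Upsilon, <\tu,q>, w) \ =\ \iif\bigl(q\stackrel{?}{=}\checkmark,\ w,\ \{\Upsilon\}\bigl(\tum(<\tu,q>,w)\bigr)\bigr),
\]
which tests whether the current state is the final state $\checkmark$, returning $w$ if so, and otherwise invokes the ``recursive call'' $\Upsilon$ on the one-step update supplied by $\tum$. The single-step update $\tum$ is implemented using $\Turmm$ from Prop.~\ref{prop:TM}, and the branching is implemented via the universal evaluator $\UK$; this is precisely the left-hand diagram displayed above. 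Since all three ingredients (branching, $\Turmm$, and $\tum$) are computations in $\CC$, so is $\tumm$.

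Next, applying Thm.~\ref{thm:kleene} to $\tumm$ yields Kleene's fixed program $\Turmm \in \CC(\PP)$ satisfying $\{\Turmm\}(<\tu,q>,w) = \tumm(\Turmm, <\tu,q>, w)$, i.e.\
\[
\{\Turmm\}(<\tu,q>,w)\ =\ \begin{cases} w & \text{if } q = \checkmark, \\ \{\Turmm\}\bigl(\tum(<\tu,q>,w)\bigr) & \text{otherwise.} \end{cases}
\]
Setting $\tummm = \{\Turmm\}$ as a morphism $\Turm\otimes\WW \pfn \WW$, and using the partial evaluator to specialize $\Turmm$ at its state argument, define
\[
\Turmmm(\tu,q)\ =\ [\Turmm, <\tu,q>]\ \in\ \tot\CC(\Turm,\PP),
\]
so that by the defining property of $\SK$ we obtain $\{\Turmmm(\tu,q)\}w = \tummm(<\tu,q>,w)$. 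This is the content of the second string-diagram equation displayed above, and it shows that $\Turmmm$ is total and single-valued as required (because $\SK$ is a map and $<\tu,q>$ is a total input).

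Finally, I would verify the claimed equation $\{\Turmmm(\tu,q)\}w = \tuuu(q,w)$ by case analysis following the recursive definition of $\tuuu$ from Sec.~\ref{Sec-five}. If $q=\checkmark$ then both sides equal $w$ by definition. Otherwise, unfolding the fixed-point equation once gives $\{\Turmmm(\tu,q)\}w = \{\Turmm\}(\tum(<\tu,q>,w)) = \{\Turmmm(\tu,q')\}w'$, where $q' = \tuu_Q(q,w)$ and $w' = \tuu_\WW(q,w)$, matching the recursive clause $\tuuu(q,w) = \tuuu(\tuu(q,w))$. The expected main obstacle is the divergent case: when $\tuuu(q,w)$ is undefined because $\tu$ never reaches $\checkmark$, one must show that the Kleene-fixed-program unfolding also produces no defined output. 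This follows because $\CC$ is the Kleisli category of the maybe monad, so the recursive unfolding is a least fixed point in the order of definedness, and any finite unfolding is undefined on divergent trajectories; hence the partial function $\{\Turmmm(\tu,q)\}$ agrees with $\tuuu(q,-)$ both where the latter halts and where it diverges.
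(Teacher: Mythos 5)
Your construction is correct and is essentially the paper's own: the same intermediary $\tumm$ implemented with $\iif$-branching and the one-step process $\tum$, the same Kleene fixed program $\Turmm$ with $\tummm=\{\Turmm\}$, and the same partial-evaluation step $\Turmmm(\tu,q)=\SKK{\Turmm}{<\tu,q>}$. The only substantive addition is your explicit treatment of the divergent case, which the paper leaves implicit in its string-diagram argument.
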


\subsection{Counting time}
To count the steps in the executions of Turing processes, we add a counter $i\in \NNn$ to the Turing process evaluator $\tummm$. The counter gets increased by at each execution step, and thus counts them. We call $\timmm$ the computation which outputs the final count. If $\tummm$ halts, then $\timmm$ outputs the value of the counter $i$; if $\tummm$ does not halt, then $\timmm$ diverges as well. The pseudocode for $\timmm$ could thus look something like this: 
\bea\label{eq:tttt}
\timmm\big(<\tu, q>, w\big)& = &\Big(x:= <\tu, q>;\  y:=w;\ i:=0; \\  
&&\ \  {\tt while}\ \big(\tum_\Turm(x,y)\neq \checkmark\big) \notag\\
&&\hspace{3.75em} \Big\{x:= \tum_\Turm(x,y);\ y:= \tum_\WW(x,y);\ i:=i+1 \Big\};\notag\\  
&&\ \  {\tt print}\ i\Big)\notag
\eea
The implementation of $\timmm$ in a monoidal computer is similar to the implementation of $\tummm$. It follows a similar derivation pattern:
\[
\prooftree
\prooftree
\hspace{3.5em} \Turm \otimes \WW 
\ppfn \tum \Turm\otimes \WW 
\justifies
\PP \otimes \Turm \otimes \WW \otimes \NNn  
\ppfn{\timm} \NNn\hspace{2em}
\endprooftree
\justifies
\hspace{2em} \Turm \otimes \WW 
\ppfn{\timmm} \NNn 
\endprooftree
\]
where
\bear
\timm\big(\Upsilon, <\tu, q>, w, i \big) & = &  \begin{cases}
i & \mbox{ if } \tu_Q\left(q,w(0)\right) = \checkmark\\
\{\Upsilon\}\big(<\tu, q'>,\  w', \ i+1 \big) & \mbox{ otherwise}
\end{cases}
\\[2ex]
\timmm\big(<\tu, q>, w\big) &  = &  \big\{\Timm\big\} \big(<\tu, q>, w, 0\big)
\eear
where 
\bear
q' & = & \tu_Q\big(q, w(0)\big) \\
w' & = & \tum_\WW\big(<\tu, q>, w\big)
\eear

and $\Timm$ is Kleene's fixed program of $\timm$. It is easy to see, and prove, that $\timmm\big(<\tu, q>, w\big)$ halts if and only if $\tuuu(q,w)$ halts, and if it does halt, then it outputs the number of steps that $\tu$ made before halting, having started from $q$ and $w$. The string diagrams that implmenet  $\timm$, $\Timm$, $\timmm$ and $\Timmm$ are obtained from the string diagrams in the previous section by renaming $p$s to $t$s and $P$s to $T$s, and by adding string of type $\NNn$ to the right, with one operation on it, increasing the counter. This string, of course, outputs the time complexity $\tummm$. Hence


\begin{proposition}\label{prop:CX-eval}
There is an adaptive program $\Timmm\in \tot \CC(\Turm, \PP)$ that outputs the number of steps that a Turing process $\tu$ makes in any run from a given initial state $q\in Q_\tu$ to the halting state $\checkmark$. If the Turing process $\tu$ starting from $q$ diverges, then the computation $\{\Timmm(\tu,q)\}$ diverges as well. This means that, for every tape configuration $w \in \WW$ holds
\bear
\big\{\Timmm(\tu, q)\big\} w & = & \timmm\big(<\tu, q>, w\big)
\eear
\end{proposition}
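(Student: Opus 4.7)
The plan is to mimic, almost verbatim, the construction of $\Turmmm$ from Prop.~\ref{prop:TM-eval}, carrying along an extra $\NNn$-wire whose only dynamic operation is a successor gate applied once per loop iteration. Since $\CC$ contains $\NNn$ together with its successor (this is part of the arithmetic in $\CC$, already used implicitly in Sec.~\ref{sec:examples}), no new structure needs to be introduced.

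First, I would define the intermediary computation $\timm \in \CC(\PP \otimes \Turm \otimes \WW \otimes \NNn,\NNn)$ by the case split displayed in the body of the proposition, using the $\iif$-branching from Sec.~\ref{sec:defmoncom} on the predicate $\checkmark? \comp \tum_\Turm$: on the $\checkmark$-branch, project out the counter $i$; on the other branch, apply $\tum = \langle \tum_\Turm, \tum_\WW\rangle$ to $\langle \tu,q,w\rangle$, apply the successor to $i$, and then feed the triple $\langle \tu',q',w'\rangle$ together with $i+1$ into $\{\Upsilon\}$ via the universal evaluator $\UK$. This is the same string diagram as the one drawn for $\tumm$ in Sec.~\ref{Sec-complexity}, with a parallel $\NNn$-strand bearing a single successor node, and with the output wire switched from $\WW$ to $\NNn$.

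Next, I would apply the Fundamental Theorem of Computability (Thm.~\ref{thm:kleene}) to $\timm$ in the role of $g$, producing Kleene's fixed program $\Timm \in \CC(\PP)$ with $\{\Timm\}(x,y,w,i) = \timm(\Timm,x,y,w,i)$ for all inputs. Thereby $\{\Timm\}$ satisfies the intended recursion: it returns $i$ as soon as the current state is $\checkmark$, and otherwise invokes itself on the next configuration with $i{+}1$. Finally, define $\Timmm \in \tot\CC(\Turm,\PP)$ by partial evaluation, i.e.\ $\Timmm(\tu,q) = \bigl[\,\bigl[\Timm,\langle \tu,q\rangle\bigr],\, \,\bigr]$ where the second bracket fixes the initial counter value $0$ on the $\NNn$-slot (formally this is a composition of two applications of the partial evaluator $\SK$, precomposed with the total morphism injecting $0$ into the $\NNn$-wire). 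Totality and single-valuedness of $\Timmm$ follow from the fact that $\SK$ lives in $\tot\CC$ and that the injection of a constant is a map.

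The main obstacle, and essentially the only thing to check, is the computational correctness of the displayed equation $\{\Timmm(\tu,q)\}w = \timmm(\langle\tu,q\rangle,w)$. This reduces, by the definitions of $\Timmm$ and $\timmm$, to showing $\{\Timm\}(\langle\tu,q\rangle,w,0) = \timmm(\langle\tu,q\rangle,w)$, which I would prove by induction on the number of steps $n$ that $\tu$ takes from $(q,w)$ before reaching $\checkmark$: in the base case $n=0$, the predicate $\checkmark?$ fires immediately and the accumulator $0$ is returned; in the inductive step, one application of the $\iif$-branch unfolds to a recursive call on $(\langle\tu,q'\rangle,w',i+1)$, and the induction hypothesis on the shortened run gives the claimed value. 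When no such $n$ exists, both sides diverge by the same argument applied to the corresponding clause of $\tuuu$ in Sec.~\ref{Sec-five}, so Prop.~\ref{prop:TM-eval} and the construction above yield divergence in lockstep. This completes the proposal.
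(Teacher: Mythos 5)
Your proposal is correct and follows essentially the same route as the paper: the paper likewise builds $\timm$ by adding a parallel $\NNn$-strand with a single increment to the string diagram for $\tumm$, takes $\Timm$ to be Kleene's fixed program of $\timm$ via Thm.~\ref{thm:kleene}, and obtains $\Timmm$ by partially evaluating at $<\tu,q>$ with the counter initialized to $0$. The only difference is one of explicitness --- you spell out the induction on the length of the run and the handling of the initial $0$ on the $\NNn$-wire, which the paper leaves as ``easy to see, and prove.''
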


\subsection{Counting space}
So far, we used the integers $\ZZz$ as the index set for the tape configurations $w:\ZZz\to \Sigma$. The position of the head has always been $0\in \ZZz$, and whenever the head moves, the tape configuration $w$ gets updated to $w' = \tuu_{\WW}(q,w)$, where $w'(0)$ is the new position of the head, and the rest of the word $w$ is reindexed accordingly, as described in Sec.~\ref{Sec-five}. At each point of the computation $w$ thus describes the tape content \emph{relative to the current position of the head}; there is no record of the prior positions or contents.

To count the tape cells used by Turing processes,  we must make the tape itself into a first class citizen. The simplest way to do this seems to be to add a counter $m\in \ZZz$, which denotes the offset of the current position of the head with respect to the initial position. This allows us to record how far up and down the tape, how far from its original position, does the head ever travel in either direction during  the computation. To record these maximal offsets of the head, we need two more counters: let $r\in \ZZz$ be the highest value that the head offset $m$ ever takes; and let $\ell\in \ZZz$ be the lowest value that the head offset $m$ ever takes. The number of cells that the head has visited during the computation is then clearly $r-\ell$. To implement this space counting idea, we need to run a program roughly like this:
\bea\label{eq:space}
\spaccc\big(<\tu, q>, w\big)& = &\Bigg(x:= <\tu, q>;\  y:=w;\ \ell, m, r:=0; \\  
&&\ \  {\tt while}\ \big(\tum_\Turm(x,y)\neq \checkmark\big) \notag\\
&&\hspace{3.75em} \Bigg\{x:= \tum_\Turm(x,y);\ y:= \tum_\WW(x,y);\notag \\
&&\hspace{3.75em} {\tt if}\ \big(\tu_\Theta\big(q,w(0)\big) = \tleft\big) \notag \\
&&\hspace{4.75em} \Big\{{\tt if}\ (m = \ell)\{ \ell := \ell-1\};\ m:=m-1\Big\} \notag \\
&&\hspace{3.75em} {\tt if}\ \big(\tu_\Theta\big(q,w(0)\big) = \tright\big) \notag \\
&&\hspace{4.75em} \Big\{{\tt if}\ (m = r)\{ r := r+1\};\ m:=m+1\Big\} \Bigg\}\notag \\
&&\ \  {\tt print}\ r-\ell\Bigg)\notag
\eea
The derivation now becomes
\[
\prooftree
\prooftree
\hspace{3.5em} \Turm \otimes \WW 
\ppfn \tum
\Turm\otimes \WW 
\justifies
\PP \otimes \Turm \otimes \WW \otimes \ZZz^3  
\ppfn{\spacc} \NNn\hspace{2em}
\endprooftree
\justifies
\hspace{2em} \Turm \otimes \WW 
\ppfn{\spaccc} \NNn 
\endprooftree
\]
where
\bear
\spacc\big(\Upsilon, <\tu, q>, w, \ell, m, r \big) & = &  \left.\begin{cases}
r-\ell & \mbox{ if } \tu_Q\left(q,w(0)\right) = \checkmark\\
\{\Upsilon\}\big(<\tu, q'>,\  w', \ \ell',\ m',\ r' \big) & \mbox{ otherwise}\end{cases}\right\} 
\\[2ex]
\spaccc\big(<\tu, q>, w\big) &  = &  \big\{\Spacc\big\} \big(<\tu, q>, w, 0,0,0\big)
\eear
where
\bear
q' & = & \tu_Q\big(q,w(0)\big) \\
w' & = & \tum_\WW\big(<\tu, q>, w\big)\\
\ell' & = &\begin{cases}
\ell - 1 & \mbox{ if } m = \ell \mbox{ and } \tu_\Theta\big(q,w(0)\big) = \tleft \\
\ell  & \mbox{ otherwise }
\end{cases} \\[2ex]
m'  & = & \begin{cases}
m - 1 & \mbox{ if } \tu_\Theta\big(q,w(0)\big) = \tleft \\
m & \mbox{ if } \tu_\Theta\big(q,w(0)\big) = \tstall \\
m+1 & \mbox{ if } \tu_\Theta\big(q,w(0)\big) = \tright
\end{cases} \\[2ex]
r' & = & \begin{cases}
r+1 & \mbox{ if } m = r \mbox{ and } \tu_\Theta\big(q,w(0)\big) = \tright \\
r  & \mbox{ otherwise }
\end{cases}
\eear
and $\Spacc$ is Kleene's fixed program of $\spacc$. In a monoidal computer, the above constructions correspond to the following diagrams

\[
\newcommand{\iiftag}{\iif = \UK}
\newcommand{\gtag}{\neg \checkmark ?}
\newcommand{\predtag}{\tum}
\newcommand{\thh}{{\color{red} \spacc}}
\newcommand{\tth}{{\color{red} \spaccc}}
\newcommand{\Thh}{\scriptstyle \Spacc}
\newcommand{\uuh}{{\color{red}\Spaccc}}
\newcommand{\pee}{\scriptstyle \PP}
\newcommand{\een}{\scriptstyle \Turm}
\newcommand{\ees}{\scriptstyle \WW}
\newcommand{\eex}{\scriptstyle \NNn}
\newcommand{\utag}{ \UK}
\newcommand{\shh}{ \SK}
\newcommand{\zzero}{\scriptstyle 0}
\newcommand{\subtract}{-}
\newcommand{\priming}{()'}
\newcommand{\elprime}{\scriptscriptstyle \ell'}
\newcommand{\mprime}{\scriptscriptstyle m'}
\newcommand{\rprime}{\scriptscriptstyle r'}
\newcommand{\elnoprime}{\scriptscriptstyle \ell}
\newcommand{\rldif}{\scriptscriptstyle r-\ell}
\newcommand{\rnoprime}{\scriptscriptstyle r}
\newcommand{\Ints}{\scriptstyle \ZZz^3}
\newcommand{\EQLS}{=}
\def\JPicScale{.28}
\ifx\JPicScale\undefined\def\JPicScale{1}\fi
\psset{unit=\JPicScale mm}
\psset{linewidth=0.3,dotsep=1,hatchwidth=0.3,hatchsep=1.5,shadowsize=1,dimen=middle}
\psset{dotsize=0.7 2.5,dotscale=1 1,fillcolor=black}
\psset{arrowsize=1 2,arrowlength=1,arrowinset=0.25,tbarsize=0.7 5,bracketlength=0.15,rbracketlength=0.15}
\begin{pspicture}(0,0)(415,185)
\psline[linewidth=0.75](55,130)(135,130)
\psline[linewidth=0.75](135,110)(135,130)
\psline[linewidth=0.75](75,110)(135,110)
\psline[linewidth=0.75](65,120)(20,75)
\psline[linewidth=0.75](55,130)(75,110)
\rput(100,120){$\utag$}
\psline[linewidth=0.6](5,130)(35,130)
\psline[linewidth=0.6](35,110)(35,130)
\psline[linewidth=0.6](5,110)(35,110)
\psline[linewidth=0.6](5,130)(5,110)
\psline[linewidth=0.75](95,110)(95,90)
\psline[linewidth=0.6,border=2](20,105)(72.5,82.5)
\psline[linewidth=0.75](20,155)(20,130)
\psline[linewidth=0.6](10,165)(175,165)
\psline[linewidth=0.6](175,145)(175,165)
\psline[linewidth=0.6](10,165)(30,145)
\psline[linewidth=0.75](125,185)(125,165)
\psline[linewidth=0.6](85.62,55)(85.62,75)
\psline[linewidth=0.6](60,75)(60,55)
\psline[linewidth=0.75](20,75)(20,20)
\rput(20,120){$\gtag$}
\rput(100,155){$\iiftag$}
\rput(73.75,65){$\predtag$}
\rput[l](22.5,25){$\pee$}
\rput[t](80,-1.25){$\een$}
\psline[linewidth=0.75](20,110)(20,105)
\psline[linewidth=0.75](80,110)(80,90)
\psline[linewidth=0.6](30,145)(175,145)
\psline[linewidth=0.75](80,40)(80,5)
\psline[linewidth=0.75](100,145)(100,130)
\psline[linewidth=0.6,border=1.5](115,40)(155,55)
\psline[linewidth=0.75,border=0.9](95,5)(95,40)
\newrgbcolor{userFillColour}{0.2 0 0.2}
\rput{90}(72.5,82.5){\psellipse[fillcolor=userFillColour,fillstyle=solid](0,0)(2.58,-2.57)}
\newrgbcolor{userFillColour}{0.2 0 0.2}
\rput{90}(115,40){\psellipse[fillcolor=userFillColour,fillstyle=solid](0,0)(2.58,-2.57)}
\rput[l](102.5,137.5){$\eex$}
\rput[r](77.5,102.5){$\een$}
\rput[r](17.5,137.5){$\pee$}
\rput[l](127.5,180){$\eex$}
\psline[linewidth=0.6](60,75)(85,75)
\psline[linewidth=0.75](155,110)(155,87.5)
\psline[linewidth=0.75](10,10)(30,10)
\psline[linewidth=0.75](10,10)(10,30)
\psline[linewidth=0.75](10,30)(30,10)
\newrgbcolor{userLineColour}{1 0.2 0.2}
\psline[linecolor=userLineColour](180,35)(180,170)
\newrgbcolor{userLineColour}{1 0.2 0.2}
\psline[linecolor=userLineColour](0,170)(180,170)
\newrgbcolor{userLineColour}{1 0.2 0.2}
\psline[linecolor=userLineColour](0,35)(0,170)
\rput(15,15){$\Thh$}
\rput(172.5,42.5){$\thh$}
\rput(195,110){$\EQLS$}
\psline[linewidth=0.75](270,185)(270,130)
\rput[l](272.5,180){$\eex$}
\newrgbcolor{userLineColour}{1 0.2 0.2}
\psline[linecolor=userLineColour](310,60)(310,140)
\newrgbcolor{userLineColour}{1 0.2 0.2}
\psline[linecolor=userLineColour](210,140)(310,140)
\newrgbcolor{userLineColour}{1 0.2 0.2}
\psline[linecolor=userLineColour](210,60)(310,60)
\newrgbcolor{userLineColour}{1 0.2 0.2}
\psline[linecolor=userLineColour](210,60)(210,140)
\psline[linewidth=0.75](235,130)(300,130)
\psline[linewidth=0.75](300,110)(300,130)
\psline[linewidth=0.75](255,110)(300,110)
\psline[linewidth=0.75](245,120)(230,105)
\psline[linewidth=0.75](235,130)(255,110)
\rput(270,120){$\utag$}
\psline[linewidth=0.75](270,110)(270,5)
\psline[linewidth=0.75](230,105)(230,95)
\rput[tl](240,110){$\pee$}
\psline[linewidth=0.75](260,110)(260,5)
\rput[t](260,-1.25){$\een$}
\psline[linewidth=0.75](220,85)(220,105)
\psline[linewidth=0.75](220,105)(240,85)
\rput(225,90){$\Thh$}
\psline[linewidth=0.75](220,85)(240,85)
\rput(302.5,67.5){$\tth$}
\rput(325,110){$\EQLS$}
\psline[linewidth=0.75](400,185)(400,130)
\rput[l](402.5,180){$\eex$}
\newrgbcolor{userLineColour}{1 0.2 0.2}
\psline[linecolor=userLineColour](400,75)(400,100)
\newrgbcolor{userLineColour}{1 0.2 0.2}
\psline[linecolor=userLineColour](340,125)(375,125)
\newrgbcolor{userLineColour}{1 0.2 0.2}
\psline[linecolor=userLineColour](340,75)(400,75)
\newrgbcolor{userLineColour}{1 0.2 0.2}
\psline[linecolor=userLineColour](340,75)(340,125)
\psline[linewidth=0.75](385,130)(415,130)
\psline[linewidth=0.75](415,110)(415,130)
\psline[linewidth=0.75](405,110)(415,110)
\psline[linewidth=0.75](350,115)(370,95)
\rput(407.5,120){$\utag$}
\psline[linewidth=0.75](410,110)(410,5)
\psline[linewidth=0.75](360,105)(360,95)
\psline[linewidth=0.75](380,95)(380,5)
\rput[t](380,-1.25){$\een$}
\psline[linewidth=0.75](350,85)(350,105)
\psline[linewidth=0.75](350,105)(370,85)
\rput(355,90){$\Thh$}
\psline[linewidth=0.75](350,85)(370,85)
\psline[linewidth=0.75](372.5,115)(392.5,95)
\psline[linewidth=0.75](385,130)(405,110)
\psline[linewidth=0.75](370,95)(392.5,95)
\psline[linewidth=0.75](350,115)(372.5,115)
\newrgbcolor{userLineColour}{1 0.2 0.2}
\psline[linecolor=userLineColour](400,100)(375,125)
\psline[linewidth=0.75](381.25,106.25)(395,120)
\rput(370,105){$\shh$}
\rput(392.5,83.75){$\uuh$}
\psline[linewidth=0.75](170,110)(170,87.5)
\psline[linewidth=0.75](150,130)(175,130)
\psline[linewidth=0.75](175,110)(175,130)
\psline[linewidth=0.75](150,110)(175,110)
\psline[linewidth=0.75](150,110)(150,130)
\psline[linewidth=0.75](162.5,145)(162.5,140.62)
\psline[linewidth=0.75](295,110)(295,100)
\psline[linewidth=0.75](285,110)(285,100)
\psline[linewidth=0.75](290,110)(290,85)
\newrgbcolor{userFillColour}{0.2 0 0.2}
\rput{90}(290,95){\psellipse[fillcolor=userFillColour,fillstyle=solid](0,0)(2.58,-2.57)}
\psline[linewidth=0.75](295,100)(290,95)
\psline[linewidth=0.75](285,100)(290,95)
\psline[linewidth=0.6](277.5,85)(302.5,85)
\psline[linewidth=0.6](290,72.5)(302.5,85)
\psline[linewidth=0.6](277.5,85)(290,72.5)
\rput(290,80){$\zzero$}
\rput[t](95,0.62){$\ees$}
\rput[r](92.5,103.75){$\ees$}
\rput[t](270,0.62){$\ees$}
\rput[t](410,1.25){$\ees$}
\psline[linewidth=0.75](130,110)(130,87.5)
\psline[linewidth=0.6](135,55)(135,75)
\psline[linewidth=0.6](90,75)(90,55)
\psline[linewidth=0.75](115,55)(115,25)
\psline[linewidth=0.75](115,110)(115,86.88)
\psline[linewidth=0.75,border=0.9](130,25)(130,55)
\psline[linewidth=0.6](90,75)(135,75)
\psline[linewidth=0.75](122.5,110)(122.5,97.5)
\psline[linewidth=0.75,border=0.8](122.5,55)(122.5,10)
\psline[linewidth=0.6,border=1.5](130,40)(170,55)
\newrgbcolor{userFillColour}{0.2 0 0.2}
\rput{90}(130,40){\psellipse[fillcolor=userFillColour,fillstyle=solid](0,0)(2.58,-2.57)}
\psline[linewidth=0.6](90.62,55)(135,55)
\newrgbcolor{userFillColour}{0.2 0 0.2}
\rput{90}(122.5,17.5){\psellipse[fillcolor=userFillColour,fillstyle=solid](0,0)(2.58,-2.57)}
\psline[linewidth=0.75](130,25)(122.5,17.5)
\psline[linewidth=0.75](115,25)(122.5,17.5)
\psline[linewidth=0.6](110,10)(135,10)
\psline[linewidth=0.6](122.5,-2.5)(135,10)
\psline[linewidth=0.6](110,10)(122.5,-2.5)
\rput(122.5,5){$\zzero$}
\newrgbcolor{userLineColour}{1 0.2 0.2}
\psline[linecolor=userLineColour](0,35)(180,35)
\rput(162.5,120){$\subtract$}
\rput(112.5,65){$\priming$}
\newrgbcolor{userFillColour}{0.2 0 0.2}
\rput{90}(80,40){\psellipse[fillcolor=userFillColour,fillstyle=solid](0,0)(2.58,-2.57)}
\psline[linewidth=0.75](65,55)(80,40)
\psline[linewidth=0.75](110,55)(95,40)
\psline[linewidth=0.75](95,55)(80,40)
\psline[linewidth=0.75,border=1.3](80,55)(95,40)
\newrgbcolor{userFillColour}{0.2 0 0.2}
\rput{90}(95,40){\psellipse[fillcolor=userFillColour,fillstyle=solid](0,0)(2.58,-2.57)}
\psline[linewidth=0.6](60.62,55)(85.62,55)
\psline[linewidth=0.75](95,90)(80,75)
\psline[linewidth=0.75](80,90)(65,75)
\rput[l](131.88,104.38){$\Ints$}
\rput[l](296.88,104.38){$\Ints$}
\rput[l](131.88,27.5){$\Ints$}
\rput(116.88,84.38){$\elprime$}
\rput(131.88,84.38){$\rprime$}
\rput(124.38,93.75){$\mprime$}
\psline[linewidth=0.75](122.5,89.38)(122.5,75)
\rput(155.62,84.38){$\elnoprime$}
\rput(170,84.38){$\rnoprime$}
\psline[linewidth=0.75](155,80)(155,55)
\psline[linewidth=0.75](170,80)(170,55)
\psline[linewidth=0.75](130,80)(130,75)
\psline[linewidth=0.75](115,80)(115,75)
\rput(162.5,137.5){$\rldif$}
\psline[linewidth=0.75](162.5,133.75)(162.5,130.62)
\end{pspicture}

\]

\medskip
\noindent The box $()'$, which computes $\ell'$, $m'$ and $r'$ as above, is implemented by composing several branching commands, e.g. as described at the end of Sec.~\ref{sec:defmoncom}. Implementing this box is an easy but instructive exercise in programming monoidal computers. Put together, these constructions prove the following proposition.

\begin{proposition}\label{prop:SX-eval}
There is an adaptive program $\Spaccc\in \tot \CC(\Turm, \PP)$ that outputs the number of cells that a Turing process $\tu$ uses in any run from a given initial state $q\in Q_\tu$ to the halting state $\checkmark$. If the Turing process $\tu$ starting from $q$ diverges, then the computation $\{\Spaccc(\tu,q)\}$ diverges as well. This means that, for every tape configuration $w \in \WW$ holds
\bear
\big\{\Spaccc(\tu, q)\big\} w & = & \spaccc\big(<\tu, q>, w\big)
\eear
\end{proposition}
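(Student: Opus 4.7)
The plan is to follow the template already established in Propositions~\ref{prop:TM-eval} and~\ref{prop:CX-eval}: first build an intermediary function $\spacc$ parametrized by a placeholder program $\Upsilon$, then extract a Kleene fixed program via Thm.~\ref{thm:kleene}, and finally pass to the adaptive program using the coalgebraic correspondence of Thm.~\ref{thm:moncom-coalg}.

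The one genuinely new ingredient, relative to the time-counting proof, is the update box $()'$ that produces $(\ell', m', r')$ from $(\ell, m, r)$ and the head direction $\theta = \tu_\Theta(q, w(0))$. I would implement this box by a nested $\iif$-cascade built from the universal evaluator as in Sec.~\ref{sec:defmoncom}: an outer three-way branch on $\theta \in \{\tleft, \tstall, \tright\}$ (itself a composite of two boolean branches using the codings of $\Theta$ as programs), and within the $\tleft$ and $\tright$ arms a further comparison branch testing $m = \ell$ and $m = r$ respectively. The straightforward but tedious verification that this box realizes the case-split defining $\ell', m', r'$ is the kind of low-level programming exercise already flagged in the paper. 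Having $()'$, the diagram for $\spacc$ is obtained by the pattern displayed for the time case: $\tum$ computes the next state $q'$ and the next tape $w'$, the branch $\neg\checkmark ?$ tests for termination, the terminating branch outputs $r - \ell$ via subtraction on $\NNn$, and the non-terminating branch feeds $(\tu, q', w', \ell', m', r')$ to $\{\Upsilon\}$.

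Applying Thm.~\ref{thm:kleene} to the map $\spacc \in \CC(\PP \otimes \Turm \otimes \WW \otimes \ZZz^3, \NNn)$ yields a program $\Spacc \in \CC(\PP)$ with $\{\Spacc\}\big(<\tu,q>, w, \ell, m, r\big) = \spacc\big(\Spacc, <\tu,q>, w, \ell, m, r\big)$. Setting $\spaccc(<\tu,q>,w) = \{\Spacc\}(<\tu,q>, w, 0, 0, 0)$ and then applying Thm.~\ref{thm:moncom-coalg} (or equivalently Prop.~\ref{prop:us}) to the $\Turm \otimes \WW \to \NNn$-family, pre-composing with the diagonal that inserts the initial counter triple $(0,0,0)$, gives the adaptive program $\Spaccc \in \tot\CC(\Turm, \PP)$ such that $\{\Spaccc(\tu,q)\}w = \spaccc(<\tu,q>, w)$.

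What remains is the semantic justification that $\spaccc$ actually computes the number of cells visited. The clean way is to prove by induction on the number $k$ of steps $\tu$ has taken, starting from $(q,w)$, the invariant: \emph{at step $k$, the counter $m$ equals the signed offset of the head from its initial position, $\ell$ equals $\min$ and $r$ equals $\max$ of this offset over steps $0, 1, \ldots, k$.} The base case $k=0$ is immediate since all three counters are initialized to $0$ and the head starts at its initial position. The inductive step is a direct case analysis on $\tu_\Theta\big(q^{(k)}, w^{(k)}(0)\big) \in \{\tleft, \tstall, \tright\}$, using the defining equations for $\ell', m', r'$ and the fact established in Prop.~\ref{prop:TM-eval} that $(q^{(k+1)}, w^{(k+1)}) = \tuu(q^{(k)}, w^{(k)})$. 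If $\tu$ starting from $(q,w)$ halts after $N$ steps then at halting $r - \ell$ equals the cardinality of the set of offsets visited, i.e. the number of distinct cells; if $\tu$ diverges, then by the same recursion-unfolding argument as in the previous two propositions the call $\{\Spaccc(\tu,q)\}w$ also diverges. The main obstacle is purely bookkeeping: keeping the six-wire string diagram legible and checking the case analysis for the $()'$ box; the coalgebraic/fixed-program scaffolding carries over verbatim from the time-counting construction.
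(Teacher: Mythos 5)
Your proposal follows essentially the same route as the paper: the same derivation of $\spacc$ from $\tum$ with a placeholder program $\Upsilon$, the same $\iif$-cascade implementation of the $()'$ box, the same extraction of $\Spacc$ as a Kleene fixed program via Thm.~\ref{thm:kleene}, and the same passage to the adaptive program $\Spaccc$ via the partial evaluator as in Thm.~\ref{thm:moncom-coalg}. The only addition is that you spell out the step-indexed invariant on $(\ell, m, r)$ that the paper leaves implicit ("put together, these constructions prove the proposition"), which is a welcome bit of extra rigor rather than a different argument.
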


\begin{remark} There are many variations of the above definitions in the literature, and several different counting conventions. E.g., an alternative to the above definition of $\spaccc$ would be something like
\bear \spaccc'\big(<\tu, q>, w\big)  & = &  \big\{\Spacc\big\} \big(<\tu, q>, w, w_\ell,0,w_r\big)\eear
where
\bear
w_\ell & = & \min\{i\in \ZZz\ |\ w(i)\neq \blank\}\\
w_r & = & \max\{i\in \ZZz\ |\ w(i)\neq \blank\}
\eear
In contrast with $\spaccc$, where the space counting convention is that a memory cell counts as used if and only if it is ever reached by the head, the space counting convention behind $\spaccc'$ is that every computation uses at least $|w| = w_r-w_\ell$ cells, on which its initial input is written. If a Turing process halts without reading all of its input $w$, or even without reading any of it, the space used will still be $|w|$. Some textbooks adhere to the $\spaccc$-counting convention, some to the $\spaccc'$-counting convention, but many do not describe the process in enough detail to discern this difference. This is perhaps justified by the fact that the resulting complexity classes and their hierarchies are the same for all such subtly different counting conventions. E.g., the difference between $\spaccc$ and $\spaccc'$ is absorbed by the $\OOO$-notation, and only arises for computations that do not read their inputs. 
\end{remark}

\bigskip
\section{Final comments}\label{Sec-conclusion}

A bird's eye view of algebra and coalgebra in computer science suggests that algebra provides \emph{denotational}\/ semantics of computation, whereas coalgebra provides \emph{operational}\/ semantics \cite{KlinB:TCS,PavlovicD:AMAST06,Plotkin-Turi:LICS97}. Denotational semantics goes beyond the purely extensional view of computations (as maps from inputs to outputs), and models certain computational effects (such as non-termination, exceptions, non-determinism, etc.). Operational semantics goes further, and models computational operations. While computational effects are thus presented using the suitable algebraic operations in denotational semantics, computational behaviors are represented as elements of final coalgebras in operational semantics.

But although both the denotational and the operational approaches go beyond the purely \emph{extensional}\/ view, neither has supported a genuinely \emph{intensional}\/ view, envisioned by Turing and von Neumann, where programs are data. Therefore, in spite of the tremendous successes in understanding and systematizing computational structures and behaviors, categorical semantics of computation has remained largely disjoint from theories of computability and complexity.

The claim put forward in this paper is that coalgebra provides a natural categorical framework for a fully intensional categorical theory of computability and complexity. The crucial step that enables this theory leads beyond \emph{final}\/ coalgebras, that assign \emph{unique}\/ descriptions to computational behaviors of \emph{fixed}\/ types, to \emph{universal}\/ coalgebras, that assign \emph{non-unique}\/ descriptions to computations  of \emph{arbitrary}\/ types. These descriptions are what we usually call \emph{programs}. 

Our message is thus that \emph{programmability is a coalgebraic property}, just like \emph{computational behaviors are coalgebraic}. This message is formally expressed through \emph{universal processes}; it can perhaps be expressed more generally through \emph{universal coalgebras}, as families of weakly final coalgebras, all carried by the same \emph{universal state space}. Thm.~\ref{thm:moncom-coalg} spells out in the framework of monoidal computer the fact that every Turing complete programming language provides a universal coalgebra for computable functions of all types; and vice versa, every universal coalgebra induces a corresponding notion of program. Just like abstract computational behaviors of a given type are precisely the elements of a final coalgebra of that type, abstract programs are precisely the elements of a universal coalgebra. Just like final coalgebras can be used to define semantics of computational behaviors \cite{PavlovicD:AMAST06}, universal coalgebras can be used to define semantics of programs. 

From a slightly different angle, the fact that universal coalgebras characterize monoidal computers, proven in Thm.~\ref{thm:moncom-coalg}, can also be viewed as a coalgebraic characterization of computability. There are, of course, many characterizations of computability. The upshot of this one is, however, in Propositions \ref{prop:CX-eval} and \ref{prop:SX-eval}: the coalgebaic view of computability opens an alley towards complexity.
In any universe of computable functions, normal complexity measures \cite{PavlovicD:MonCom2} can be programmed coalgebraically. Combining this coalgebraic view of complexity with the algebraic view of randomized computation seems to open up a path towards a categorical model of one-way functions, and towards categorical cryptography, which has been the original goal of this project \cite{PavlovicD:NSPW11}.

\bibliographystyle{plain}
\bibliography{moncal-ref,CT,PavlovicD,ref-moncomp-1,ref-14-LICS,logic,ref-ITCS,semantics,biology}


\appendix
\section*{Appendix}
\begin{proposition*}
Let $\CCC$ be a distributive category, and consider the following structures, characterized by the couniversal properties in Table~\ref{table}:
\begin{enumerate}[a)]
\item exponents $\xxp A B$,
\item final machines $\xxpp A B$.
\end{enumerate}
Then 
\begin{itemize}
\item \textit{(a)} induces \textit{(b)} if $\CCC$ has initial algebras for $F_A(X) = A+(A\times X)$ for all $A$; 
\item \textit{(b)} induces \textit{(a)} if $\CCC$ has absolute limits, i.e. the idempotents split in it.
\end{itemize}
A distributive category with list constructors and split idempotents is thus cartesian closed if and only if it has final machines.
\end{proposition*}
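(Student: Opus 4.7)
The plan is to prove the two implications separately, each by constructing the universal object from the other.

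For (a) $\Rightarrow$ (b), I would set $\xxpp A B := \xxp{A^+}{B}$. The structure morphism $\xi: \xxp{A^+}{B} \times A \to \xxp{A^+}{B} \times B$ is defined componentwise: the output component is the exponential transpose of $\mathrm{ev} \circ (\mathrm{id} \times \iota)$, where $\iota: A \to A^+$ is the ``singleton'' part of the initial algebra, and the state-update component transposes $\mathrm{ev} \circ (\mathrm{id} \times \mathrm{cons}): \xxp{A^+}{B} \times (A \times A^+) \to B$. Given a Mealy machine $m: X \times A \to X \times B$, I equip $\xxp X B$ with an $F_A$-algebra structure whose $A$-summand transposes $\pi_B \circ m$ and whose $A \times \xxp X B$-summand sends $(a, \psi)$ to $\psi$ pre-composed with the state-update of $m$ at $a$. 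Initiality of $A^+$ yields a unique $F_A$-algebra morphism $A^+ \to \xxp X B$, and its two-fold transpose is $\ana m: X \to \xxp{A^+}{B}$. Commutativity of the Mealy square reduces, via currying, to the two defining equations of this $F_A$-morphism (read on singletons and on cons), and uniqueness follows because any $g$ making the Mealy square commute transposes to a morphism satisfying those same $F_A$-algebra equations, so must coincide with $\ana m$.

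For (b) $\Rightarrow$ (a), I would introduce the ``stateless projection'' of $\xi$, namely the Mealy machine $M := \langle \pi_1, \pi_B \circ \xi\rangle: \xxpp A B \times A \to \xxpp A B \times B$ on the carrier $\xxpp A B$ itself, and set $e := \ana M$. Both $e$ and $e \circ e$ are Mealy homomorphisms $M \to \xi$ (the second by a short direct calculation using the hom-equation for $e$ and the fact that $\pi_B \circ \xi \circ (e \times A) = \pi_B \circ \xi$), so finality forces $e \circ e = e$. Splitting the idempotent as $e = \iota \circ \rho$ with $\rho \circ \iota = \id_R$, I set $\xxp A B := R$ with evaluation $\epsilon := \pi_B \circ \xi \circ (\iota \times A): R \times A \to B$. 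Given $f: X \times A \to B$, form the stateless machine $\tilde f := \langle \pi_X, f\rangle$ and put $\hat f := \rho \circ \ana{\tilde f}$. The $\beta$-rule $\epsilon \circ (\hat f \times A) = f$ then unwinds to the output-equation of the Mealy square for $\tilde f$, once one shows $e \circ \ana{\tilde f} = \ana{\tilde f}$; and this is again a finality argument, since $e \circ \ana{\tilde f}$ is yet another Mealy homomorphism from $\tilde f$ to $\xi$.

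The main obstacle will be uniqueness in direction (b). Given $g: X \to R$ with $\epsilon \circ (g \times A) = f$, I must show that $\iota \circ g$ is itself a Mealy homomorphism from $\tilde f$ to $\xi$, whence $\iota \circ g = \ana{\tilde f}$ by finality and $g = \rho \circ \iota \circ g = \hat f$. The output part of this check is immediate from the hypothesis on $g$. The state part requires the identity $\pi_1 \circ \xi \circ (e \times A) = e \circ \pi_1$, expressing that $e$ absorbs the state-update of $\xi$; this follows from the state-component of the Mealy square for $\ana M: M \to \xi$ combined with $\pi_1 \circ M = \pi_1$. Because $\iota \circ g = e \circ \iota \circ g$ (the factorization through the splitting), precomposing yields $\pi_1 \circ \xi \circ (\iota g \times A) = \iota g \circ \pi_X$, which is exactly the state-component of the Mealy square for $\tilde f$; finality of $\xi$ then closes the argument.
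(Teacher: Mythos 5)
Your proposal is correct and follows essentially the same route as the paper's own proof: the same carrier $\xxp{A^+}{B}$ with $\xi$ built from $\iota$ and $::$, the same $F_A$-algebra on $\xxp{X}{B}$ whose catamorphism transposes to the anamorphism, and the same idempotent $e=\ana{\pi_0,\xi_1}$ whose splitting yields $\xxp{A}{B}$ with $\varepsilon=\xi_1\circ(\iota\times A)$ and $\lambda f=\rho\circ\ana{\pi_0,f}$. The one place you go beyond the paper is in spelling out the uniqueness of the transpose in direction \textit{(b)}$\Rightarrow$\textit{(a)} (showing $\iota\circ g$ is a machine homomorphism and invoking finality), a step the paper's final diagram chase leaves implicit; that addition is sound and welcome.
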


\bpr  Towards the proof of $\mathit{(a)\Longrightarrow (b)}$, suppose that $\CCC$ is a cartesian closed category with the exponents written $\xxp A B$, and with an initial $F_A$-algebra 
\bear
A+\left( A \times A^+\right) & \tto{[\iota, ::]} & A^+
\eear
The intuition is that $A^+$ is the type of nonempty lists of elements of $A$, i.e. the free semigroup generated by $A$. The initial $F_A$-algebra structure consists of the inclusion $A\iinclusion \iota A^+$, and the operation $A\times A^+ \tto{::}$ which can be thought of as prepending a symbol $a\in A$ to the list $\alpha \in A^+$, to construct the list $a :: \alpha$.

The final machine with the inputs from $A$ and the outputs in $B$ is  in the form
\bea\label{eq:finmach}
\xxpp A B \times A & \tto{<\xi_0,\xi_1>} & \xxpp A B \times B
\eea
where $\xi_0$ is derived from prepending $(::)$ and the closed structure by
\[\prooftree
\xxpp A B \times A \times A^+ \tto{\xxpp A B \times (::)} \xxpp A B \times A^+ \tto{\ \varepsilon\ } B
\justifies
\xxpp A B \times A \tto{\xi_0} \xxpp A B
\endprooftree\]
whereas $\xi_1$ is just the evaluation restricted along $\iota$
\[\prooftree
\xxpp A B \times A \tto{\xxp \iota B \times A} \xxp A B \times A \tto{\ \varepsilon\ } B
\justifies
\xxpp A B \times A \tto{\xi_1}  B
\endprooftree\]
To show that \eqref{eq:finmach} is a final machine, note first that every machine $X\times A \tto{<x_0,x_1>} X\times B$ induces an $F_A$-algebra over $\xxp X B$ by transposing
\[\prooftree
X\times \Big(A+ \big( A \times  \xxp X B \big) \Big) \tto{\widetilde \kappa} B
\justifies
A+ \big( A \times  \xxp X B \big) \tto \kappa \xxp X B
\endprooftree\]
where $\widetilde \kappa$ is the composite
\begin{multline*}
X\times \Big(A+ \big( A \times  \xxp X B \big) \Big)
\cong \Big(X\times A\Big) + \Big( X \times A  \times \xxp X B\Big)  \tto{\big(X\times A\big)+\big(x_0 \times \xxp X B \big)} \\
\to  \Big(X\times A\Big) + \Big( X   \times \xxp X B\Big) 
\cong \Big(X\times A\Big) + \Big(\xxp X B \times B\Big)  \tto{[x_1, \varepsilon]} B
\end{multline*}
The $F_A$-algebra $\kappa$ now induces the catamorphism $\cata \kappa$, which induces the anamorphism $\ana x$
$$\begin{tikzar}{}
A+\Big(A\times A^+\Big) \ar{r}{[\iota, ::]}  \ar[dashed]{d}[swap]{A+\left(A\times \cata \kappa\right)} 
\& A^+\ar[dashed]{d}{\cata \kappa}
\&\& X\times A \ar{r}{x} \ar[dashed]{d}[swap]{\ana x \times A} 
\& X\times B \ar[dashed]{d}{\ana x \times B}
\\
A+\Big(A\times \xxp X B \Big) \ar{r}[swap]{\kappa} 
\& \xxp X B
\&\& \xxpp A B \times A \ar{r}[swap]{\xi} 
\& \xxpp A B \times B
\end{tikzar}$$
by the tranposition
\[\prooftree
A^+ \tto{\cata \kappa} \xxp X B
\justifies
X \tto{\ana x} \xxpp A B
\endprooftree\]
The diagram chase showing that the commutativity and uniqueness of the catamorphism on the left induces the commutativity and uniqueness of the anamorphism on the right is lengthy but straightforward.

Towards the proof of $\mathit{(b)\Longrightarrow (a)}$, the assumption is that $\CCC$ has final machines
\bear
\xxpp A B \times A & \tto{<\xi_0, \xi_1>} & \xxpp A B \times B
\eear
so that the machine $\xxpp A B \times A \tto{<\pi_0, \xi_1>}  \xxpp A B \times B$ induces the anamorphism $\ana{\pi_0, \xi_1}$, as displayed on the following diagram.
$$\begin{tikzar}{}
\xxpp A B \times A \ar{rrr}{<\pi_0, \xi_1>} \ar[dashed]{dd}[swap]{\ana{\pi_0,\xi_1} \times A} \ar[two heads]{dr}[swap]{q\times A}
 \&\&\& \xxpp A B \times B \ar[two heads]{dl}{q\times B}
\ar[dashed]{dd}{\ana{\pi_0,\xi_1} \times B}
\\
\& \xxp A B \times A \ar[dashed]{r}{<\pi_0, \mathbf{\varepsilon}>} \ar[tail]{dl}[swap]{m\times A} \& \xxp A B\times B \ar[tail]{dr}{m\times B}\\
\xxpp A B \times A  \ar{rrr}[swap]{<\xi_0, \xi_1>} \&\&\& \xxpp A B \times B
\end{tikzar}$$
Since it is easy to see that $\ana{\pi_0,\xi_1}$ is an endomorphism on the machine  $\xxpp A B \times A \tto{<\pi_0, \xi_1>}  \xxpp A B \times B$, the uniqueness of $\ana{\pi_0,\xi_1}$ as an anamorphism implies 
\bear
\ana{\pi_0, \xi_1} \circ \ana{\pi_0, \xi_1} & = & \ana{\pi_0, \xi_1}
\eear 
Using the assumption that the idempotents in $\CCC$ split, we now define the exponent $\xxp A B$ as the splitting $\xxpp A B\eepi q \xxp A B \mmono m \xxpp A B$ of $\ana{\pi_0, \xi_1}$. The morphism $\xxp A B\times A \tto{\  \varepsilon\ } B$, induced by the splitting in the above diagram, is the counit of the adjunction $(-)\times A \dashv \xxp A -$, for the transposition operation $\lambda$ from Table~\ref{table} defined
\bear
\CCC(X\times A, B) & \tto{\ \lambda\  } & \CCC(X , \xxp A B)\\
f & \longmapsto & \lambda f = q\circ \ana{\pi_0, f}
\eear
To show that $\varepsilon \circ (\lambda f\times A) \ = \ f$ holds, chase the following diagram:
\[\begin{tikzar}{}
X\times A \ar{rrr}{<\pi_0,f>} \ar{dr}{\ana{\pi_0, f}\times A} \ar{dd}[swap]{\lambda f \times A}\& \&\& X\times B \ar{dl}[swap]{\ana{\pi_0, f}\times B} \ar{dd}{\lambda f \times B}\\
\& \xxpp A B \times A \ar{r}{<\pi_0,\xi_1>} \ar[two heads]{dl}{q\times A}\& \xxpp A B \times B \ar[two heads]{dr}[swap]{q\times B}\\
\xxp A B \times A \ar{rrr}[swap]{<\pi_0,\varepsilon>}\&\&\& \xxp A B \times B 
\end{tikzar}
\]
\epr


\end{document}